\def\E{{\mathbf{E}}}
\renewcommand{\:}{\colon\!}
\def\l{\lambda}
\def\d{\delta}
\def\t{\theta}
\def\eps{\varepsilon}
\def\ptI{p_\t} %probability of type t to fall inside I
\def\ptIl{\accentset{\sqsubset}{p_\t}} %probability of type t to fall inside I and on the left hand side of the expected median
\def\ptIr{\accentset{\sqsupset}{p_\t}} %probability of type t to fall inside I and on the right hand side of the expected median
\def\ptstrichI{p_{\t'}} %probability of type t to fall inside I
\newcommand{\nmitte}[1]{k_{#1}}
\def\ntI{\nmitte{\t}}  %# type t representatives inside I
\newcommand{\nlinks}[1]{{ ^\triangleleft \mspace{-2mu} k}_{#1}}
\def\ntlinks{\nlinks{\t}} %# type t representatives left of I
\newcommand{\nrechts}[1]{k^\triangleright_{#1}}
\def\ntrechts{\nrechts{\t}} %# type t representatives right of I
\def\pperm{p(\ordering|\nvector)}  %probability of a permutation pi cond. on number vector
\def\pvect{P(\nvector)} %probability of a vector of numbers falling in the intervals
\def\nvector{\mathbf{k}} %vector that lists # of types in the three intervals
\def\sigext{t}%{\sigma_{\mathit{ext}}}}
\def\median{\textsc{M}}
\def\ordering{\varrho}
\newcommand{\realnumbers}{\mathbb{R}}
\newcommand{\naturalnumbers}{\mathbb{N}}
\unskip\end{list}}
\newtheorem{theorem}{Theorem}
\newtheorem*{theoremiid*}{Theorem \ref{thm:main_theorem}}
\newtheorem*{theoremshock*}{Theorem \ref{thm:main_result_for_shock_case}}
\newtheorem{corollary}{Corollary}
\newtheorem{lemma}{Lemma}
\newtheorem*{lemma*}{Lemma \ref{lem:convergence_of_shocked_densities}'}
\begin{document}

\def\title #1{\begin{center}
{\Large {\sc #1}}
\end{center}}
\def\author #1{\begin{center} {#1}
\end{center}}

\setstretch{1.1}

\begin{titlepage}
\title{\sc On the Egalitarian Weights of Nations}

\author{Sascha Kurz\\ {\small Dept.\ of Mathematics, University of Bayreuth}}

\author{Nicola Maaser\\ {\small Dept.\ of Economics, University of Bremen}}

\author{Stefan Napel\\ {\small Dept.\ of Economics, University of Bayreuth;\\  Public Choice Research Centre, Turku}}

\begin{center} {\tt \today} \end{center}

\vspace{0.01cm}

\begin{center} {\bf {\sc Abstract}} \end{center}
{\small
Voters from $m$ disjoint constituencies (regions, federal states, etc.) are represented in an assembly which contains one delegate from each constituency and applies a weighted voting rule. All agents are assumed to have single-peaked preferences over an interval; each delegate's preferences match his constituency's \emph{median voter}; and the collective decision equals the assembly's \emph{Condorcet winner}.
We characterize the asymptotic behavior of the probability of a given delegate determining the outcome (i.e., being the weighted median of medians) in order to address a contentious practical question: which voting weights $w_1, \ldots, w_m$ ought to be selected if constituency sizes differ and all voters are to have \emph{a~priori equal influence} on collective decisions?
It is shown that if ideal point distributions have identical median $\median$ and are suitably continuous, the probability for a given delegate $i$'s ideal point~$\l_i$ being the Condorcet winner becomes asymptotically proportional to $i$'s \emph{voting weight} $w_i$ times $\l_i$'s density at $\median$ as $m\to \infty$.
Indirect representation of citizens is approximately egalitarian for weights proportional to the \emph{square root} of constituency sizes if all individual ideal points are i.i.d.
In contrast, weights that are \emph{linear} in~-- or, better, induce a \emph{Shapley value} linear in~-- size are egalitarian when preferences are sufficiently strongly affiliated within constituencies.
}

\vspace{0.2cm}

\begin{description}
{\small
\item[Keywords:]
two-tier voting systems; institutional design; collective choice; equal representation; Shapley value; pivot probability; quasivalues; voting power
\item[JEL codes:]
    D02; %Microeconomics: Institutions: Design, Formation, and Operations
    D63; %Welfare Economics: Equity, Justice, Inequality, and Other Normative Criteria and Measurement
    D70; %Analysis of Collective Decision-Making: General
    H77 %Public Economics - Intergovernmental Relations; Federalism; Secession
}
\end{description}

\vspace{0.8cm}

\vfill
\noindent {\footnotesize Napel gratefully acknowledges financial support from the Academy of Finland through the Public Choice Research Centre, Turku. We thank Matthew Braham, Ulrich Kamecke, Mario Larch, Ines Lindner, Vincent Merlin, Abraham Neyman and Andreas Nohn for very helpful comments on earlier drafts. We have also benefitted from discussions with Robert Hable and feedback on seminar presentations in Aachen, Amsterdam, Augsburg, Bayreuth, Bilbao, Bozen, Caen, Freiburg, Hamburg, Maastricht, Munich, Paderborn, Salerno, Tilburg, Turku, Zurich and the Economic Theory committee of the Verein f\"{u}r Socialpolitik. The usual caveat applies.}

\end{titlepage}

% send draft to:
%1a. Braham*, Lindner*, Nohn*, Merlin*[/Leppelley/Rouet/Vidu], Larch*, M\"{u}ller*
%1b. Neyman*, Beisbart*, Machover*, Kirsch*, Kaniovski*, Valenciano/Laruelle*, Luis Alvarez* - Sascha: Hable*, Rieder?
%2. Le Breton/Montero/Zapore..., Katz, Felsenthal, Barbera, Jackson, Mookherjee*, Ray*, Hans Gersbach, Kamecke, Argenton*, Puppe, Michael Manove
%3. Egger*/Leschke/Herz*, Brams, Holler, \"{O}konometer aus Salerno

%\renewcommand{\baselinestretch}{1.2}\small\normalsize
\setstretch{1.19}

%%%%%%%%%%%%%%%%%%%%%%%%%%%%%%%%%%%%%%%%%%%%%%%%%%%%%%%%%%%%%%%%%%%%%%%%%
\section{Introduction}

The voting weights of delegations to electoral assemblies with a federal or divisional structure commonly vary in the size of the represented populations, but do so very differently. The US~Electoral College, for instance, involves voter blocs that are broadly proportional to constituency size: each state has two votes (reflecting its two seats in the Senate) in addition to a number which is proportional to population (like House seats).
California and Wyoming comprise around 11.9\% and 0.2\% of the US population, respectively, and thus end up holding around 10.2\% and 0.6\% of %the %538~
votes on the US~President. In contrast, the most and least populous member states of the EU~-- Germany and Malta~-- currently have about 8.4\% and 0.8\% of votes in the {Council of the European Union} but comprise 16.3\% and 0.1\% of the EU~population; the respective mapping from population size to voting weight is, very roughly, a square root function.\footnote{A least squares power-law regression of EU~Council voting weights $w_i$ on population sizes $n_i$ results in $w_i=c\cdot n_i^{0.47}$ with $R^2\approx 0.95$. The current Council voting rules %, agreed in the Treaty of Nice,
involve two other but essentially negligible criteria, and will be changed in 2017 into a more proportional system.}  % (vs.\ $R^2=91\%$ for a linear relation).
Delegates in other collective decision-making bodies, such as the Governing Council of the European Central Bank, the Senate of Canada or German Bundesrat, and many~a university senate or council of a multi-branch NGO, have voting weights that are yet more concave functions of the number of represented constituents, or even flat.
%Article 55 of the Constitution of India requires the total number of votes which are cast in {India's Electoral College} by delegates from a given state in the presidential elections to be proportional to census-based population figures.
%An analogous exercise for the {German Bundesrat}, where representatives of the 16 German \emph{L\"{a}nder} cast bloc votes, yields $w_i=c\cdot n_i^{0.27}$. % ($R^2=89\%$).

This paper concerns \emph{two-tier voting systems} in which individuals vote on delegates or representatives in disjoint constituencies (bottom tier) and these representatives take collective decisions in a council, electoral college, or other assembly (top tier). It investigates a practically relevant, normative question: which simple function~-- possibly linear, possibly strictly concave or constant~-- \emph{should} determine the top-tier voting weights of delegates from differently sized constituencies such as US states or EU~member countries?
The considered objective is not one of
efficiently aggregating private information (see, e.g., \citeNP{Bouton/Castanheira:2012}) or of maximizing a utilitarian measure of welfare as investigated, for instance, by \citeN{Barbera/Jackson:2006}.
We focus on the egalitarian criterion of \emph{`one person, one vote'} and on providing all bottom-tier voters, at least a~priori and under very stylized ideal conditions, with \emph{equal influence} on the collective decision.
This is studied in a model where the respective {median voter} determines a constituency's top-tier policy position and the assembly's \emph{Condorcet winner} defines its collective decision. The relation of \emph{heterogeneity within each constituency} and \emph{heterogeneity across constituencies} turns out to be the critical determinant of the fair voting weight allocation. Linear and square root weighting rules emerge in particularly prominent benchmark cases. The former is advisable for electorates that are polarized along constituency lines, i.e., exhibit significant heterogeneity across constituencies; while the latter is more egalitarian when heterogeneity within each constituency is dominant.

The `one person, one vote' principle is linked to the requirement of \emph{anonymity} in social choice, that is, collective decisions shall depend only on the votes that the alternatives receive, not on whose votes these are. This general egalitarian norm is sometimes considered the minimum requirement for a decision-making procedure to be called `democratic' (e.g., \citeANP{Dahl:1956} \citeyearNP{Dahl:1956}, p.~37).
It is straightforward to implement~-- at least in theory~-- in case of a direct, single-tier voting procedure or a two-tier one with symmetric constituencies. Complications arise when a two-tier system is asymmetric. A non-trivial integer apportionment problem already needs to be resolved for those assemblies, like parliaments, in which delegates from the same constituency can split their votes and thus reflect heterogeneity among constituents (see \citeNP{Balinski/Young:2001}). And apportionment gets much more difficult when all representatives of a constituency vote as a bloc (as in the US~Electoral College, with two exceptions) % of Maine and Nebraska)
or, equivalently, when the assembly contains a single delegate from each constituency who is endowed with a voting weight that varies in population size (as in the EU~Council).

A way to adapt the principle to such situations has very vaguely been suggested by the US Supreme Court, requiring ``that each citizen have an \emph{equally effective} voice in the election'' (cf.\ \emph{Reynolds v.\ Sims}, 377 U.S.~533, 1964, p.~565; emphasis added by the authors).
Here, we operationalize equal efficacy or influence by comparing the a~priori probabilities of individual voters being decisive or pivotal for the collective decision. The corresponding joint event of (i) a given voter determining her delegate's vote and of (ii) this representative determining the assembly's collective decision is admittedly a rare one. Still, while all being close to zero, the resulting individual \emph{pivot probabilities} can vary widely across constituencies when weights are chosen arbitrarily. They should not if an institutional designer wants to fix voting weights (or bloc sizes) which are fair at least from behind the constitutional `veil of ignorance'~-- that is, when preference patterns of the day are ignored for practical or normative reasons.

The objective of equalizing the a~priori influence of each citizen on collective decisions was first formally considered by Lionel S.\ \citeANP{Penrose:1946} in 1946, when the institutional design of a successor to the League of Nations~-- today's United Nations Organization (UNO)~-- was being discussed.\footnote{Informal investigations date back to anti-federalist writings by Luther Martin, a delegate from Maryland to the Constitutional Convention in Philadelphia in 1787. See \citeN{Riker:1986}. } \citeN{Penrose:1946} showed that the most intuitive solution to the weight allocation problem, i.e., weights proportional to constituency sizes, ignores ``elementary statistics of majority voting''.
Namely, if there are only two policy alternatives (`yes' and `no') and all individual decisions are statistically \emph{independent and equiprobable} then the probability of an individual voter being pivotal in her constituency with $n_i$ voters, which for odd $n_i$ corresponds to the probability of $n_i-1$ voters being divided into `yes' and `no'-camps of same size, is approximately $\sqrt{2}/\sqrt{\pi n_i}$ (apply Stirling's formula when evaluating the binomial distribution function).
So a voter from a constituency~$\mathcal{C}_i$ which is four times larger than constituency~$\mathcal{C}_j$ a~priori faces a smaller probability of tipping the scales locally; but this probability is still half rather than only a quarter of the reference one. Consequently, top-tier voting weights should be such that the pivot probability of constituency~$\mathcal{C}_i$ at the top tier is twice~-- not four times~-- that of $\mathcal{C}_j$ in order to equalize the indirect influence of all citizens.

The corresponding practical suggestion is also known as the \emph{Penrose square root rule}. Despite criticism that it treats voting decisions too much like coin tosses, the rule has provided a benchmark for numerous applied studies which consider the distribution of voting power in the EU, US, or IMF (including Felsenthal and Machover \citeyearNP{Felsenthal/Machover:2001}, \citeyearNP{Felsenthal/Machover:2004}; \citeNP{Grofman/Feld:2005}; %\citeNP{Leech:2002:IMF} contains no PSRR ref.
\shortciteNP{Fidrmuc/Ginsburgh/Weber:2009}; \citeNP{Leech/Leech:2009}; \citeNP{Miller:2009}, \citeyearNP{Miller:2012}; \citeNP{Kirsch/Langner:2011}).
And though practitioners may not care about Penrose's reasoning itself~-- for instance, when the EU heads of state and government bargained on new, post-2017 voting rules for the Council~-- they have invoked Penrose's suggestion when it fitted their interests.\footnote{A particularly notorious case involved the then Polish president and prime minister %, the twin brothers Lech and Jaroslaw Kaczynski,
in the negotiations of the Treaty of Lisbon. See, e.g., \citeN[June 14th]{Economist:2007}.}

The special role of square root weight allocation rules has been confirmed, qualified, and disputed in a number of studies on two-tier voting systems, both empirically (see Gelman et al.\ \citeyearNP{Gelman/Katz/Tuerlinckx:2002}; \citeyearNP{Gelman/Katz/Bafumi:2004}) and theoretically.
The respective constitutional objective functions and practical conclusions of these investigations vary.
Besides the equalization of a~priori influence (%\citeNP{Good/Mayer:1975};
\citeNP{Chamberlain/Rothschild:1981}; \citeNP{Felsenthal/Machover:1998}; \citeNP{Laruelle/Valenciano:2008}; \citeNP{Kaniovski:2008}), they consider utilitarian welfare maximization (e.g., \shortciteNP{Beisbart/Bovens/Hartmann:2005}; \citeNP{Barbera/Jackson:2006}; \citeNP{Beisbart/Bovens:2007}; \citeNP{Laruelle/Valenciano:2008}; \shortciteNP{Koriyama/Laslier/Mace/Treibich:2012}) and the avoidance of majoritarian paradoxes like having a Bush majority in the 2000~Electoral College despite a Gore majority in the population at large (\citeNP{Felsenthal/Machover:1999}; \citeNP{Kirsch:2007}; \shortciteNP{Feix/Lepelley/Merlin/Rouet/Vidu:2008}). Several departures from Penrose's independence and equiprobability assumptions have been considered.
However, the related literature has focused almost entirely on \emph{binary} political decisions, with no scope for bargaining and strategic interaction.\footnote{We are aware of the following exceptions only: Laruelle and Valenciano (\citeyearNP{Laruelle/Valenciano:2008:neutral}) suggest a ``neutral'' top-tier voting rule when policy alternatives give rise to a Nash bargaining problem. \shortciteN{LeBreton/Montero/Zaporozhets:2012} investigate fair voting weights in case of the division of a transferable surplus, i.e., for a simplex of policy alternatives. Maaser and Napel (\citeyearNP{Maaser/Napel:2007}; \citeyearNP{Maaser/Napel:2012:meanvoter}; \citeyearNP{Maaser/Napel:2012}) conduct simulations for a median voter environment like the one which we will consider here.}

The existing results hence provide useful guidance and arguments in thorny debates on the `right' weight allocation only to the extent that the assemblies in question indeed decide on dichotomous exogenous proposals. But many decisions
involve several shades of grey. Members of the US~Electoral College usually have binary options, but they face the survivors from a much larger field of initial contenders, with partly endogenous final political platforms. The EU~Council more commonly decides on the level of subsidies, the scope of regulation, the scale of financial aid, etc.\ rather than on having a subsidy, regulation of an industry, or aid \emph{per se}. It seems relevant, therefore, to analyze the fair choice of voting weights (and alternative objectives such as utilitarian welfare) for somewhat richer than binary $\{0,1\}$-policy spaces, too.

This paper considers the equalization of pivot probabilities for a one-dimensional convex policy space, i.e., for choices from a real \emph{interval}. We assume single-peaked preferences with random ideal points for all voters, perfect congruence between preferences of a constituency's delegate and its \emph{median voter}, and collective decisions which correspond to the \emph{Condorcet winner} or the \emph{core} of the game defined by preferences and weights of the delegates. The latter can be seen as the equilibrium outcome of strategic bargaining (see, e.g., \citeNP{Banks/Duggan:2000}).

In this model, the collective choice equals the weighted median among agents whose ideal points themselves are medians from disjoint samples. This is a very stylized representation of democratic decision making but yet richer than the binary model \`{a}~la Penrose. The former nests the latter in case that ideal points have a discrete two-point distribution. In case of less trivial distributions, little can analytically be said about the order statistics of medians from \emph{differently} sized samples; and next to nothing has so far been known about the combinatorial function therefrom which corresponds to the respective \emph{weighted median} of non-identically distributed random variables.

We here derive a general analytical result on the ratio of two delegates' pivot probabilities in an infinite increasing chain of collective decision bodies (Theorem~\ref{thm:main_theorem}). Each delegate~$i$ is characterized by his voting weight $w_i$ and single-peaked preferences with a random ideal point $\l_i$ that has the probability density function $f_i$. In line with the veil of ignorance perspective of constitutional design, this random variable is a~priori assumed to have the same theoretical median for all delegates~-- say, $\median=0$. It is shown then that, under suitable regularity conditions, a delegate's probability of finding his ideal point coincide with the corresponding voting game's Condorcet winner is \emph{asymptotically proportional to the probability density $f_i(\median)$ at the theoretical median times his assigned voting weight $w_i$}.

This main analytical result has several practical corollaries for two-tier voting systems. In particular, if all individual voters are~-- behind the constitutional veil of ignorance~-- conceived of as having ideal points that are \emph{independent and identically distributed (i.i.d.)}, then the sample median from a constituency $\mathcal{C}_i$ with $n_i$ members has an asymptotically normal distribution whose standard deviation is inversely proportional to the \emph{square root} of $n_i$. The probability density of representative~$i$'s ideal point at the theoretical median $\median$ is hence proportional to $\sqrt{n_i}$.
It follows that voting weights $w_i$ that are chosen to be proportional to $\sqrt{n_i}$ for all constituencies $\mathcal{C}_i$ render the top-tier pivot probabilities of all representatives proportional to their population sizes; this approximately equalizes the expected influence or efficacy of the vote across the population.
How close one gets to full equalization depends on the considered number $m$ of constituencies as well as the population partition at hand.\footnote{The approximation can be improved if one bases the weight choice on the induced \emph{Shapley value} or, with comparable effects, the \emph{Penrose-Banzhaf power index}. See \citeN{Dubey/Shapley:1979}, \citeN{Felsenthal/Machover:1998} or \citeN{Laruelle/Valenciano:2008} for good overviews on these and other power measures. \label{fn:SSI_and_PBI_introductions}}

The optimality of a square root allocation of voting weights is, however, restricted to the case of individual ideal points being i.i.d.\ and the use of a 50\%-majority threshold. Assuming that voters' ideal points are subject to identical random shocks within constituencies~-- which introduces positive correlation among members of the same constituency~-- implies greater variance of the respective sample medians.
The latter's distributions become more and more similar across constituencies if the shock distribution $H$ is identical for all constituencies $\mathcal{C}_i$ and its variance $\sigma_H^2$ increases. If this measure $\sigma_H^2$ of \emph{heterogeneity across constituencies} is sufficiently great relative to the \emph{heterogeneity within each constituency}, which is captured by the variance $\sigma_G^2$ of the %(again independent and identical)
(conditional) ideal point distribution $G$ under a zero shock, then an approximately \emph{linear} weight allocation becomes optimal.

That a linear weighting rule is optimal in this case follows as a corollary from Theorem~\ref{thm:main_theorem} when top-tier decisions are taken by simple majority. But, as made precise by  Theorem~\ref{thm:main_result_for_shock_case}, the finding can be extended to supermajority requirements.
In particular, one can approximate the pivot probabilities of top-tier delegates by the \emph{Shapley value} of the respective weighted voting game when $\sigma_H^2/\sigma_G^2$ is sufficiently large. Even if the number of constituencies $m$ is relatively small, one can hence achieve equal representation by finding voting weights such that the resulting Shapley value is proportional to population sizes, or as close to being proportional as is feasible.

The remainder of the paper is organized as follows. In Section~\ref{sec:Model}, we spell out our model of two-tier decision making and the institutional design problem. Our main result for simple majority rule and $m\to \infty$, as well as its corollary in case that individual ideal points are i.i.d.\ are presented in Section~\ref{sec:LimitAnalysis}.
We then explore the effect of adding heterogeneity across constituencies to that within, and study asymptotic behavior with respect to the `across'-kind for fixed $m$ in Section~\ref{sec:Heterogeneity}. We conclude in Section~\ref{sec:Conclusion} and provide proofs of the two theorems in an appendix.

%%%%%%%%%%%%%%%%%%%%%%%%%%%%%%%%%%%%%%%%%%%%%%%%%%%%%%%%%%%%%%%%%%%5
\section{Model and Design Problem}\label{sec:Model}

%\marginpar{\r ersten Satz streichen? ich mag ihn, Mario nicht \b}\r The model itself is relatively simple. \b
We consider partitions $\mathfrak{C}^{m} = \{\mathcal{C}_1,\ldots,\mathcal{C}_m\}$ of a large number $n$ of voters into $m<n$ disjoint \emph{constituencies} with $n_i=|\mathcal{C}_i|>0$ members each. The preferences of any voter $l\in \{1, \ldots, n\}=\bigcup_i \mathcal{C}_i$ are assumed to be single-peaked %\footnote{As noted by \citeN[p.~908]{Riker:1961} this imposes some degree of homogeneity, as all individuals at least agree that an optimal choice for the issue at stake exists.}
with \emph{ideal point} $\nu^l$ in a convex one-dimensional \emph{policy space} $X\subseteq \realnumbers$, i.e., in a finite or infinite real interval.
These ideal points are conceived of as realizations of random variables with a~priori identical, absolutely continuous distributions. A given profile $(\nu^1, \ldots, \nu^n)$ of ideal points is interpreted as reflecting voter preferences in an abstract left--right spectrum or on a specific one-dimensional policy issue (a transfer, an emission standard, a capital requirement, etc.).

A collective decision $x^*\in X$ on the issue at hand is taken by an assembly or council of representatives $\mathcal{R}^m$ which consists of one representative from each constituency. Without committing to any particular procedure for internal preference aggregation, political competition, lobbying or bargaining, it will be assumed that preferences of $\mathcal{C}_i$'s representative coincide with those of its respective median voter, i.e., representative~$i$ has the random ideal point
\begin{equation}\label{eq:l_i_is_C_is_median}
\l_i\equiv \textnormal{median\,}\{\nu^l\colon l\in \mathcal{C}_i\}.
\end{equation}
For simplicity we take all $n_i$ as odd numbers,\footnote{For an even number $n_i$, one could let each of the two middlemost ideal points in $\mathcal{C}_i$ define the representative~$i$'s preferences with equal probability. Or one works with the usual definition of the median, i.e., their arithmetic mean, and focuses on the probability of event $\{\partial x^*/\partial \nu^i>0\}$ rather than the~-- no longer equivalent~-- event $\{x^*=\nu^i\}$ in what follows. \citeN{Napel/Widgren:2004} discuss in detail how \emph{influence} in voting procedures can be quantified by outcome sensitivity measures like $\partial x^*/\partial \nu^i$. \label{fn:even_n_i}}
and leave aside agency problems %related to asymmetric information, imperfectly competitive political markets
or other reasons for why the preferences of a constituency's representative might not be congruent or at least sensitive to its median voter.\footnote{See, e.g., \citeN{Gerber/Lewis:2004} for empirical evidence on how the median voter and partisan pressures jointly explain legislator preferences, and for a short discussion of the related theoretical literature.
It is important to note that Theorems~\ref{thm:main_theorem} and \ref{thm:main_result_for_shock_case} will \emph{not} require (\ref{eq:l_i_is_C_is_median}) to hold~-- they only assume $\l_i$'s density to have certain properties.}

In the top-tier assembly $\mathcal{R}^m$, constituency $\mathcal{C}_i$ has voting weight $w_i\ge 0$. Any coalition $S\subseteq \{1, \ldots, m\}$ of representatives which achieves a combined weight
$\sum_{j\in S} w_j$ above
\begin{equation}\label{eq:qm_defined}
q^{m}\equiv 0.5 \sum_{j=1}^m w_j,
\end{equation}
i.e.,\ which has a \emph{simple majority} of total weight, is winning and can pass proposals to implement some policy $x\in X$.

Let $\cdot \colon\! m$ be the random permutation of $\{1, \ldots, m\}$ that makes $\l_{k\: m}$ the $k$-th leftmost ideal point among the representatives for any realization of $\l_1, \ldots, \l_m$ (that is, $\l_{k\: m}$ is the $k$-th order statistic). We will disregard the zero probability events of two or more constituencies having identical ideal points and define the random variable $P$ by
\begin{equation}\label{eq:P_defined}
P\equiv \min \Big\{j\in\{1,\ldots,m\}\colon \sum_{k=1}^j w_{k:m} > q^{m}\Big\}.
\end{equation}
Representative $P\:m$'s ideal point, $\lambda_{P\:m}$, cannot be beaten by any alternative $x\in X$ in a pairwise vote, i.e., it is in the \emph{core} of the voting game defined by ideal points $\l_1, \ldots, \l_m$, weights $w_1,\ldots, w_m$ and quota $q^{m}$. We assume that the policy $x^*$ agreed by $\mathcal{R}^{m}$ lies in the core. So $x^*$ must equal $\lambda_{P\:m}$ whenever the core is single-valued;
then $\lambda_{P\:m}$ actually beats every other alternative $x\in X$ and is the so-called \emph{Condorcet winner} in $\mathcal{R}^m$.
In order to avoid inessential case distinctions, we assume that $\mathcal{R}^{m}$ agrees on $\lambda_{P\:m}$ also in the knife-edge case of the entire interval $[\lambda_{P-1\:m}, \lambda_{P\:m}]$ being majority-undominated, i.e.,\footnote{A sufficient condition for the core to be single-valued is that the vector of weights satisfies $\sum_{j\in S} w_j\neq  q^{m}$ for each  $S\subseteq \{1,\ldots,m\}$. In the non-generic cases where this is violated, tie-breaking assumptions analogous to fn.~\ref{fn:even_n_i} can be made. Note that no constituency's median voter will have an incentive to `choose' a representative whose preferences differ from her own ones, that is, to misrepresent preferences, if $x^*$ is determined by (\ref{eq:outcome_equals_pivotal_point}) (cf.\ \citeNP{Moulin:1980}; \citeNP{Nehring/Puppe:2007}). }
\begin{equation}\label{eq:outcome_equals_pivotal_point}
x^*\equiv\l_{P\:m}.
\end{equation}
Representative $P\:m$ will, therefore, generally be referred to as the \emph{pivotal representative} or the \emph{weighted median} of $\mathcal{R}^{m}$. \citeN{Banks/Duggan:2000} and \citeN{Cho/Duggan:2009} provide equilibrium analysis of non-cooperative legislative bargaining which supports policy outcomes inside or close to the core.

The event $\{x^*=\nu^l\}$ of voter~$l$'s ideal point coinciding with the collective decision almost surely entails that sufficiently small perturbations or idiosyncratic shifts of $\nu^l$ translate into identical shifts of $x^*$, so that $\partial x^*/\partial \nu^l>0$. Voter~$l$ can then meaningfully be said to \emph{influence}, be \emph{decisive} or \emph{pivotal} for, or even to \emph{determine} the collective decision. This event has probability
\begin{equation}
p^l\equiv\Pr(x^*=\nu^l),
\end{equation}
which depends on the joint distribution of $(\nu^1, \ldots, \nu^n)$ and the voting weights $w_1, \ldots, w_m$ that have been selected for $\mathcal{R}^m$. Even though $p^l$ will be very small given that the set of voters $\{1, \ldots, n\}$ is assumed to be large, it would constitute a violation of the `one person, one vote' principle if $p^l/p^k$ differed substantially from unity for any $l,k\in\{1, \ldots, n\}$.

We will assume throughout our analysis that all voter ideal points are a~priori \emph{identically distributed}, in line with adopting a `veil of ignorance'-perspective when one analyzes the efficacy of individual votes or a~priori influence of voters. Moreover, it is assumed that ideal points are mutually \emph{independent across constituencies}. We do, however, allow for a specific form of ideal points being \emph{dependent within each constituency}.
Namely, we conceive of the ideal point $\nu^l$ of any voter $l\in \mathcal{C}_i$ as the sum
\begin{equation}\label{eq:nu_decomposed}
\nu^l=\mu_i+\epsilon^l
\end{equation}
of a constituency-specific random variable $\mu_i$, which has distribution $H$, and a voter-specific random variable $\epsilon^l$ with absolutely continuous distribution $G$.
The voter-specific variables $\epsilon^1, \ldots, \epsilon^n$ and constituency shocks $\mu_1, \ldots, \mu_m$ are all taken to be mutually independent.
If distribution $H$ of $\mu_i$ is non-degenerate, it reflects a common attitude component of preferences within the disjoint constituencies.
$G$ and $H$ are the same for all voters $l\in \{1, \ldots, n\}$ and constituencies $\mathcal{C}_i\in \mathfrak{C}^m$. This ensures that indeed all ideal points $\nu^1, \ldots, \nu^n$ are identically distributed. $G$'s variance $\sigma^2_G$ can be interpreted as a measure of \emph{heterogeneity within each constituency}, reflecting the natural variation of political preferences.
Similarly, $\sigma^2_H$ is a measure of \emph{heterogeneity across constituencies}: even though it is assumed that opinions in all constituencies vary between left--right, religious--secular, etc.\ in a similar manner, the locations of the respective ranges of opinion can differ between constituencies.
The corresponding correlation coefficient for two voters $l,k \in \mathcal{C}_i$ from the same constituency is $\sigma^2_H/(\sigma^2_H+\sigma^2_G)$. The case in which $H$ is degenerate with $\sigma^2_H=0$ involves heterogeneity only within constituencies; the latter differ in size but voter ideal points $v^l$ are independent and identically distributed across the entire population. We regard this as a particularly important benchmark and will refer to it as the \emph{i.i.d.~case}.

With this notation, we can now state our objective of operationalizing the `one person, one vote' principle somewhat more formally. Namely, given a partition $\mathfrak{C}^{m} = \{\mathcal{C}_1,\ldots,\mathcal{C}_m\}$ of $n$ voters into constituencies and distributions $G$ and $H$ which describe heterogeneity of individual preferences within and across constituencies, we would like to find voting weights $w_1, \ldots, w_m$ such that each voter a~priori has an equal chance of determining the collective decision $x^*\in X$~-- that is, such that
\begin{equation}\label{eq:OPOV_condition}
    \frac{p^l}{p^k}=1\, \textnormal{\ \emph{for all} } l, k\in \{1, \ldots, n\}.
\end{equation}

For most combinations of $\mathfrak{C}^{m}$, $G$, and $H$, condition  (\ref{eq:OPOV_condition}) cannot be satisfied by \emph{any} weight vector $(w_1, \ldots, w_m)$. This is due to the discrete nature of weighted voting.\footnote{For instance, there are only 117 structurally different weighted voting games with $m=5$ constituencies even if all majority thresholds between 0\% and 100\% are permitted. This number (related to \emph{Dedekind's problem} in discrete mathematics) grows very fast, but the set of distinct feasible influence distributions remains finite.} So the problem would need to be formulated more precisely as that of minimizing a specific notion of distance between the probability vector $(p^1, \ldots, p^n)$ induced by $w_1, \ldots, w_m$ and $(1/n, \ldots, 1/n)\in \realnumbers^n$.

The actual concern, however, is \emph{not} with finding the respective optimal solution to such a (non-trivial) discrete minimization problem for a particular partition $\mathfrak{C}^{m}$ and specific distributions $G$ and $H$.
Rather, our objective is to find a \emph{simple function} which maps $n_1, \ldots, n_m$ to weights $w_1, \ldots, w_m$ that induce ${p^l}/{p^k}\approx 1$ for all $l$ and $k$, that is, which approximately satisfy the `one person, one vote' criterion, for arbitrary non-pathological partitions $\mathfrak{C}^{m}$.\footnote{By pathological partitions we, for instance, mean ones where constituency sizes $n_i$ increase exponentially in $i$.~-- There is no need to specify exactly which functions are ``simple'' enough. \emph{Power laws}, i.e., choosing $w_i=\beta n_i^\alpha$ for some $\alpha,\beta\in \realnumbers$, certainly qualify and turn out to constitute a sufficiently rich class of mappings.}
Preferably, qualitative information on heterogeneity within and across constituencies should suffice in guiding possible design recommendations.

The stated assumptions imply that, when considering any given realization of $\mu_i$, ideal points $\nu^l$ and $\nu^k$ are conditionally independent if $l, k\in \mathcal{C}_i$ for some $i$. They are in any case identically distributed. In particular, $p^l=p^k$ holds for $l, k\in \mathcal{C}_i$ irrespective of which $G$, $H$, and voting weights $w_1, \ldots, w_m$ are considered, and it must be the case that if $l\in \mathcal{C}_i$ then
\begin{equation}\label{eq:within_constituency_pivot_probabilities}
    \Pr(\nu^l=\l_i)=\frac{1}{n_i}.
\end{equation}
So, an individual voter's probability to be her constituency's median and to determine $\l_i$ is \emph{inversely proportional} to constituency $\mathcal{C}_i$'s population size.

The events $\{\nu^l=\l_i\}$ and $\{x^*=\l_i\}$ are independent given our statistical assumptions. (Note that the first event only entails information about the identity of $\mathcal{C}_i$'s median, not its location.) It follows that the probability $p^l$ for an individual voter~$l\in\mathcal{C}_i$ influencing the collective decision $x^*$ is $1/n_i$ times the probability of event $\{x^*=\l_i\}$ or, equivalently, of $\{P\:m = i\}$.
Letting
\begin{equation}
\pi_i(\mathcal{R}^m)\equiv \Pr(P\:m = i)
\end{equation}
denote the probability of constituency $\mathcal{C}_i$'s representative being pivotal in $\mathcal{R}^m$ (that is, of $\l_i$ being the respective Condorcet winner in case of generic weights), our institutional design objective hence consists of solving the following
\begin{quotation}
\noindent \emph{Problem of Equal Representation}:\\
\emph{Find a simple mapping from constituency sizes $n_1, \ldots, n_m$ to voting weights $w_1, \ldots, w_m$ for the representatives in $\mathcal{R}^m$ such that}
\begin{equation}\label{eq:OPOV_condition_restated}
    \frac{\pi_i(\mathcal{R}^m)}{\pi_j(\mathcal{R}^m)}\approx \frac{n_i}{n_j} \textnormal{ \emph{ for all} } i, j\in \{1, \ldots, m\}.
\end{equation}
\end{quotation}
One might conjecture that, if $m$ is large enough and the weight distribution is not overly skewed, voting weight $w_i$ should translate linearly into representative~$i$'s influence $\pi_i(\mathcal{R}^m)$.\footnote{
Asymptotic proportionality between weights and voting power has first been investigated  by \citeN{Penrose:1952} in the context of binary alternatives. Related formal results by \citeN{Neyman:1982}, \citeN{Lindner/Machover:2004}, \shortciteN{Snyder/Ting/Ansolabehere:2005}, \citeN{Jelnov/Tauman:2012} and Theorem~\ref{thm:main_theorem} below suppose that the \emph{relative} weight of any given voter becomes negligible as more and more voters are added. The case when relative weights of a few large voters fail to vanish as $m\to \infty$~-- giving rise to \emph{oceanic games} and typically non-proportionality~-- has been treated by \citeN{Shapiro/Shapley:1978} and \citeN{Dubey/Shapley:1979}. The limit behavior of pivot probabilities for \emph{uniform} weights (as at the bottom tier) has been studied in more complex models than Penrose's by \citeN{Chamberlain/Rothschild:1981}, \citeN{Myerson:2000}, \shortciteN{Gelman/Katz/Tuerlinckx:2002}, and \citeN{Kaniovski:2008}.}
But the distribution of the respective ideal points $\l_i$ will certainly play a role, too, and so the solution to this problem will depend on how heterogeneity of individual preferences within and across constituencies relate.

Note that, \emph{if} the representatives' ideal points $\l_1, \ldots, \l_m$ were not only mutually independent but also had identical distributions $F_i=F_j$ for all $i,j\in \{1, \ldots, m\}$, then all orderings of $\l_1, \ldots, \l_m$ would a~priori be equally likely.
In this case, $\pi_i(\mathcal{R}^m)$ would coincide with $i$'s \emph{Shapley value} $\phi_i(v)$, where $v$ is the characteristic function of the $m$-player cooperative game in which the worth $v(S)$ of a coalition $S\subseteq \{1, \ldots, m\}$ is 1 if $\sum_{j\in S} w_j>q^m$ and 0 otherwise,
and\footnote{See \citeN{Shapley:1953}. For so-called \emph{simple games}~-- in which $v(S)\in \{0,1\}$ for all $S\subseteq N$, $v(\varnothing)=0$, $v(N)=1$, and $v(S)=1\Rightarrow v(T)=1$ if $S\subseteq T$~-- the Shapley value is also referred to as the \emph{Shapley-Shubik power index}, following the first suggestion of using $\phi$ in order to evaluate power in voting bodies by \citeN{Shapley/Shubik:1954}. We write $v=[q^m;w_1, \ldots, w_m]$ if $v$ is defined by the weighted voting rule $[q^m;w_1, \ldots, w_m]$.}
\begin{equation}
\phi_i(v)\equiv \sum_{S\subseteq \{1, \ldots, m\}\smallsetminus \{i\}} \frac{|S|!\cdot(m-|S|-1)!}{m!}[v(S\cup \{i\})-v(S)].
\end{equation}

The way to solve the problem of equal representation would then simply be to search for a weighted voting game which induces a Shapley value proportional to $(n_1,\ldots,n_m)$. %~-- that is, to solve the so-called \emph{inverse problem} for the Shapley value.
Unfortunately, if we assume that distribution $G$, which generates the private component $\epsilon^l$ in individual ideal points $\nu^l$, is non-degenerate then (\ref{eq:l_i_is_C_is_median}) implies that $F_i=F_j$ if and only if $n_i=n_j$.
The case in which this holds for all $i,j\in \{1, \ldots, m\}$ is precisely the one in which equal representation is trivial, i.e., achieved by giving all representatives  identical weights because $n_1=\ldots=n_m$.\footnote{We remark that re-partitioning the population into constituencies of equal size~-- i.e., appropriate redistricting~-- is, of course, a trivial possibility for altogether evading the considered problem. Our analysis is concerned with those cases where historical, geographical, cultural, and other reasons exogenously have defined a partition $\mathfrak{C}^m$ which cannot easily be changed.
See \citeN{Coate/Knight:2007} on socially optimal districting and \citeN{Gul/Pesendorfer:2010} on strategic issues which arise for redistricting. We also disregard another relevant strategic feature of two-tier voting: incentives to allocate limited campaign resources to the constituencies. We refer the reader to %the closely integrated analytical and empirical treatment by
\citeN{Stromberg:2008}.
}
In particular, $F_i$ second-order stochastically dominates distribution $F_j$ (or $F_j$ is a mean-preserving spread of $F_i$) if $n_i>n_j$ because the sample median of $n_i$ independent draws from $G$ has smaller variance than that of just $n_j<n_i$ draws; and the respective draw from $H$ adds identical variance to both $\l_i$ and $\l_j$.

In the i.i.d.\ benchmark case, in which $\sigma^2_H=0$ and the only acknowledged differences between two voters from distinct constituencies are the numbers of their fellow constituents, one can be more specific than stochastic dominance. Namely, a standard result about the sample median of i.i.d.\ random variables is:
\begin{lemma}\label{lem:Arnold_et_al}
Let $X_1, \ldots, X_s$ be i.i.d.\ random variables with median $\median$ and a density $f$ that is continuous at $\median$ with $f(\median)>0$. Then random variable $Y=\textnormal{median\,}\{X_1, \ldots, X_s\}$ is asymptotically $(\median,\sigma^2)$-normally-distributed with
\begin{equation}\label{eq:variance_of_median}
{\sigma^2}=\frac{1}{s\,[2f(\median)]^2}
\end{equation}
i.e., the re-scaled sample median $2f(\median)\sqrt{s}[Y-\median]$ of $X_1, \ldots, X_s$ converges in distribution to $\mathbf{N}(0,1)$ as $s\to \infty$.

\end{lemma}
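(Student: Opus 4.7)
The plan is to proceed by the classical route of representing the sample median as an order statistic, converting the event $\{Y\le t\}$ into a Binomial tail probability, and applying the central limit theorem. Without loss of generality take $s$ to be odd, $s=2r+1$, so that $Y$ coincides with the $(r+1)$-th order statistic $X_{(r+1)}$; the case of even $s$ is handled by an analogous (slightly longer) argument treating the arithmetic mean of the two middlemost order statistics.

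Fix $x\in\realnumbers$ and set $t_s\equiv \median+\frac{x}{2f(\median)\sqrt{s}}$. Then $\{Y\le t_s\}$ holds if and only if at least $r+1$ of the $X_i$ fall in $(-\infty,t_s]$, so the count $N_s\equiv\#\{i\colon X_i\le t_s\}$ is Binomial with parameters $s$ and $p_s\equiv F(t_s)$, where $F$ is the common distribution function. Writing $F(t_s)-F(\median)=\int_{\median}^{t_s} f(u)\,du$ and using continuity of $f$ at $\median$ together with $F(\median)=\tfrac12$, one obtains the first-order expansion $p_s=\tfrac12+\tfrac{x}{2\sqrt{s}}+o(s^{-1/2})$.

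Next I would normalize: the standardized Binomial $(N_s-sp_s)/\sqrt{sp_s(1-p_s)}$ converges to the standard normal by de Moivre--Laplace (equivalently, Lindeberg--L\'evy applied to the i.i.d.\ indicator summands, whose per-term variance tends to $1/4>0$). The event $\{Y\le t_s\}=\{N_s\ge r+1\}$ rewrites as
\begin{equation*}
\frac{N_s-sp_s}{\sqrt{sp_s(1-p_s)}}\;\ge\;\frac{r+1-sp_s}{\sqrt{sp_s(1-p_s)}}.
\end{equation*}
Substituting $r+1=(s+1)/2$ and the expansion of $p_s$, the right-hand side becomes $\big(\tfrac12-\tfrac{x\sqrt{s}}{2}+o(\sqrt{s})\big)\big/\big(\tfrac{\sqrt{s}}{2}+O(1)\big)\to -x$. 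Hence $\Pr(Y\le t_s)\to 1-\Phi(-x)=\Phi(x)$, which is precisely the claim that $2f(\median)\sqrt{s}\,[Y-\median]$ converges in distribution to $\mathbf{N}(0,1)$.

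The main technical subtlety is not the CLT step itself (which is routine) but ensuring that the linear expansion of $F$ around $\median$ is valid under the weak hypothesis that $f$ is merely \emph{continuous} at $\median$ rather than, say, differentiable. I would handle this via the integral representation and an $\varepsilon$-$\delta$ argument showing that for $u$ in a sufficiently small neighborhood of $\median$ one has $|f(u)-f(\median)|<\varepsilon$, so that the remainder $\int_{\median}^{t_s}(f(u)-f(\median))\,du$ is $o(t_s-\median)=o(s^{-1/2})$. The assumption $f(\median)>0$ is essential precisely because it pins down the $s^{-1/2}$ rescaling as the correct one, yielding a non-degenerate limit.
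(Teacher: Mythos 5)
Your argument is correct and is essentially the same as the paper's: the paper does not prove Lemma~\ref{lem:Arnold_et_al} itself but cites Arnold, Balakrishnan and Nagaraja (1992, Theorem~8.5.1), whose standard proof proceeds exactly as you do, by writing $\{Y\le t_s\}$ as a Binomial tail event for the count of observations below $t_s$, expanding $F(t_s)$ to first order using only continuity of $f$ at $\median$ and $F(\median)=\tfrac12$, and applying a central limit theorem. The one cosmetic point is that, since $p_s$ varies with $s$, the CLT step is formally a triangular-array application (Lindeberg--Feller, or de~Moivre--Laplace with varying success probability) rather than Lindeberg--L\'evy, but your observation that the indicator summands are bounded with per-term variance tending to $1/4$ is exactly what makes that version apply.
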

\noindent See, e.g., \shortciteN[Theorem~8.5.1]{Arnold/Balakrishnan/Nagaraja:1992} for a proof. It follows that in the i.i.d.\ case, ideal points $\l_1, \ldots, \l_m$ in assembly $\mathcal{R}^m$ are (approximately) normally distributed\footnote{This approximation is very good already for rather moderate sample sizes. If, e.g., individual ideal points $\nu^l$ are standard uniformly distributed, i.e., $\epsilon^l\sim \mathbf{U}[0,1]$ and $\mu_i\equiv 0$, then $\l_i$ is beta distributed with parameters $a=b=(n_i+1)/2$. The corresponding beta and normal density functions can be regarded as identical for all practical purposes if $n_i>100$.~--
Note that Lemma~\ref{lem:Arnold_et_al} is useful also in case of a non-degenerate distribution $H$ of $\mu_i$: it establishes that the precise distribution $G$ of individual shocks does not matter for $\l_i$'s distribution $F_i$; only $g(\median)$, the sufficiently great $n_i$, and $H$ do.}
with identical means but standard deviations that are \emph{inversely proportional to the square root} of the respective constituency sizes.
So, in the i.i.d.\ case, rather than all orderings of $\l_1, \ldots, \l_m$ being equally likely ($\phi(v)$'s implicit assertion), the representative of a constituency $\mathcal{C}_i$ which is four times larger than constituency $\mathcal{C}_j$ has twice the chances to find itself in the middle. (Recall that normal density at the mean and median is inversely proportional to standard deviation.)
Then weights that are proportional to population sizes, or weights such that $\phi(v)$ is, would give representatives of large constituencies more a~priori influence than is due.\footnote{We are unaware of any systematic empirical evidence for or against the claim that representatives from larger constituencies tend to be located more centrally in the relevant policy space. This theoretical prediction is testable in principle but rests on the two assumptions of aggregate preferences being determined by the median individual \emph{and} individual ideal points being i.i.d.
In view of the rapid transition towards indistinguishable distributions of representative ideal points when the i.i.d.\ assumption is given up (see Section~\ref{sec:Heterogeneity}), the claim is bound to be difficult to confirm in practice.}

Before we make this reasoning precise in the following section, let us iterate that the considered median voter model of equal representation in two-tier decision making is an admittedly big simplification. Many collective decisions involve more than just a single dimension in which voter preferences differ.
Even if the assumption of a one-dimensional, say, left--right policy space and single-peaked preferences was granted, systematic abstention of certain social groups could drive a wedge between the median voter's and the median citizen's preferences.\footnote{Parts of the population may be without suffrage (minors, aliens, or prisoners). \citeN[p.~57]{Penrose:1946} referred to ``the square root of the number of people on each nation's voting list'' but the political discussion of voting weights in the EU~Council has almost exclusively referred to \emph{population figures}. We here follow this line and use the terms citizens and voters interchangeably.}
We ignore that voting might involve private information about some state variable (\citeNP{Feddersen/Pesendorfer:1996}, \citeyearNP{Feddersen/Pesendorfer:1997}; \citeNP{Bouton/Castanheira:2012}), and typical agency problems connected to imperfect monitoring and infrequent delegate selections (for instance, national elections of the EU Council's members take place every four to five years).
Empirical evidence highlights that a representative may take positions that differ significantly from his district's median when voter preferences within that district are sufficiently heterogeneous (see, e.g., \citeNP{Gerber/Lewis:2004}).
%However, sound intuitions about fairness can often best be derived from simplifying thought experiments of a `veil of ignorance' kind.
Still, we take it that the best intuitions about fairness are captured by simplifying thought experiments of a `veil of ignorance' kind.
The  analysis of the described stylized world~-- no friction, particularly well-behaved preferences which are a~priori identical for all~-- is useful in this way.
It shows the limitations of and justifications for the simple intuition that weights should be proportional to the number of represented constituents, in a framework that goes beyond the binary world analyzed by \citeN{Penrose:1946} and others.

%%%%%%%%%%%%%%%%%%%%%%%%%%%%%%%%%%%%%%%%%%%%%%%%%%%%%%%%%%%%%%%%%%%%%%%%%%%%%%5
\section{Egalitarian Voting Weights for Many Constituencies}\label{sec:LimitAnalysis}

We will in this section consider situations in which the number $m$ of constituencies is suitably large. Very few tangible results exist on the distribution of order statistics, like the median, from \emph{differently} distributed random variables (the representative ideal points $\l_1, \ldots, \l_m$). And almost nothing seems to be known about the respective distribution of a \emph{weighted median}, which is taken to define the collective decision $x^*\in X$ in our model. It turns out to be possible, nevertheless, to characterize the probability of some $\l_i$ being the weighted median, i.e., the pivot probability $\pi_i(\mathcal{R}^m)$, as $m\to\infty$.

We conceive of $\mathcal{R}^{1}\subset \mathcal{R}^{2} \subset \mathcal{R}^{3}\subset \ldots$ as an infinite chain of assemblies in which more and more constituencies $i\in \naturalnumbers$ have a representative with a voting weight $w_i\ge 0$ and a random ideal point $\l_i$ with absolutely continuous distribution $F_i$.
Some technical requirements will be imposed on the corresponding density $f_i$, but it does not matter if $\l_i$ is defined by (\ref{eq:l_i_is_C_is_median}) and corresponds to the median of some set of other random ideal points like $\{\nu^l\}_{l\in \mathcal{C}_i}$; it could, for instance, be the average of some ideal points (such as those of members of a coalition government or oligarchy) or that of a constituency dictator.
So while the  problem of equal representation which we stated in Section~\ref{sec:LimitAnalysis} is the key motivation for investigating pivot probabilities $\pi_1(\mathcal{R}^m),\ldots,\pi_m(\mathcal{R}^m)$, the following characterization of their limiting behavior has more general applicability.\footnote{In particular, for given $F_1, \ldots, F_m$, $\pi(\mathcal{R}^m)$ amounts to a specific \emph{quasivalue} or \emph{random order value} for simple games. See, e.g., \citeN{Monderer/Samet:2002}.}
We will return to the issue of designing egalitarian two-tier voting systems after considering assemblies $\mathcal{R}^m$ with rather arbitrary ideal point distributions $F_1, \ldots, F_m$ and weighted voting rules $[q^m; w_1, \ldots, w_m]$.

For weight sequences $\{\boldsymbol{w}^m\}_{m\in \naturalnumbers}$ and associated weighted voting games $[q^m; w_1,\ldots, w_m]$ in which nobody's \emph{relative} voting weight is bounded away from zero, the pivot probability $\pi_i(\mathcal{R}^{m})$ of any given representative $i\in \naturalnumbers$ will converge to zero as $m\to \infty$.\footnote{Concerning the Shapley value $\phi(v)$, which equals $\pi(\mathcal{R}^{m})$ if $F_1, \ldots, F_m$ are identical, \citeN[Lemma~8.2]{Neyman:1982} has established that $\phi_i(v)\le 4w_i/\sum_{j>i} w_j$ for $v=[q; w_1, \ldots, w_m]$ with $w_1\ge \ldots \ge w_m\ge 0$, $\sum_{j=1}^m w_j=1$, and $2w_1<q<1-2w_1$.}
% Das l\"{a}sst sich nicht so leicht in ein O(1/m) o.\"{a}. \"{u}bersetzen, da ja in der
% Gewichtsfolge zwischendurch lange Abschnitte mit (Fast-)Dummies kommen k\"{o}nnen,
% so dass die Macht eines normalen Spielers deutlich langsamer als mit O(1/M) gegen
% null geht. (vgl. E-Mail-Austausch mit Abraham 5./6.11.12)
Still, $\pi_i(\mathcal{R}^{m})/\pi_j(\mathcal{R}^{m})$ need not converge.
This is illustrated by the sequence $\{\boldsymbol{w}^m\}_{m\in \naturalnumbers}$ with
\begin{equation}\label{eq:1222_example}
\boldsymbol{w}^m=(1, 2, \ldots, 2)\in \realnumbers^m.
\end{equation}
Representative~1 is either a dummy player with $\pi_1(\mathcal{R}^{m})=0$ or, supposing that the ideal point distributions $F_1, \ldots, F_m$ are identical, $\pi_i(\mathcal{R}^{m})=\frac{1}{m}$ for \emph{all} $i=1, \ldots, m$ depending on whether $m$ is odd or even. So $\pi_1(\mathcal{R}^{m})/\pi_2(\mathcal{R}^{m})$ alternates between 0 and 1. More complicated examples of non-convergence can be constructed, e.g., by having $\{\boldsymbol{w}^m\}_{m\in \naturalnumbers}$ oscillate in a suitable fashion.

We rule out such possibilities by imposing a weak form of replica structure on the considered weights $w_1, w_2, w_3, \ldots$ and ideal point distributions $F_1, F_2, F_3, \ldots$\,\ Specifically, we require that all representatives $i\in \naturalnumbers$ belong to one of an arbitrary but finite number $r$ of representative \emph{types} $\t\in \{1, \ldots, r\}$. All representatives of the same type $\t$ have an identical weight $w_\t$ and distribution $F_\t$. And, avoiding somewhat contrived situations like in (\ref{eq:1222_example}), we restrict attention to chains $\mathcal{R}^{1}\subset \mathcal{R}^{2} \subset \mathcal{R}^{3}\subset \ldots$ in which each type $\t$ maintains a non-vanishing share of representatives as $m\to \infty$.

The key requirements for the following result are that (i) $F_1, F_2, F_3 \ldots$ have identical median $\median$ and that (ii) each distribution $F_i$ has a density $f_i$ which is locally continuous and positive at $\median$. In order to allow the application of a powerful uniform convergence result for the Shapley value by \citeN{Neyman:1982}, continuity will be strengthened to the requirement that each density $f_i$'s variation at its median, $|f_i(x)-f_i(\median)|$, can locally be bounded by a quadratic function $cx^2$. This bound follows readily if $f_i$ is $C^2$ like the normal density functions singled out by Lemma~\ref{lem:Arnold_et_al}, and could be relaxed to $cx^a$ for any $a>1$ if one used somewhat less round constants in the proof. Moreover, an unpublished extension by Abraham Neyman of his 1982 result could be employed in order to make do with just (ii). Details on this and the proof are presented in Appendix~A.

\begin{theorem}\label{thm:main_theorem}
Consider an infinite chain $\mathcal{R}^{1}\subset \mathcal{R}^{2} \subset \mathcal{R}^{3}\subset \ldots$ of assemblies which involves a finite number $r$ of representative types, i.e., there exists a mapping $\tau\colon \naturalnumbers \to \{1, \ldots, r\}$ such that $\tau(j)=\t$ implies that $\l_j$ has density $f_\t$ and $w_j=w_\t\ge 0$.\footnote{We presume w.l.o.g.\ that $\tau(i)=i$ for $i\in \{1, \ldots, r\}$.}
Let the share of each type be bounded away from zero, i.e.,  there exist $\beta>0$ and $m^0\in \naturalnumbers$ such that $\beta_\t(m)\equiv |\mspace{1.5mu}\{k\in \{1, \ldots, m\}\colon \tau(k)=\t\}\mspace{2mu}|\mspace{2mu}/m \geq \beta>0$ for all $m\geq m^0$.
If for each $\t\in \{1, \ldots, r\}$ the distribution $F_\t$ has median $\median$ and its density $f_\t$ satisfies $f_\t(\median)>0$ with $|f_\t(x)-f_\t(\median)|\leq cx^2$ on a non-empty interval $[\median-\eps_1,\median+\eps_1]$ for some $c\geq 0$ then for $w_j>0$
\begin{equation}\label{eq:thm}
\lim_{m\to \infty}\frac{\pi_i(\mathcal{R}^{m})}{\pi_j(\mathcal{R}^{m})} = \frac{w_i f_{i}(\median)}{w_j f_{j}(\median)}.
\end{equation}
\end{theorem}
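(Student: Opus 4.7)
The approach is to condition on representative $i$'s ideal point and apply a local central limit theorem (local CLT) to the conditional pivot probability, then integrate against $f_i$ and identify the leading-order behavior. The outcome will be $\pi_i(\mathcal{R}^m)$ asymptotic to $w_i f_i(\median)$ times a factor common across $i$, which cancels in the ratio.

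\emph{Step 1 (Conditioning).} Since $\{\l_j\}_{j\ne i}$ are independent of $\l_i$, write
\begin{equation*}
\pi_i(\mathcal{R}^m) \;=\; \int_{\realnumbers} f_i(x)\, P_i^m(x)\, dx, \qquad P_i^m(x)\;:=\;\Pr\!\bigl(S_i^m(x)\in[q^m-w_i,q^m)\bigr),
\end{equation*}
where $S_i^m(x):=\sum_{j\ne i} w_j\,\mathbf{1}\{\l_j<x\}$ is a sum of independent but non-identically distributed weighted Bernoullis with means $F_j(x)$.

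\emph{Step 2 (Local expansion and CLT).} Using $F_\t(\median)=\tfrac12$ and the quadratic bound on $|f_\t(x)-f_\t(\median)|$, expand $F_j(x)=\tfrac12+f_{\tau(j)}(\median)(x-\median)+O((x-\median)^2)$. Set $W^m:=\sum_j w_j$, $A_m:=\sum_{j\ne i} w_j f_{\tau(j)}(\median)$, and $\sigma_m^2:=\tfrac14\sum_{j\ne i} w_j^2$; the type-share hypothesis forces $A_m=\Theta(m)$ and $\sigma_m=\Theta(\sqrt m)$, so
\begin{equation*}
\E[S_i^m(x)]\;=\;\tfrac12(W^m-w_i)+(x-\median)A_m+O\!\bigl((x-\median)^2\, m\bigr),
\end{equation*}
while $\mathrm{Var}(S_i^m(x))\sim\sigma_m^2$ uniformly near $\median$. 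A large-deviation estimate confines the non-negligible mass of $P_i^m(\cdot)$ to $|x-\median|=O(m^{-1/2}\log m)$. Within that window, a local CLT for sums of independent weighted Bernoullis with vanishing relative weights -- this is exactly the content of Neyman's 1982 uniform estimate applied slicewise in $x$ -- yields
\begin{equation*}
P_i^m(x)\;=\;\frac{w_i}{\sigma_m\sqrt{2\pi}}\,\exp\!\Bigl(-\tfrac12\bigl[(q^m-\E[S_i^m(x)])/\sigma_m\bigr]^2\Bigr)\,\bigl(1+o(1)\bigr).
\end{equation*}

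\emph{Step 3 (Rescale and pass to the limit).} Substitute $y=(x-\median)A_m/\sigma_m$, so $dx=(\sigma_m/A_m)\,dy$ and $q^m-\E[S_i^m(x)]=\tfrac{w_i}{2}-y\sigma_m+O(y^2\sigma_m^2/A_m)$. Since $w_i/\sigma_m\to 0$ and $\sigma_m/A_m\to 0$, the exponent converges to $-y^2/2$ and $f_i(\median+y\sigma_m/A_m)\to f_i(\median)$ by the quadratic bound. Dominated convergence (with a Gaussian majorant provided by the local CLT tail) gives
\begin{equation*}
\pi_i(\mathcal{R}^m)\;\sim\;\frac{w_i\,f_i(\median)}{A_m}\int_{\realnumbers}\frac{e^{-y^2/2}}{\sqrt{2\pi}}\,dy\;=\;\frac{w_i\,f_i(\median)}{A_m}.
\end{equation*}
The factor $A_m$ is common across representatives, so it cancels in $\pi_i(\mathcal{R}^m)/\pi_j(\mathcal{R}^m)$ and (\ref{eq:thm}) follows.

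\emph{Main obstacle.} The delicate step is making the local CLT approximation for $P_i^m(x)$ uniform in $x$ across the $O(m^{-1/2}\log m)$ band, given that both the summand weights $w_j$ and the Bernoulli means $F_j(x)$ vary across types. A naive Berry--Esseen bound is too coarse once we demand the small-interval probability up to a $1+o(1)$ factor; Neyman's (1982) uniform convergence theorem for the Shapley value of weighted majority games packages precisely the control that is needed, since the Shapley value is the average of such pivot probabilities over a common Bernoulli mean. Controlling the quadratic error from the expansion of $F_j$ against the Gaussian integrand (so that its contribution is $o(1)$ and not an order-one distortion of the limit) is why the hypothesis on $f_\t$ must be strengthened from mere continuity at $\median$ to a power-law bound.
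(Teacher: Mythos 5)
Your route is genuinely different from the paper's: you condition on $\l_i=x$, write $\pi_i(\mathcal{R}^m)=\int f_i(x)P_i^m(x)\,dx$ with $P_i^m(x)$ the probability that the weighted sum of indicators $\mathbf{1}\{\l_j<x\}$ lands in a window of length $w_i$ at $q^m$, and then invoke a local CLT for that small-interval probability. The skeleton (Hoeffding-type localization to $|x-\median|=O(m^{-1/2}\log m)$, Taylor expansion of $F_j$, the rescaling $y=(x-\median)A_m/\sigma_m$, and the cancellation of $A_m$ in the ratio) is coherent, and it would in fact deliver a stronger conclusion than the theorem, namely $\pi_i(\mathcal{R}^m)\sim w_i f_i(\median)/A_m$. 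But the central step is not justified: the claim that $P_i^m(x)=\frac{w_i}{\sigma_m\sqrt{2\pi}}e^{-\frac12[(q^m-\E[S_i^m(x)])/\sigma_m]^2}(1+o(1))$ uniformly in $x$ is \emph{not} ``exactly the content of Neyman's 1982 uniform estimate applied slicewise in $x$.'' Neyman's theorem is a statement about the Shapley value, i.e.\ about pivot probabilities \emph{averaged} over a common participation probability $p\in[0,1]$ (equivalently, over uniformly random orderings); it provides no pointwise-in-$p$ (here, pointwise-in-$x$) small-interval estimate with multiplicative error $1+o(1)$, and one cannot extract a sliced estimate from an averaged one. What you actually need at fixed $x$ is a Penrose--Banzhaf-type limit theorem (a local limit theorem for a triangular array of independent, non-identically distributed weighted Bernoulli variables, with lattice versus non-lattice case distinctions for the weights $w_1,\ldots,w_r$ and a window of fixed length $w_i$ while $\sigma_m\to\infty$). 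That statement is known to \emph{fail} for general weight sequences (Lindner and Owen 2007 give counterexamples) and has been proved only in special cases (Lindner and Machover 2004); under the replica assumptions of the theorem it is plausibly true, but it is a substantive result in its own right and your proposal neither proves it nor cites a result that covers it. This is precisely the difficulty the paper's proof is engineered to avoid: instead of conditioning on $\l_i=x$, it conditions on the vector of counts of ideal points left of, inside, and right of a shrinking essential interval $I_m$, proves (Lemma~\ref{lemma:permutation_equiprobability}) that the orderings of the representatives inside $I_m$ become equiprobable, so that the conditional pivot probability \emph{is} (up to vanishing error) the Shapley value of a random subgame, and only then applies Neyman's averaged theorem, to which the problem has been reduced exactly.

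Two smaller points. First, the pivot event should be $S_i^m(x)\in(q^m-w_i,\,q^m]$ rather than $[q^m-w_i,\,q^m)$, given the definition of $P$ in (\ref{eq:P_defined}); harmless, but in a lattice-type local limit argument the placement of the half-open endpoints is exactly where constants can go wrong, so it should be stated correctly. Second, your closing remark that the quadratic bound on $|f_\t(x)-f_\t(\median)|$ is what forces the strengthening beyond continuity is slightly off target for your own argument: the Taylor error it controls enters the exponent divided by $\sigma_m$ and is harmless at scale $|x-\median|=O(m^{-1/2}\log m)$ even under mere continuity; in the paper the quadratic bound is used for a different purpose (making the $k!$ orderings inside $I_m$ equiprobable up to relative error $O(m^{-1/8})$ in Lemma~\ref{lemma:permutation_equiprobability}), and the appendix remarks explain that continuity alone suffices if one uses Neyman's unpublished extension. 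The real missing ingredient in your write-up is the uniform local limit theorem itself, not the smoothness hypothesis.
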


%\noindent
The key observation behind limit result (\ref{eq:thm}) and its corollaries is that, as $m$ grows large, the pivotal member of $\mathcal{R}^{m}$ is most likely found very close to the common median $\median$ of ideal point distributions $F_1, \ldots, F_m$. Pivotality at location $x\in X$ requires that less than half the total weight of $\mathcal{R}^{m}$'s members is located in $(-\infty, x)$ and less than half the total weight is found in $(x,\infty)$.
In expectation, this occurs exactly at $x=\median$ and, by Hoeffding's inequality, the probability for the realized weighted median in $\mathcal{R}^{m}$ to fall outside an $\eps$-neighborhood of $\median$ approaches zero exponentially fast as $m\to \infty$.

One can, therefore, restrict attention to an arbitrarily small interval $[-\eps,\eps]\subset[-\eps_1,\eps_1]$ for sufficiently large $m$ if w.l.o.g.\ $\median=0$. Since the densities $f_1(x), \ldots, f_m(x)$ satisfy the mentioned kind of continuity, they can suitably be approximated by upper and lower bounds on this interval. Moreover, when we condition on the respective events $\{\l_j \in [-\eps,\eps]\}$, the corresponding bounds are almost identical for any $j=1, \ldots, m$ when $m$ is sufficiently large.
This makes all orderings of those representatives with ideal points in $[-\eps,\eps]$ conditionally equiprobable in very good approximation. Representative~$i$'s respective conditional pivot probability, therefore, corresponds to $i$'s Shapley value in a `subgame' which involves only the representatives $j$ with realizations $\l_j\in [-\eps,\eps]$.
It is possible to apply the uniform convergence result for the Shapley value proven by \citeN{Neyman:1982} to each of these subgames.
In the final step of the proof, it then remains to exploit that the probability of the condition $\{\l_i \in [-\eps,\eps]\}$ being true becomes proportional to $\l_i$'s density at 0 when $\eps\downarrow 0$.\footnote{
At an intuitive level, one may even directly think of the limit case $\eps=0$: if one conditions on representative $i$'s ideal point being located at $x=\median=0$, i.e., $\{\l_i=0\}$, then each representative $j\neq i$ is equally likely found to $i$'s left or right (because $F_j(\median)=\frac{1}{2}$).
In this case, $i$'s conditional pivot probability equals $i$'s Penrose-Banzhaf power index, which, like the Shapley value, becomes proportional to $(w_1,\ldots, w_m) $ for the replica-like weight sequences that we consider. (See \citeNP{Lindner/Machover:2004} and \citeNP{Lindner/Owen:2007} on the corresponding limit result.)}

Theorem~\ref{thm:main_theorem} provides a rather general answer to the posed problem of equal representation in the case that many constituencies are involved. In particular, comparison of (\ref{eq:OPOV_condition_restated}) and (\ref{eq:thm}) immediately yields
\begin{corollary}\label{cor:general_solution_simple}
%Under the conditions of Theorem~\ref{thm:main_theorem}, a
If $m$ is sufficiently large then choosing
\begin{equation}
(w_1, \ldots, w_m)\propto \left(\frac{n_1}{f_1(\median)},\ldots, \frac{n_m}{f_m(\median)}\right)
\label{eq:general_solution_simple}
\end{equation}
achieves approximately equal representation (as formalized by condition~(\ref{eq:OPOV_condition_restated})) if the technical conditions of Theorem~\ref{thm:main_theorem} are verified by $f_1, \ldots, f_m$.
\end{corollary}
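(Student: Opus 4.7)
The plan is to derive the corollary as an essentially direct substitution into the limit identity (\ref{eq:thm}). The problem of equal representation asks for weights satisfying $\pi_i(\mathcal{R}^m)/\pi_j(\mathcal{R}^m) \approx n_i/n_j$, and Theorem~\ref{thm:main_theorem} already provides a closed-form expression for the ratio of pivot probabilities in the large-$m$ limit, so the argument consists entirely of matching these two ratios.

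First, I would invoke Theorem~\ref{thm:main_theorem} under the hypothesis that the densities $f_1, \ldots, f_m$ satisfy its regularity conditions (common median $\median$, local positivity, and the quadratic bound $|f_\t(x) - f_\t(\median)| \le c x^2$ on some neighborhood of $\median$). The theorem guarantees that, in the replica-type regime with the share of each type bounded away from zero, for any pair $i,j$ with $w_j > 0$,
\begin{equation*}
\frac{\pi_i(\mathcal{R}^{m})}{\pi_j(\mathcal{R}^{m})} \longrightarrow \frac{w_i f_i(\median)}{w_j f_j(\median)} \quad \text{as } m\to\infty.
\end{equation*}
Hence, for every $\eps>0$ there exists $m^\star$ such that for all $m\ge m^\star$ the left-hand side lies within $\eps$ of the right-hand side.

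Second, I would substitute the proposed weight choice $w_i = \gamma \cdot n_i/f_i(\median)$ for a common positive constant $\gamma$ (the proportionality factor is irrelevant since only ratios matter in both (\ref{eq:OPOV_condition_restated}) and (\ref{eq:thm})). Plugging into the limit gives
\begin{equation*}
\frac{w_i f_i(\median)}{w_j f_j(\median)} = \frac{(\gamma n_i/f_i(\median))\,f_i(\median)}{(\gamma n_j/f_j(\median))\,f_j(\median)} = \frac{n_i}{n_j},
\end{equation*}
so $\pi_i(\mathcal{R}^{m})/\pi_j(\mathcal{R}^{m})$ converges to $n_i/n_j$, which is precisely condition~(\ref{eq:OPOV_condition_restated}) in the approximate sense ``$\approx$'' for all sufficiently large $m$.

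There is essentially no obstacle here, because the corollary is a pure substitution: the heavy analytical work, namely the convergence in (\ref{eq:thm}), is already absorbed by Theorem~\ref{thm:main_theorem}. The only minor points worth flagging in the write-up are (a) that the rate at which $m$ must be ``sufficiently large'' depends on the partition $\mathfrak{C}^m$ and the densities (and is not uniform across arbitrary sequences of partitions), and (b) that the proposed $w_i$ need not be integer-valued or bounded, so in practice one would round or rescale, which is consistent with the informal ``$\approx$'' formulation used throughout Section~\ref{sec:Model}. Mentioning these caveats clarifies the scope of the corollary without requiring any further argument.
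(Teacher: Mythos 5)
Your proposal is correct and matches the paper's own derivation, which obtains the corollary ``immediately'' by comparing the limit ratio in (\ref{eq:thm}) with the target condition (\ref{eq:OPOV_condition_restated}) after substituting $w_i \propto n_i/f_i(\median)$. The caveats you flag (dependence of ``sufficiently large $m$'' on the partition and densities, and the combinatorial imprecision addressed by the Shapley-value refinement in (\ref{eq:general_solution_SSI_based})) are consistent with the paper's own discussion following the corollary.
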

If $m\ll \infty$, the approximation of the conditional pivot probabilities for ideal points in a neighborhood of the common median $\median$, which is obtained by (a) considering the limit case of orderings being conditionally equiprobable and (b) by applying Neyman's limit result for the Shapley value, need not be very good. The latter source of imprecision can be avoided by computing the Shapley value $\phi(v)$ for the simple game $v=[q^m;w_1,\ldots, w_m]$ which is defined by representatives' weights and simple majority rule as described in Section~\ref{sec:Model}. The suggestion in Corollary~\ref{cor:general_solution_simple} can hence be improved somewhat if (\ref{eq:general_solution_simple}) is replaced by
\begin{equation}\label{eq:general_solution_SSI_based}
(w_1, \ldots, w_m) \textnormal{\ \,such that\ \,} \phi(q^m;w_1,\ldots, w_m)\propto \left(\frac{n_1}{f_1(\median)},\ldots, \frac{n_m}{f_m(\median)}\right).
\end{equation}

We conclude this section by specifically considering the benchmark i.i.d.\ case, in which the ideal points $\nu^1, \ldots, \nu^n$ correspond just to voter-specific random variables $\epsilon^1, \ldots, \epsilon^n$ that are independent and identically distributed with a suitable probability density function $g$, and where $\l_i$ corresponds to the median ideal point in $\mathcal{C}_i$. In this case, the ideal points $\l_1, \ldots, \l_m$ in assembly $\mathcal{R}^m$ are asymptotically normally distributed by Lemma~\ref{lem:Arnold_et_al} with respective densities that satisfy the quadratic bound condition of Theorem~\ref{thm:main_theorem} and
\begin{equation}\label{eq:iid_median_density}
f_i(\median)=\frac{1}{\sqrt{2\pi\cdot\frac{1}{ n_i[2g(\median)]^2}}} =
\frac{2g(\median)}{\sqrt{2\pi}}\sqrt{n_i}>0.
\end{equation}
Combining (\ref{eq:iid_median_density}) and Corollary~\ref{cor:general_solution_simple} we obtain:
\begin{corollary}[Square root rule]\label{cor:equal_representation_iid}
If the ideal points of all voters are i.i.d., representative~$i$'s ideal point equals the median voter's ideal point in constituency $\mathcal{C}_i$ for all $i\in \{1, \ldots, m\}$, and $m$ is sufficiently large then
\begin{equation}
(w_1, \ldots, w_m)\propto \left(\sqrt{n_1},\ldots, \sqrt{n_m}\right)
\label{eq:iid_solution_simple}
\end{equation}
or, better,
\begin{equation}
(w_1, \ldots, w_m) \textnormal{\ \,such that\ \,} \phi(q^m;w_1,\ldots, w_m) \propto \left(\sqrt{n_1},\ldots, \sqrt{n_m}\right)
\label{eq:iid_solution_SSI_based}
\end{equation}
achieves approximately equal representation. % (as formalized by condition~(\ref{eq:OPOV_condition_restated})).
\end{corollary}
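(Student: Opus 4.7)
The plan is to combine Lemma~\ref{lem:Arnold_et_al} with Corollary~\ref{cor:general_solution_simple}. First I would observe that, because all individual ideal points are i.i.d.\ draws from $G$ with density $g$ satisfying $g(\median)>0$, representative $i$'s ideal point $\l_i$ is simply a sample median of $n_i$ such draws. Lemma~\ref{lem:Arnold_et_al} then yields that $\l_i$ is asymptotically $\mathbf{N}(\median,\,1/(n_i[2g(\median)]^2))$ as $n_i\to\infty$, and evaluating the limiting normal density at its mean immediately produces equation~(\ref{eq:iid_median_density}): $f_i(\median)=2g(\median)\sqrt{n_i}/\sqrt{2\pi}$, which in particular grows like $\sqrt{n_i}$.

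Next, I would verify that the hypotheses of Theorem~\ref{thm:main_theorem} transfer to this setting. Since each $F_i$ is indexed solely by $n_i$ under the i.i.d.\ assumption, the finite-types requirement amounts to assuming (as is implicit in ``$m$ sufficiently large'' for a growing chain of assemblies) that the sizes $n_1,\ldots,n_m$ take values in a bounded set with each value appearing with non-vanishing frequency. All $F_i$ then share the common median $\median$, satisfy $f_i(\median)>0$ by~(\ref{eq:iid_median_density}), and admit a common quadratic bound $|f_i(x)-f_i(\median)|\leq c x^2$ on a fixed neighbourhood of $\median$: under mild smoothness of $g$ each finite-$n_i$ sample-median density $f_i$ is locally $C^2$ at $\median$ (the beta-type expression for $f_i$ inherits regularity from $g$), and taking the maximum of the resulting constants across the finitely many types gives a single $c$.

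With the hypotheses of Theorem~\ref{thm:main_theorem} in place, Corollary~\ref{cor:general_solution_simple} applies. Substituting $f_i(\median)\propto\sqrt{n_i}$ into prescription~(\ref{eq:general_solution_simple}) yields
\[
(w_1,\ldots,w_m)\propto\left(\frac{n_1}{f_1(\median)},\ldots,\frac{n_m}{f_m(\median)}\right)\propto\left(\sqrt{n_1},\ldots,\sqrt{n_m}\right),
\]
which is (\ref{eq:iid_solution_simple}); the refined Shapley-value version (\ref{eq:iid_solution_SSI_based}) follows identically from the analogous refinement~(\ref{eq:general_solution_SSI_based}).

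The step I expect to be least routine is securing the quadratic bound uniformly across representative types. Lemma~\ref{lem:Arnold_et_al} is only a weak-convergence statement about the rescaled sample median, so strictly speaking one needs a modest extra hypothesis on $g$ (e.g.\ local $C^2$-smoothness in a neighbourhood of $\median$) to transfer a pointwise bound from the limiting normal density back to the exact density $f_i$ for finite $n_i$. Once Theorem~\ref{thm:main_theorem}'s conditions are established in this way, the remainder of the argument is mechanical substitution into Corollary~\ref{cor:general_solution_simple}.
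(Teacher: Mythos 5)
Your proposal follows essentially the same route as the paper: invoke Lemma~\ref{lem:Arnold_et_al} to get $f_i(\median)=2g(\median)\sqrt{n_i}/\sqrt{2\pi}$ as in (\ref{eq:iid_median_density}) and substitute into Corollary~\ref{cor:general_solution_simple} (respectively~(\ref{eq:general_solution_SSI_based})). Your extra care about the finite-types assumption and transferring the quadratic bound from the limiting normal density to the finite-$n_i$ densities only makes explicit what the paper asserts in passing (normal densities are $C^2$, so the bound holds), so the argument is correct and matches the paper's.
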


%%%%%%%%%%%%%%%%%%%%%%%%%%%%%%%%%%%%%%%%%%%%%%%%%%%%%%%%%%%%%%%%%%%%%%%%%%%%%%
\section{Heterogeneity within vs.\ across constituencies}\label{sec:Heterogeneity}

Corollary~\ref{cor:equal_representation_iid} derived a \emph{square root} rule similar to that of \citeN{Penrose:1946}
for the i.i.d.\ case,\footnote{Note that the \emph{Penrose square root rule} does not refer to weights but top-tier pivot
probabilities, which equal the Penrose-Banzhaf power index of the representatives in Penrose's binomial voting model
(cf.\ fn.~\ref{fn:SSI_and_PBI_introductions}).} that is, for a degenerate distribution $H$ of the constituency-specific
$\mu_i$-components of individual ideal points $\nu^l=\mu_i+\epsilon^l$. % with $l\in \mathcal{C}_i$.
We now investigate the robustness of this rule regarding the degree of preference affiliation within each constituency. Non-degenerate shocks $\mu_i$ imply positive correlation within each constituency and give rise to \emph{polarization of preferences} along constituency lines, which is measured by the ratio $\sigma_H^2/\sigma_G^2$.\footnote{The basic features of \emph{polarization} are according to \citeN[p.~824]{Esteban/Ray:1994}: (i) a high degree of homogeneity \emph{within} groups, (ii) a high degree of heterogeneity \emph{across} groups, and (iii) a small number of significantly sized groups. Ratio $\sigma^2_H/\sigma^2_G$ serves as a simple measure of polarization of ideal points $\nu^1, \ldots, \nu^n$ here, where groups are given exogenously. \citeANP{Esteban/Ray:1994} characterize polarization measures for the general case without an exogenous partition of the population.}
It turns out that for sufficiently strong polarization, %preference affiliation within each constituency,
a \emph{linear} weight allocation rule quickly performs better than strictly concave mappings

This is analytically seen most easily for the case in which all the involved distributions are normal. First, let all $\epsilon^l$ be distributed normally with zero mean and variance $\sigma_G^2$. Lemma~\ref{lem:Arnold_et_al} then implies that the median of $\{\epsilon^l\}_{l\in \mathcal{C}_i}$ is approximately normally distributed with zero mean and variance
$\pi \sigma_G^2/({2n_i}).$
Second, let the constituency-specific preference component $\mu_i$ be normally distributed with zero mean and variance $\sigma_H^2$. Constituency $\mathcal{C}_i$'s aggregate ideal point $\l_i$~-- the sum of two independent (approximately) normally distributed random variables~-- then also has an approximately normal distribution. Namely, \begin{equation}
\lambda_i \sim \mathbf{N}\left(0, \frac{\pi \sigma_G^2}{2n_i} + \sigma_H^2\right).
\end{equation}
Considering the corresponding densities at $\median=0$ for two representatives $i$ and $j$ yields
\begin{equation}
\frac{f_i(0)}{f_j(0)}=\left(\frac{\frac{\pi \sigma_G^2}{2n_i} + \sigma_H^2}{\frac{\pi \sigma_G^2}{2n_j} + \sigma_H^2}\right)^{-\frac{1}{2}}. %\longrightarrow \ 1
\end{equation}
This ratio quickly approaches 1 as $\sigma^2_H \to \infty$, or if $\sigma^2_H>0$ and $n_i,n_j\to \infty$. Corollary~\ref{cor:general_solution_simple} then calls for $(w_1,\ldots, w_m)\propto (n_1, \ldots, n_m)$.

We pointed out in Section~\ref{sec:LimitAnalysis} that heterogeneity within each constituency will always give rise to different distributions of the sample medians when $n_i\neq n_j$. But the differences become small and no longer matter for pivotality in $\mathcal{R}^m$ when the heterogeneity across constituencies is sufficiently great.
This is illustrated by Figure~\ref{fig:density_illustration}. It depicts the density functions of ideal points $\l_i$ and $\l_j$ when $\mathcal{C}_i$ is four times larger than constituency $\mathcal{C}_j$, so that the standard deviation $\sigma_i$ of the median of $\{\epsilon^l\}_{l\in \mathcal{C}_i}$ is half the standard deviation $\sigma_j=\sigma$ of the median of $\{\epsilon^l\}_{l\in \mathcal{C}_j}$.
Panel~(a) shows the densities when $\sigma_H^2=0$ (or when we condition on $\mu_i=\mu_j=0$); panel~(b) depicts the case when $\mu_i, \mu_j\sim \mathbf{U}[-6\sigma,6\sigma]$. The densities of $\l_i$ and $\l_j$ in panel~(b) are very hard to distinguish in a neighborhood of the median $\median=0$.
This neighborhood's size increases in $\sigma_H^2$, and it coincides with the relevant policy range in which the Condorcet winner of $\mathcal{R}^m$ is most likely located under simple majority rule.

\begin{figure}[t]
{\small (a) \hspace{0.495\textwidth}(b) }\\
  %\centering
  \includegraphics[width=.495\textwidth]{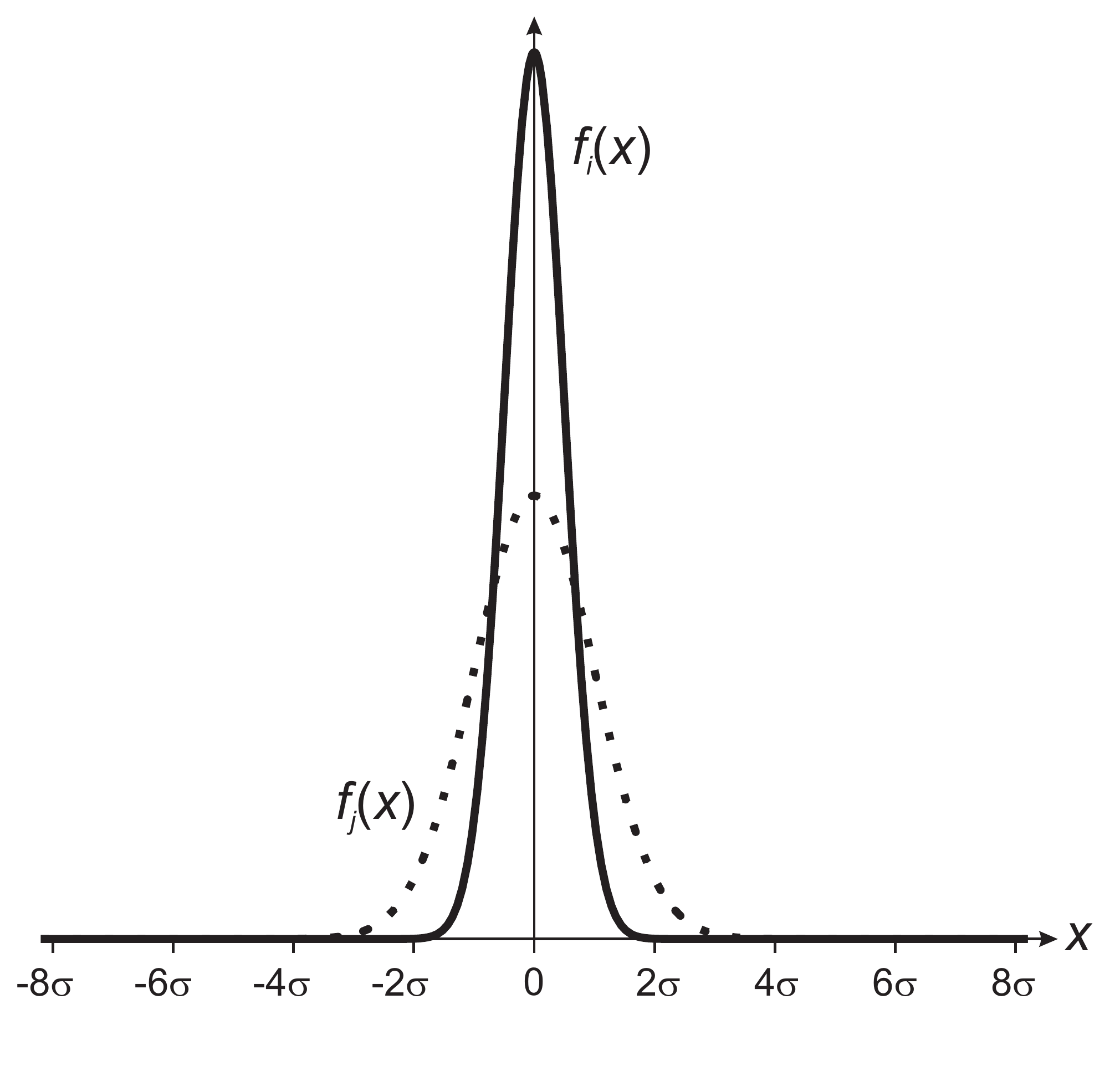}
  \includegraphics[width=.495\textwidth]{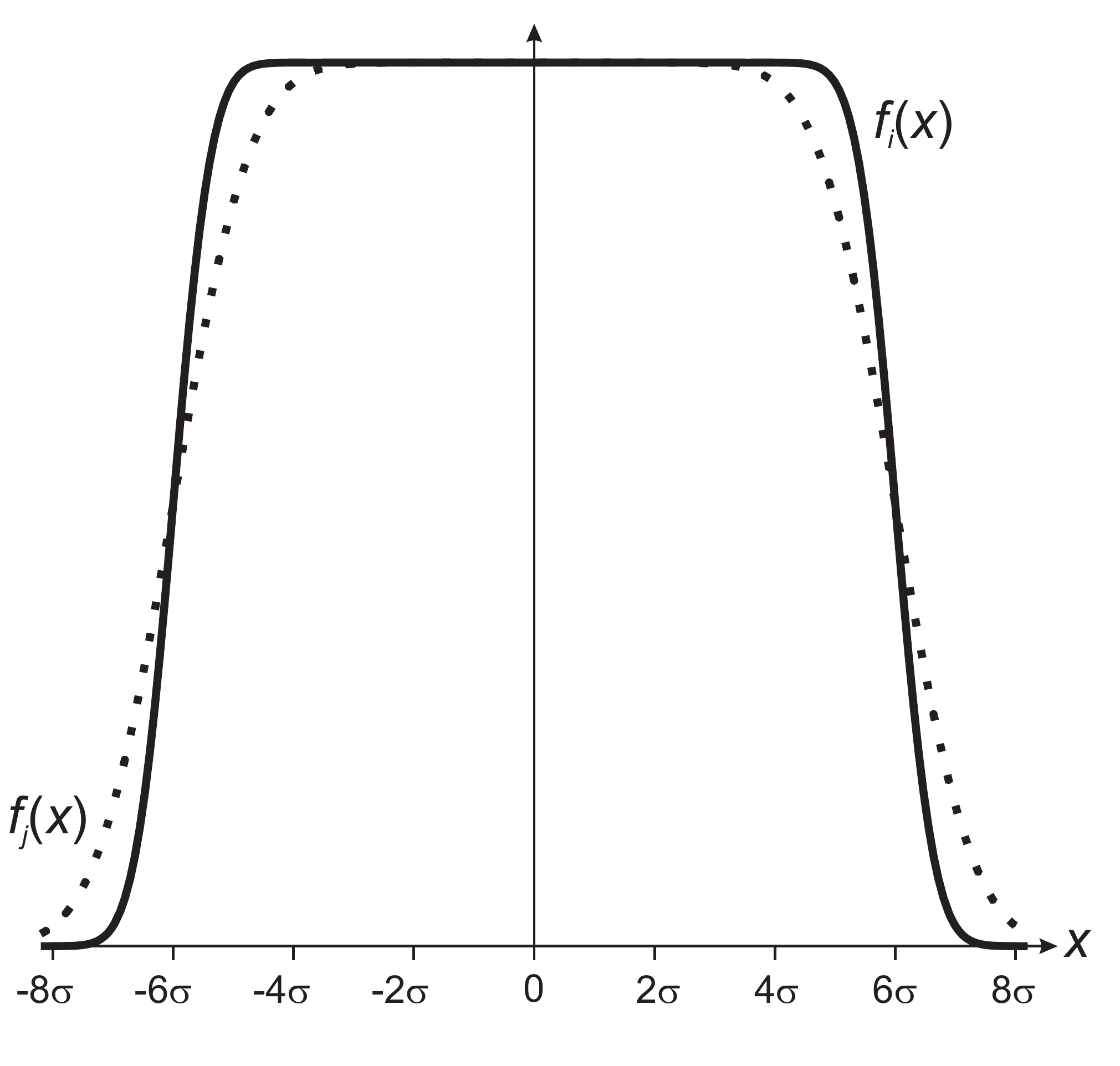}

  \caption{\small
  Densities of $\l_i$ and $\l_j$ when $n_i=4n_j$ and (a) $\mu_i=\mu_j= 0$ or (b) $\mu_i, \mu_j\sim \mathbf{U}[-6\sigma,6\sigma]$}
  \label{fig:density_illustration}
\end{figure}

Recall that the uniform distribution on $[a,b]$ has a variance of $(b-a)^2/12$. So panel~(b) shows a situation with $\sigma_H^2=12\sigma^2$. If we assume, as above, that all $\epsilon^l$ are normal with variance $\sigma_G^2$ then $\sigma_j=\sigma$ corresponds to
$\sigma_G^2=({2n_j}/{\pi}) \cdot \sigma^2$. Panel~(b) hence reflects a preference dissimilarity or polarization ratio of $\sigma^2_H/\sigma^2_G=6\pi/n_j$, which is tiny when one thinks of typical real-world population figures $n_j$.

\begin{figure}[t]
\vspace{\baselineskip}
\centering
  \includegraphics[width=0.875\textwidth]{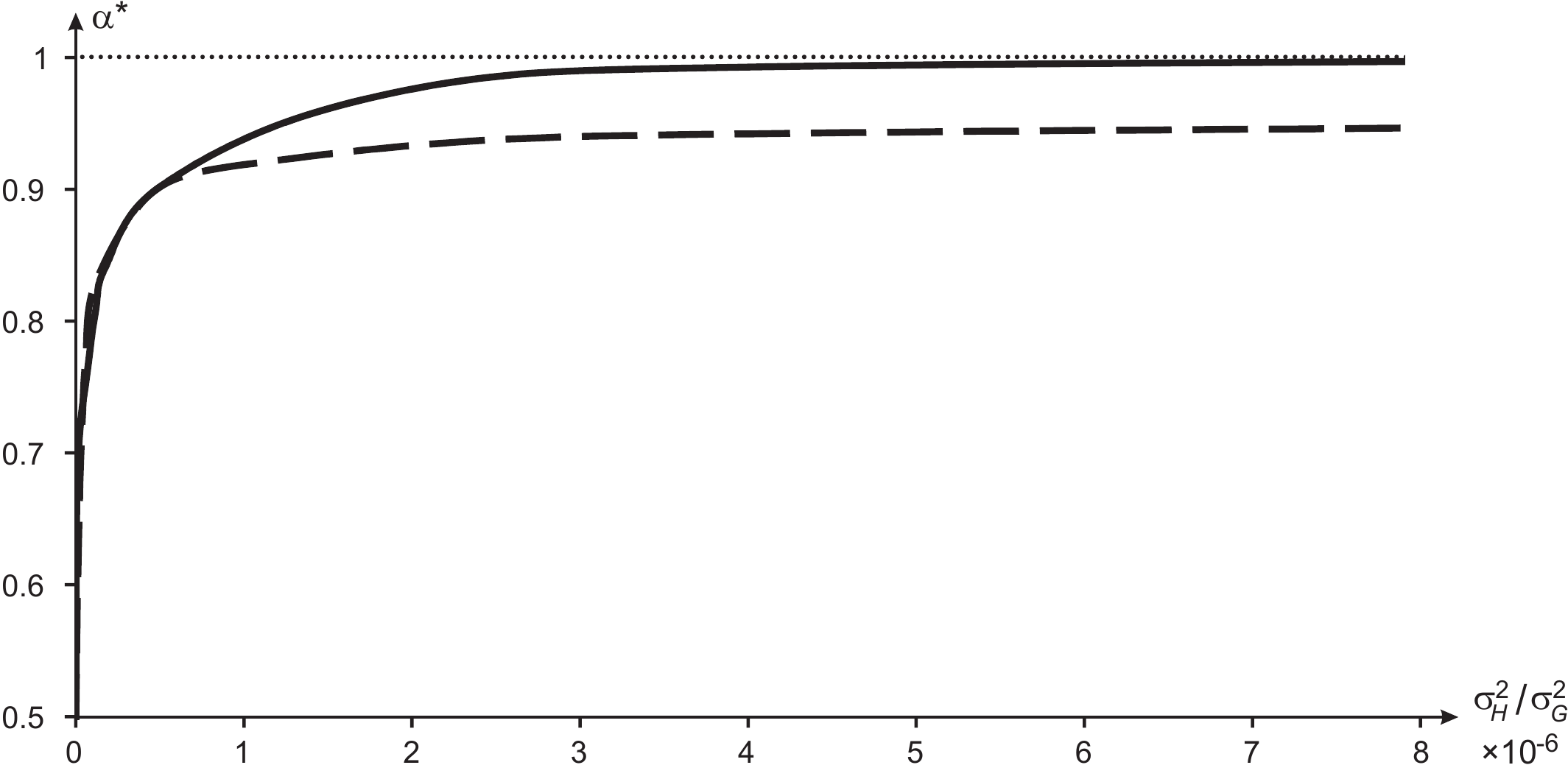} %
  \caption{\small
  Best coefficient $\alpha$ for direct (dashed line) and Shapley value-based allocation rules (solid line) with $n_1, \ldots, n_{27}$ defined by EU27 population data
}
  \label{fig:TransitionEU27}
\end{figure}

This suggests that the phase transition between optimality of a square root rule to optimality of a linear rule can be very fast. Figure~\ref{fig:TransitionEU27} demonstrates this when a population partition corresponding to the current European Union with 27 member states (EU27) is considered. The dashed line illustrates the (interpolated) optimal coefficients $\alpha^*$ as a function of $\sigma_H^2/\sigma_G^2$ when we search for the best rule in the class
\begin{equation}\label{eq:weight_based_alpha_rule}
(w_1, \ldots, w_m)\propto \left({n_1}^\alpha,\ldots, {n_m}^\alpha\right)
\end{equation}
for $\alpha\in \{0, 0.01, \ldots, 1.99, 2\}$;\footnote{Specifically, we consider $\epsilon^l\sim\mathbf{U}[-0.5, 0.5]$ and $\mu_i\sim\mathbf{N}(0,\sigma_H^2)$ with $0\le\sigma_H^2\le 10^{-6}$ and determine estimates of the pivot probabilities $\pi_i(\mathcal{R}^{27})$ which are induced by a given value of $\alpha$ via Monte Carlo simulation. The considered objective is to minimize $\Vert \cdot \Vert_1$-distance between individual pivot probabilities and the egalitarian ideal of $(1/n, \ldots, 1/n)\in \realnumbers^n$.} the solid line analogously depicts $\alpha^*$ when one searches within the class of Shapley value-based rules
\begin{equation}\label{eq:SSI_based_alpha_rule}
(w_1, \ldots, w_m) \textnormal{\ \,such that\ \,} \phi(q^m;w_1,\ldots, w_m) \propto \left({n_1}^\alpha,\ldots, {n_m}^\alpha\right).
\end{equation}
Optimality of the square root rule can be seen to break down very quickly; already small degrees of preference dissimilarity across constituencies render a linear rule based on the Shapley value optimal.\footnote{That $\alpha^*$ fails to converge to 1 when the simpler weight-based rule in (\ref{eq:weight_based_alpha_rule}) is concerned attests to the combinatorial nature of weighted voting, which cannot be totally ignored even for $m=27$. } This makes it possible to base design recommendations on rather qualitative assessments of polarization, i.e., it is not necessary to obtain precise estimates of $\sigma_H^2/\sigma_G^2$ in applications.

Note that Figure~\ref{fig:TransitionEU27} considers real EU population data but counterfactually assumes Council decisions to be taken by a simple majority. However, Figure~\ref{fig:density_illustration} suggests that a majority threshold of $q=50\%$ may not be a critical condition for optimality of a linear Shapley rule, provided that $\sigma_H^2>0$. We will make this claim precise in the remainder of the section.

When we presume that assembly $\mathcal{R}^m$ uses the 50\%-majority threshold defined in equation~(\ref{eq:qm_defined}), the representative $P\: m$ defined by (\ref{eq:P_defined}) can be considered as the pivotal member of $\mathcal{R}^m$ without much qualification. We can generalize our model and consider arbitrary relative majority thresholds $q\in [0.5;1)$ if we are willing to accept a weaker notion of pivotality. The complication is that the set of policy options that are $q$-majority undominated is no longer generically unique when $q>0.5$; supermajority rules induce cores which typically consist of entire intervals. We can, nevertheless, generalize the quota definition in (\ref{eq:qm_defined}) to
\begin{equation}\label{eq:qm_definition_generalized}
q^{m}\equiv q \sum_{j=1}^m w_j,
\end{equation}
for $q\in [0.5;1)$ and consider the representative $P\: m$ defined by (\ref{eq:P_defined}) to be pivotal. This may be justified most easily by supposing that a legislative status quo $x^\circ\approx \infty$ exists and that formation of a winning coalition proceeds qualitatively in the same fashion as is sometimes assumed in order to motivate the Shapley value:
coalition formation starts with the most enthusiastic supporters of change on the left, iteratively includes representatives further to the right, and gives all bargaining power to the first~-- and least enthusiastic~-- member who brings about the required supermajority.\footnote{Justifications for attributing most or all influence in $\mathcal{R}^m$ to representative $P\: m$ in the supermajority case date back to \citeN{Black:1948:EA}.
Distance-dependent costs of policy reform, a strategic external agenda setter, or the need of assembly $\mathcal{R}^{m}$ to bargain with outsiders can motivate a focus on the core's extreme points.
Status quo $x^\circ$ might also vary randomly on $X$ such that it lies to the left or right of the core equiprobably (with $\pi_i(\mathcal{R}^{m})$ then being $i$'s pivot probability conditional on policy change). }

Denote an $m$-member assembly $\mathcal{R}^m$ which uses the relative decision quota $q\in [0.5;1)$ and chooses policy $x^*=\l_{P\:m}$ as defined by (\ref{eq:P_defined})--(\ref{eq:outcome_equals_pivotal_point}) and (\ref{eq:qm_definition_generalized}) by $\mathcal{R}^{m,q}$. If $q>0.5$, the corresponding pivot probabilities $\pi_i(\mathcal{R}^{m,q})$ and $\pi_j(\mathcal{R}^{m,q})$ of representatives $i$ and $j$ in general \emph{fail} to exhibit the limit behavior with respect to $m$ which is characterized in Theorem~\ref{thm:main_theorem}. So Corollaries~\ref{cor:general_solution_simple} and \ref{cor:equal_representation_iid} do not apply when $q>0.5$.\footnote{One can check numerically that when one considers rules $(w_1, \ldots, w_m)\propto \left({n_1}^\alpha,\ldots, {n_m}^\alpha\right)$, the optimal coefficient $\alpha^*(q)$ for the i.i.d.\ case, where $\alpha^*(0.5)=0.5$, increases \emph{non-linearly} in $q$.}

However, a second asymptotic relationship applies for $q=0.5$ as well as arbitrary $q\in (0.5; 1)$, for arbitrary \emph{fixed} $m$, and without need for any kind of replica structure.
Specifically, we can consider the situation in which given non-degenerate shock variables $\mu_1,\ldots, \mu_m$, whose common probability density $h$ reflects preference heterogeneity across constituencies, are scaled by a non-negative factor $t$. Individual ideal points are then given by
\begin{equation}\label{eq:nu_decomposed_with_t}
\nu^l=t\cdot \mu_i+\epsilon^l
\end{equation}
for $t \ge 0$. The corresponding ideal point of representative~$i$ from constituency $\mathcal{C}_i$ is
\begin{equation}\label{eq:lambda_decomposed_with_t}
\l_i=t\cdot \mu_i+\tilde \epsilon_i
\end{equation}
with
\begin{equation}\label{eq:tilde_epsilon_defined}
\tilde \epsilon_i = \text{median\,}\{\epsilon^l\colon l\in \mathcal{C}_i\}
\end{equation}
where we maintain the assumption that all $\mu_i$ and $\epsilon^l$ are mutually independent and respectively identically distributed for $i\in \{1,\ldots, m\}$ and $l\in \{1,\ldots, n\}$.

The i.i.d.\ case amounts to $t=0$; and considering a large parameter $t$ corresponds to investigating an electorate which is highly polarized along constituency lines.
If we denote the pivot probability of representative~$i$ by $\pi_i(\mathcal{R}^{m,q,t})$ and the Shapley value of the weighted voting game $v=[q^m; w_1, \ldots, w_m]$ with $q^m$ defined by (\ref{eq:qm_definition_generalized}) as $\phi(v)$, the following holds:

\begin{theorem}\label{thm:main_result_for_shock_case}
Consider an assembly $\mathcal{R}^{m,q}$ with an arbitrary number $m$ of constituencies and the relative decision quota $q\in [0.5; 1)$. For each $i\in \{1, \ldots, m\}$ let $\l_i=t\cdot \mu_i+\tilde \epsilon_i$, where
$\mu_1,\ldots, \mu_m$ and $\tilde \epsilon_1,\ldots,\tilde \epsilon_m$ are all mutually independent random variables,
$\tilde \epsilon_1,\ldots,\tilde \epsilon_m$ have finite means and variances, and
$\mu_1,\ldots, \mu_m$ have an identical bounded density.
Then
\begin{equation}\label{eq:shockthm}
\lim_{\sigext\to \infty}\frac{\pi_i(\mathcal{R}^{m,q,t})}{\pi_j(\mathcal{R}^{m,q,t})} = \frac{\phi_i(v)}{\phi_j(v)}.
\end{equation}
\end{theorem}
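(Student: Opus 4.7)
The core idea is that when the common ``shock scale'' $t$ grows large, the ordering of the $\l_i$'s is dictated by the ordering of the i.i.d.\ components $\mu_1,\ldots,\mu_m$, so that all $m!$ orderings become asymptotically equiprobable. Once this is established, the pivot probability $\pi_i(\mathcal{R}^{m,q,t})$ reduces to a uniform average over orderings, which is precisely the definition of the Shapley value.

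\textbf{Step 1 (orderings converge to uniform).} Let $\ordering^{t}$ denote the (random) permutation that sorts the realizations of $\l_1,\ldots,\l_m$ from left to right, and let $\ordering^\mu$ be the analogous permutation sorting $\mu_1,\ldots,\mu_m$. The plan is to show that $\Pr(\ordering^{t}\ne \ordering^\mu)\to 0$ as $t\to\infty$. For each unordered pair $\{i,j\}$, the events $\{\l_i<\l_j\}$ and $\{\mu_i<\mu_j\}$ disagree only when $|\mu_i-\mu_j|\le |\tilde\epsilon_i-\tilde\epsilon_j|/t$. Fix $M>0$ and split according to whether $|\tilde\epsilon_i-\tilde\epsilon_j|\le M$ or not. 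The second piece is bounded by Chebyshev using the finite variances of the $\tilde\epsilon$'s. On the first piece, since $\mu_i$ has a density bounded by some constant $K$, the density of $\mu_i-\mu_j$ (a convolution with another bounded density) is itself bounded by $K$, so $\Pr(|\mu_i-\mu_j|\le M/t)\le 2KM/t$. Sending $t\to\infty$ and then $M\to\infty$ gives $\Pr(\{\l_i<\l_j\}\triangle\{\mu_i<\mu_j\})\to 0$, and a union bound over the finitely many pairs yields $\Pr(\ordering^{t}\ne\ordering^\mu)\to 0$. Since $\mu_1,\ldots,\mu_m$ are i.i.d.\ with a density, $\ordering^\mu$ is uniform on $S_m$, so $\ordering^{t}$ converges in distribution to the uniform law on $S_m$.

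\textbf{Step 2 (from orderings to pivot probability).} For any fixed ordering $\ordering\in S_m$ and quota $q^m=q\sum_j w_j$, the identity of the pivotal representative $P\: m$ defined by (\ref{eq:P_defined}) is a deterministic function $P(\ordering)$ of $\ordering$ only — it does not depend on the actual realizations of the $\l_i$'s, merely on the order in which their weights are summed. Consequently
\[
\pi_i(\mathcal{R}^{m,q,t})=\sum_{\ordering\in S_m}\Pr(\ordering^{t}=\ordering)\,\mathbf{1}\{P(\ordering)=i\}.
\]
Because this is a finite linear combination of the probabilities $\Pr(\ordering^{t}=\ordering)$ (and $S_m$ is finite), convergence of $\ordering^{t}$ to the uniform distribution on $S_m$ implies
\[
\lim_{t\to\infty}\pi_i(\mathcal{R}^{m,q,t})=\frac{1}{m!}\bigl|\{\ordering\in S_m:P(\ordering)=i\}\bigr|.
\]

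\textbf{Step 3 (identification with the Shapley value).} The right-hand side above is exactly the Shapley value $\phi_i(v)$ of the weighted voting game $v=[q^m;w_1,\ldots,w_m]$: the random-order formulation of $\phi$ counts the fraction of permutations in which $i$ is the unique player who, when appended to the predecessors' coalition, first lifts its total weight strictly above $q^m$. Taking ratios (with $\phi_j(v)>0$, since otherwise $\pi_j=0$ eventually and the statement is vacuous) delivers (\ref{eq:shockthm}).

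\textbf{Where the work lies.} Steps~2 and~3 are essentially definitional. The substantive content is Step~1, and within it the key obstacle is controlling the ``near-tie'' event $\{|\mu_i-\mu_j|\le|\tilde\epsilon_i-\tilde\epsilon_j|/t\}$ without any moment hypothesis on $\mu$ beyond a bounded density. The truncation of $\tilde\epsilon_i-\tilde\epsilon_j$ at level $M$ (handled by Chebyshev via finite variance) combined with the convolution bound $K$ on the density of $\mu_i-\mu_j$ is precisely what lets this probability be driven to zero; the proof will stand or fall on arranging those two estimates in the right order.
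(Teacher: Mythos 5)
Your proposal is correct and follows essentially the same route as the paper: show that, as $t\to\infty$, the ordering of the $\l_i$'s coincides with (the uniform) ordering of the $\mu_i$'s with probability tending to one, and then read off $\pi_i$ as the random-order (Shapley) count, since the pivot's identity depends only on the ordering. The only difference is cosmetic: the paper separates \emph{all} shocks at a $t$-dependent scale $t^{1/3}$ and bounds each $|\tilde\epsilon_i|$ by Chebyshev (yielding an explicit $O(t^{-2/3})$ rate), whereas you handle near-ties pairwise with a fixed truncation level $M$ and a double limit $t\to\infty$, $M\to\infty$ — both rest on the same two estimates (bounded density of the $\mu$'s, finite variance of the $\tilde\epsilon$'s).
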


\noindent The proof is provided in Appendix~B and formalizes that the respective orderings of representatives which are induced by $\l_1, \ldots, \l_m$ and by $t\cdot \mu_1, \ldots, t\cdot \mu_m$ tend to coincide when $t$ is large.\footnote{The density-driven intuition for Theorem~\ref{thm:main_result_for_shock_case} which is suggested by Figure~\ref{fig:density_illustration}(b) can also be made precise: under the additional assumption that the density $h$ of the shock terms $\mu_i$ is Lipschitz continuous, the density functions of $\l_1,\ldots,\l_m$ converge uniformly to that of $t\cdot \mu_i$. A proof is available from the authors. }
The theorem does not presume that $\tilde \epsilon_i$ satisfies (\ref{eq:tilde_epsilon_defined}); the limit (\ref{eq:shockthm}) applies also if $\l_i$ is determined, e.g., by an oligarchy instead of the median voter of $\mathcal{C}_i$.
It is, moreover, worth noting that Theorem~\ref{thm:main_result_for_shock_case} does not impose any conditions like Theorem~\ref{thm:main_theorem} on densities
$g_1, \ldots, g_m$ or voting weights $w_1, \ldots, w_m$ in assembly $\mathcal{R}^{m,q}$.
The Shapley value $\phi(v)$ automatically
takes care of any combinatorial particularities associated with $w_1, \ldots, w_m$;
and the convolution with $t\cdot \mu_i$'s bounded density,
$\frac{1}{t} h\left(\frac{x}{t}\right)$, is sufficient to `regularize' any (even non-continuous) distribution $G_i$ of $\tilde \epsilon_i$.
Applying Theorem~\ref{thm:main_result_for_shock_case} to the specific context of two-tier voting, we can conclude:

\begin{corollary}[Linear Shapley rule]\label{cor:equal_representation_shock_case}
If the ideal points of voters are the sum of an individual component $\epsilon^l$ which is  i.i.d.\ for all $l\in\{1, \ldots, n\}$ and a constituency-specific component $\mu_i$ which is i.i.d.\ for all $i\in \{1,\ldots, m\}$, representative~$i$'s ideal point equals the median voter's ideal point in constituency $\mathcal{C}_i$ for all $i\in \{1, \ldots, m\}$, and $\mu_i$'s variance is sufficiently great relative to that of $\epsilon^l$ then
\begin{equation}
(w_1, \ldots, w_m) \textnormal{\ \,such that\ \,} \phi(q^m;w_1,\ldots, w_m) \propto \left({n_1},\ldots, {n_m}\right)
\label{eq:shock_case_solution}
\end{equation}
achieves approximately equal representation for any given relative decision quota $q\in [0.5;1)$. % (as formalized by condition~(\ref{eq:OPOV_condition_restated})).
\end{corollary}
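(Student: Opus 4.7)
The plan is to derive the corollary as an essentially immediate consequence of Theorem~\ref{thm:main_result_for_shock_case}. First, I would recast the design criterion: by (\ref{eq:within_constituency_pivot_probabilities}) the voter-level goal $p^l/p^k\approx 1$ is equivalent to the top-tier condition $\pi_i(\mathcal{R}^{m,q})/\pi_j(\mathcal{R}^{m,q})\approx n_i/n_j$. The weights prescribed in (\ref{eq:shock_case_solution}) satisfy $\phi_i(v)/\phi_j(v)=n_i/n_j$ \emph{exactly} for $v=[q^m;w_1,\ldots,w_m]$, so the corollary reduces to showing that $\pi_i/\pi_j$ can be made arbitrarily close to $\phi_i(v)/\phi_j(v)$ when heterogeneity across constituencies is sufficiently pronounced relative to heterogeneity within.

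Second, I would translate the corollary's informal ``sufficiently great variance'' phrasing into the asymptotic framework of Theorem~\ref{thm:main_result_for_shock_case}. Writing $\mu_i = t\cdot \tilde\mu_i$ for i.i.d.\ $\tilde\mu_i$ with a fixed bounded density $h$ and unit variance, the regime $\mathrm{Var}(\mu_i)/\mathrm{Var}(\epsilon^l)\to\infty$ becomes $t\to\infty$ with the law of $\epsilon^l$ held fixed, and (\ref{eq:l_i_is_C_is_median}) gives the decomposition $\l_i = t\cdot\tilde\mu_i + \tilde\epsilon_i$ with $\tilde\epsilon_i=\textnormal{median\,}\{\epsilon^l\colon l\in\mathcal{C}_i\}$, matching (\ref{eq:lambda_decomposed_with_t})--(\ref{eq:tilde_epsilon_defined}) verbatim. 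I would then verify the three hypotheses of Theorem~\ref{thm:main_result_for_shock_case}: the $\tilde\mu_i$ share a bounded density $h$ by construction; each $\tilde\epsilon_i$ has finite mean and variance because a sample median of i.i.d.\ variables inherits finite second moments from its parent; and mutual independence of $\{\tilde\mu_i\}_i\cup\{\tilde\epsilon_i\}_i$ holds because the $\tilde\epsilon_i$ are measurable with respect to disjoint sub-families of the independent primitives $\{\epsilon^l\}_{l=1}^n$ and are independent of $\{\mu_i\}_i$.

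Applying Theorem~\ref{thm:main_result_for_shock_case} then yields
$\pi_i(\mathcal{R}^{m,q,t})/\pi_j(\mathcal{R}^{m,q,t})\to \phi_i(v)/\phi_j(v) = n_i/n_j$ as $t\to\infty$
for any fixed $m$ and any quota $q\in[0.5;1)$, so for any prescribed tolerance there is a threshold on $\mathrm{Var}(\mu_i)/\mathrm{Var}(\epsilon^l)$ beyond which the weights in (\ref{eq:shock_case_solution}) fulfil (\ref{eq:OPOV_condition_restated}). The only non-mechanical step is the conceptual identification in the second paragraph between ``$\mu_i$'s variance sufficiently great'' and the scaling-parameter limit $t\to\infty$ of Theorem~\ref{thm:main_result_for_shock_case}; once that identification is made, everything else is routine, and no further case analysis on $q$ or on the combinatorial structure of $(w_1,\ldots,w_m)$ is required because the Shapley value $\phi(v)$ absorbs all of it. This also explains why, in contrast with Corollaries~\ref{cor:general_solution_simple} and \ref{cor:equal_representation_iid}, no smoothness conditions on the law of $\epsilon^l$ and no large-$m$ assumption need to be invoked.
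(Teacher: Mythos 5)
Your proposal is correct and takes essentially the same approach as the paper, which presents Corollary~\ref{cor:equal_representation_shock_case} as an immediate application of Theorem~\ref{thm:main_result_for_shock_case}: identify ``$\mu_i$'s variance sufficiently great'' with the scaling limit $t\to\infty$ in $\l_i=t\cdot\mu_i+\tilde\epsilon_i$, verify the theorem's independence, finite-moment, and bounded-density hypotheses, and combine the resulting limit $\pi_i(\mathcal{R}^{m,q,t})/\pi_j(\mathcal{R}^{m,q,t})\to\phi_i(v)/\phi_j(v)$ with the choice $\phi(v)\propto(n_1,\ldots,n_m)$ and the voter-level reduction via equation~(\ref{eq:within_constituency_pivot_probabilities}). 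Your explicit checks~-- that the constituency medians $\tilde\epsilon_i$ inherit finite second moments from $\epsilon^l$ and are mutually independent as functions of disjoint independent families~-- merely make precise what the paper leaves implicit.
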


The indirect representation of bottom-tier voters which is achieved by this linear Shapley rule can fail to be reasonably egalitarian
%in the sense that it does not give each citizen an equally effective voice in the election
when $m$ is small, the distribution of constituency sizes is extremely skewed or has small variance, or when $q$ is close to 1.
This is because the so-called \emph{inverse problem} of finding weights which induce the desired Shapley value often fails to have a good solution in these cases.\footnote{This is easily seen, e.g., by considering constituencies of different sizes $n_1, \ldots, n_m$ and a relative quota $q\approx 1$ which essentially imposes unanimity rule; or by considering just $m=3$ constituencies, so that the only feasible Shapley values are~-- up to isomorphisms~-- $(1/3, 1/3, 1/3)$, $(2/3, 1/6, 1/6)$ and $(1, 0, 0)$. A new approach to solving the inverse problem exactly by using \emph{integer linear programming} has been proposed by \citeN{Kurz:2012}.}
Still, provided that the considered heterogeneity across constituencies is sufficiently bigger than the heterogeneity within, the indirect representation achieved by (\ref{eq:shock_case_solution}) is as egalitarian as possible.

Whether Corollary~\ref{cor:equal_representation_shock_case} for the case of noticeable preference affiliation within constituencies or Corollarly~\ref{cor:equal_representation_iid} for the i.i.d.\ case provides better guidance for designing a fair two-tier voting system in practice is hard to say.
Some preference homogeneity within and dissimilarity across constituencies seems plausible~-- whether as the result of a sorting process (`voting with one's feet') \`{a} la \citeN{Tiebout:1956}, due to cultural uniformity fostered by geographical proximity and local policies (see  \citeNP{Alesina/Spolaore:2003}), or for other reasons.
If constituencies correspond to entire nations, as in case of the EU~Council or ECB~Governing Council, citizens of a given constituency typically share more historical experience, traditions, language, communication etc.\ within constituencies than across.
(This plausibly is the key practical reason for why the issue of population size differences cannot trivially be resolved by redistricting in the first place.)
However, the collective decisions that are taken by the top-tier assembly might be primarily about issues where opinions range over the same liberal--conservative, markets--government, dove--hawk, etc.\ spectrum in all constituencies. Moreover, there might be normative reasons outside the scope of our analysis for \emph{pretending} that $\sigma_H^2=0$ even if it is not when one designs a presumably long-lasting, fair constitution.
We, therefore, avoid any specific recommendations here for, say, new voting rules in the EU~Council but warn that the i.i.d.\ presumption is more knife-edge and, therefore, seems to require special motivation.\footnote{A third alternative, inspired by the suggestion of ``flexible'' democratic mechanisms in other contexts (see Gersbach \citeyearNP{Gersbach:2005}, \citeyearNP{Gersbach:2009}), would be to specify different weighted voting rules for different policy domains. In some policy areas, such as competition policy, small or unstable between-constituency differences may call for square root weights; while fair decision making in other policy domains, such as agriculture or fisheries~-- with heterogenous shares of farmable land and some members landlocked, others islands~-- could involve linear weights.}

%%%%%%%%%%%%%%%%%%%%%%%%%%%%%%%%%%%%%%%%%%%%%%%%%%%%%%%%%%%%%%%%%%%%%%%%%%%%%%
\section{Concluding remarks}\label{sec:Conclusion}

This paper has developed two limit results for the probability of being a decisive voter in order to address the issue of egalitarian representation of individuals in a two-tier voting system, such as the EU Council or the US Electoral College. Our concern was the equalization of the \emph{indirect influence} which bottom-tier voters can be expected to have on the collective decision in case of a one-dimensional convex policy space. The square root rule has played a prominent role in the related political discussion in the EU as well as the scientific discussion of \emph{binary} policy environments.
It was suggested to apply also more generally by the simulations of \citeN{Maaser/Napel:2007}.

We now provide it with a sound analytical foundation in a \emph{median voter} environment (Corollary~\ref{cor:equal_representation_iid}).
However, the somewhat counterintuitive square root rule turns out to have limited robustness. It does not extend to supermajority rules; it does badly in case of positive correlation of the ideal points at the constituency level. A linear rule quickly performs better and becomes optimal for sufficiently strong similarity within constituencies.

This dichotomy is, in some sense, not very surprising. The extensive literature on optimal voting weight allocations for binary policy alternatives has, for various objective functions, brought about either a square root or a linear rule (with few exceptions). Square root rules typically follow from far-reaching homogeneity and independence assumptions, while a linear rule is called for in case of dependence and significant across-constituency heterogeneity. For instance, \citeN{Kirsch:2007} finds square root weights to minimize the extent of disagreement between the council's binary decision and the popular vote with independent `yes' or `no' votes, but a linear rule if a sufficiently strong ``collective bias'' of the voters within each constituency is introduced.
The utilitarian design objective of \citeN{Barbera/Jackson:2006} calls for square root weights in their ``fixed-size-block model'', while they derive a linear rule in a ``fixed-number-of-blocks model'' which divides each constituency into the same number of blocks of identical voters.\footnote{The fixed-size-block model conceives of constituencies as consisting of many equally sized blocks of individuals whose preferences are perfectly correlated within a block and independent across blocks. The existence of such blocks~-- like those in the fixed-number-of-blocks model~-- would in our setup imply that, generically, no individual voter is ever pivotal in his constituency. Still, Theorems~\ref{thm:main_theorem} and \ref{thm:main_result_for_shock_case} could be used to characterize pivot probabilities of the respective representatives.}
\citeN{Beisbart/Bovens:2007} come to a very similar conclusion when trying to maximize welfare in another binary model: with i.i.d.\ utility parameters and simple majority rule, square root weights maximize total expected utility (and equalize it across citizens). But if an individual's utility is perfectly correlated with more other individuals the larger their constituency, then the square root rule quickly makes way for a proportional one.%\footnote{It needs to be replaced by a uniform weight allocation if the design objective in \citeANP{Beisbart/Bovens:2007}'s model is to equalize all individuals' expected utilities.}

So, using a very different and flexible framework, the corollaries  derived from two new limit results for interval policy spaces echo a pattern that has emerged also in the literature on binary two-tier voting systems. As originally argued by \citeN{Penrose:1946}, ex~ante independent and identical voters call for a voting weight allocation rule based on the square root of population sizes. However, sufficiently strong dissimilarity between constituencies renders most people's basic intuition correct~-- plain proportionality does the trick.
%One may feel tempted to question whether it really should be the EU~Council rather than the US~Electoral College~-- if \emph{any} of the two institutions~-- which uses approximately square root rather than linear weights.
%We leave the reader to ponder whether this calls for a broadly linear rule in the US~Electoral College but approximately square root weights in the EU~Council.

%%%%%%%%%%%%%%%%%%%%%%%%%%%%%%%%%%%%%%%%%%%%%%%%%%%%%%%%%%%%%%%%%%%%%%%%%%
\newpage
\appendix
\appendixpage
\section{Proof of Theorem~\ref{thm:main_theorem}}\label{app:proof_main_theorem}

\begin{theoremiid*}
Consider an infinite chain $\mathcal{R}^{1}\subset \mathcal{R}^{2} \subset \mathcal{R}^{3}\subset \ldots$ of assemblies which involves a finite number $r$ of representative types, i.e., there exists a mapping $\tau\colon \naturalnumbers \to \{1, \ldots, r\}$ such that $\tau(j)=\t$ implies that $\l_j$ has density $f_\t$ and $w_j=w_\t\ge 0$.
Let the share of each type be bounded away from zero, i.e.,  there exist $\beta>0$ and $m^0\in \naturalnumbers$ such that $\beta_\t(m)\equiv |\mspace{1.5mu}\{k\in \{1, \ldots, m\}\colon \tau(k)=\t\}\mspace{2mu}|\mspace{2mu}/m \geq \beta>0$ for all $m\geq m^0$.
If for each $\t\in \{1, \ldots, r\}$ the distribution $F_\t$ has median $\median$ and its density $f_\t$ satisfies $f_\t(\median)>0$ with $|f_\t(x)-f_\t(\median)|\leq cx^2$ on a non-empty interval $[\median-\eps_1,\median+\eps_1]$ for some $c\geq 0$ then for $w_j>0$
\begin{equation*}
\lim_{m\to \infty}\frac{\pi_i(\mathcal{R}^{m})}{\pi_j(\mathcal{R}^{m})} = \frac{w_i f_{i}(\median)}{w_j f_{j}(\median)}.\tag{\ref{eq:thm}}
\end{equation*}
\end{theoremiid*}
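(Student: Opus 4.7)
The plan is to combine a Hoeffding concentration argument with a conditional subgame analysis, exploiting the quadratic density bound to get near-exchangeability inside a shrinking neighborhood of the common median, and then applying Neyman's (1982) uniform convergence result for the Shapley value. Assume w.l.o.g.\ that $\median=0$. The guiding idea is that as $m\to\infty$, the weighted median $\l_{P\:m}$ concentrates in an ever-smaller neighborhood of $0$, so pivotality of representative~$i$ essentially requires $\l_i$ itself to be near $0$; inside such a neighborhood the representatives look asymptotically like near-uniform draws, so their ordering is almost exchangeable and $i$'s conditional pivot probability becomes a Shapley value.

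The concentration step uses Hoeffding directly. The event $\{\l_{P\:m}>\eps\}$ is equivalent to $\sum_k w_k\mathbf{1}\{\l_k\leq \eps\}<q^m=W/2$, but this sum has mean exceeding $W/2$ by at least $\sum_k w_k[F_{\tau(k)}(\eps)-\tfrac{1}{2}]\geq \delta W$ for some $\delta=\delta(\eps)>0$ (because $f_\t(0)>0$ for each of the finitely many types together with the median condition). Since weights are bounded by type, Hoeffding gives $\Pr(|\l_{P\:m}|>\eps)\leq e^{-\Omega(m)}$. I then partition representatives into $L=\{k:\l_k<-\eps\}$, $S=\{k:\l_k\in[-\eps,\eps]\}$, $R=\{k:\l_k>\eps\}$ with total weights $W_L,W_S,W_R$: the pivotal representative almost surely lies in $S$, and $i\in S$ is pivotal in $\mathcal{R}^m$ iff $i$ is pivotal in the induced subgame on $S$ with quota $q^m-W_L$. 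A second Hoeffding bound shows that $W_L$ and $W_R$ concentrate near $(W-W_S)/2$, so the sub-quota lies comfortably inside $(0,W_S)$, and each type $\t$ retains a share inside $S$ bounded away from zero.

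The third step leverages the quadratic bound to approximate the within-$S$ distribution by a uniform one. Using $|f_\t(x)-f_\t(0)|\leq cx^2$ together with $F_\t(\eps)-F_\t(-\eps)=2\eps f_\t(0)(1+O(\eps^2))$, the conditional density of each $\l_k\in[-\eps,\eps]$ is within a multiplicative factor $1+O(\eps^2)$ of $\tfrac{1}{2\eps}$. The conditional joint density of ideal points in $S$ thus differs from uniform by Radon--Nikodym factor at most $(1\pm C\eps^2)^{|S|}$. Choosing $\eps=\eps(m)\to 0$ so that $|S|\asymp \eps m\to \infty$ yet $\eps^2|S|\asymp \eps^3 m\to 0$ (e.g.\ $\eps=m^{-2/5}$), this factor tends to $1$, so all orderings of $S$ are asymptotically equiprobable, and the conditional probability that $i$ is pivotal in the subgame agrees up to $o(1)$ with the Shapley value $\phi_i$ of the random weighted voting game $[q^m-W_L;(w_k)_{k\in S}]$.

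Finally, Neyman's uniform convergence result applies to this random subgame: bounded weights, non-vanishing type shares, and a sub-quota in $[\delta' W_S,(1-\delta')W_S]$ (for some $\delta'>0$ with high probability) together yield $\phi_i=(w_i/W_S)(1+o(1))$ uniformly. Combining with $\Pr(\l_i\in[-\eps,\eps])=2\eps f_i(\median)(1+O(\eps^2))$ gives
\[\pi_i(\mathcal{R}^m)=2\eps f_i(\median)\cdot \E\!\left[\frac{w_i}{W_S}\,\Big|\,i\in S\right]\cdot (1+o(1)),\]
and the analogous identity for $j$ yields (\ref{eq:thm}) upon dividing, since the $2\eps$ factor and the common $W_S$-expectation cancel. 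The main obstacle is the simultaneous calibration of $\eps(m)\to 0$: it must shrink fast enough that the density-to-uniform error $\eps^2|S|$ vanishes, yet slowly enough that $|S|\to \infty$ with each type retaining a non-vanishing share of $S$, so that Neyman's theorem can be applied to the random subgame with a uniform error bound; verifying that the high-probability events on $W_L$ and on the type composition of $S$ can be intersected without destroying any of the three vanishing error terms is the heart of the argument.
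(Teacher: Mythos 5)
Your strategy coincides with the paper's own proof in all essentials: Hoeffding concentration of the weighted median in a shrinking window $[\median-\eps(m),\median+\eps(m)]$, near-exchangeability of the orderings inside the window obtained from the quadratic density bound under the calibration $\eps(m)\,m\to\infty$ and $\eps(m)^3 m\to 0$ (the paper takes $\eps(m)=m^{-3/8}$; your $m^{-2/5}$ also works, with the caveats that the concentration bound then degrades to $e^{-\Omega(m^{1/5})}$ rather than $e^{-\Omega(m)}$ and that the Hoeffding deviation scale must be chosen strictly between the natural fluctuation order and $\eps(m)m$, exactly the bookkeeping the paper does with its exponents $-2/5$ and $-3/8$), identification of the conditional pivot probability with a Shapley value of the random subgame on $S$, and finally Neyman (1982) plus the factor $\Pr(\l_i\in[-\eps,\eps])\approx 2\eps f_i(\median)$.

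The genuine gap is your Neyman step. Neyman's theorem delivers an \emph{absolute} error bound, $\sum_i\lvert \hat w_i-\phi_i\rvert<\varepsilon$ (uniformly over coalitions), not a multiplicative one; since an individual's relative weight $w_i/W_S$ is itself of order $1/\lvert S\rvert\to 0$, the claim $\phi_i=(w_i/W_S)(1+o(1))$ for a single player does not follow from it, and individual weight-proportionality of the Shapley value can in fact fail in general (Lindner and Owen 2007). Your final ratio needs precisely a vanishing \emph{relative} error. The repair — for which you already have the key ingredient but do not use it for this purpose — is the paper's Steps 4--5: apply Neyman to the \emph{aggregate} Shapley value of all type-$\t$ representatives inside $S$; because each type keeps a share of $S$ bounded away from zero, that aggregate relative weight is bounded away from zero, so the absolute error converts into a relative one at the type level, and one then disaggregates to individuals by within-type symmetry (all type-$\t$ representatives are exchangeable and hence have identical pivot probabilities). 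Note that the non-vanishing type shares are needed exactly for this conversion, not as a hypothesis of Neyman's theorem, which only requires the sub-quota to be bounded away from the extremes relative to the maximal weight. A smaller point of the same nature: $\E[w_i/W_S\mid i\in S]$ and $\E[w_j/W_S\mid j\in S]$ do not cancel exactly, since the conditioning events differ; they cancel only asymptotically because any single weight is negligible relative to $W_S$, which again must be argued as a relative-error statement.
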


\subsection{Overview}

Let us first give an overview of the five steps of the proof. In \emph{Step~1}, we define a particular neighborhood $I_m$ of the expected location of the weighted median of $\l_1, \ldots, \l_m$. This \emph{essential interval} shrinks to $\{\median\}$ as $m\to \infty$. It is constructed such that the probabilities $p_\t$, $\ptIl$, and $\ptIr$ of a type-$\t$ representative's ideal point falling inside $I_m$, inside $I_m$'s left half, or inside $I_m$'s right half, respectively, can suitably be bounded.
Moreover, we decompose the deterministic total number $m_\t=\beta_\t(m)\cdot m$ of type-$\t$ representatives in assembly $\mathcal{R}^m$ into the random numbers $\ntlinks$, $\ntI$, and $\ntrechts$ of delegates with ideal points to $I_m$'s left, inside $I_m$, and to $I_m$'s right.
Knowing the respective vector $\nvector=(\nlinks{1},\nmitte{1},\nrechts{1},\dots,\nlinks{r},\nmitte{r},\nrechts{r})$  will be sufficient to determine whether the Condorcet winner is located inside  $I_m$ or not.

In \emph{Step~2}, it is established that the weighted median of $\l_1, \ldots, \l_m$ is located inside the essential interval $I_m$ with a probability that quickly approaches~1 as $m\to \infty$.
As a corollary, the probability $\pi^\t(\mathcal{R}^m)$ of the Condorcet winner having type~$\t$ converges to the corresponding \emph{conditional} probability $\pi^\t(\mathcal{R}^m | \mathcal{K})$ of a type-$\t$ representative being pivotal where event $\mathcal{K}$ comprises all realizations of $\nvector$ such that $\mathcal{R}^m$'s weighted median lies inside $I_m$.

In \emph{Step~3}, we show that the random orderings of the $k=\sum_{\t\in\{1, \ldots, r\}} k_\t$ representatives with ideal point realizations $\l_i\in I_m$ asymptotically become equiprobable as $m\to \infty$. It follows that, with a vanishing error, the respective conditional pivot probability $\pi^\t(\mathcal{R}^m | \mathcal{K})$ equals the expected aggregate Shapley value of type-$\t$ representatives in $I_m$.

In \emph{Step~4}, the strong convergence result for the Shapley value by \citeN{Neyman:1982} is applied to our setting. Neyman's result implies that the aggregate Shapley value of type-$\t$ representatives with ideal points in $I_m$ converges to their respective aggregate voting weight in each considered weighted voting `subgame' among the representatives with ideal points $\l_i\in I_m$.

Having established that $\pi^\t(\mathcal{R}^m)$ is asymptotically proportional to the aggregate voting weight of all type-$\t$ representatives with ideal points inside $I_m$, aggregate probabilities are attributed to individual representatives in the final \emph{Step~5}.

\subsection{Proof}
\subsubsection*{Step 1: Essential interval $\boldsymbol{I_m}$ and vector $\nvector$}
We begin by identifying a neighborhood of $\median$ and a sufficiently great number of representatives such that both the densities $f_\t$ and the numbers of type-$\t$ representatives in $\mathcal{R}^m$ can suitably be bounded. This leads to the definition of intervals $I_m$ around $\median$ which later steps will focus on. Bounds for the probabilities of a type-$\t$ representative's ideal point falling inside $I_m$, and more specifically into $I_m$'s left or right halves, are provided in Lemma~\ref{lem:ptheta_bounds}. The final part of \emph{Step~1} introduces the vector $\nvector$ as a type-specific summary of how many ideal points are located to the left of $I_m$, inside $I_m$, and to its right.

First note that
\begin{align}\label{eq:inequ1}
0<\uline{u} \equiv \min_{\t'\in \{1,\ldots,r\}} f_{\t'}(\median) \leq {f}_{\t}(\median)\leq \overline{u}\equiv \max_{\t'\in \{1,\ldots,r\}} f_{\t'}(\median)
\end{align}
for every $\t\in \{1, \ldots, r\}$. Using the continuity of $f_\t$ in a neighborhood of $\median$, which is implied by $|f_\t(x)-f_\t(\median)|\leq cx^2$, we can choose $0<\eps_2\leq \eps_1$ such that
\begin{align}\label{ie:bounds_f}
\frac{5}{6}{f}_{\t}(\median) \leq {f}_{\t}(x)\leq \frac{7}{6}{f}_{\t}(\median)
\end{align}
for all $x\in [\median-\eps_2,\median+\eps_2]$ and any specific $\t\in \{1, \ldots, r\}$.
Inequality (\ref{eq:inequ1}) can be used in order to obtain bounds
\begin{align}\label{ie:bounds_u}
\frac{1}{2}\uline{u} \leq {f}_{\t}(x)\leq 2\overline{u}
\end{align}
for all $x\in [\median-\eps_2,\median+\eps_2]$ and all $\t\in \{1, \ldots, r\}$ which do not depend on $\t$. Due to the
existence of $m^0$ we can also choose $0<\eps_3\leq \eps_2$ such that
\begin{align}\label{ie:bounds_beta}
\beta_\t(m)\ge \beta>0
\end{align}
for all $m\geq \frac{1}{{\eps_3}^{8/3}}$ and all $\t\in \{1, \ldots, r\}$. And we can determine $0<\eps_4\leq \eps_3$ such that
\begin{align}\label{ie:technical_bound_constants}
24<\uline{u}\beta\cdot (m \beta)^{\frac{1}{40}}\le \uline{u}\beta m_\t^{\frac{1}{40}}
\end{align}
for all $m\geq \frac{1}{{\eps_4}^{8/3}}$, where $m_\t\equiv\beta_\t(m)\cdot m$.

Then define
\begin{align}\label{eq:epsmdef}
\eps(m)\equiv m^{-\frac{3}{8}}
\end{align}
and note that $\eps(m)\leq \eps_4$ iff $m\geq m^1\equiv \frac{1}{{\eps_4}^{8/3}}\ge m^0$. So, whenever we consider a sufficiently large number of representatives (specifically, $m\geq m^1$), %$f_\t(x)$ is continuous on $[-\eps(m), \eps(m)]$ and
inequalities (\ref{ie:bounds_f})--(\ref{ie:technical_bound_constants}) are satisfied.
We refer to
\begin{equation}
I_m\equiv [\median-\eps(m), \median+\eps(m)]
\end{equation}
as the \emph{essential interval}.
The probability of an ideal point of type $\t$ to fall inside $I_m$ is
\begin{align}
\ptI\equiv\int\limits_{\median-\varepsilon(m)}^{\median+\varepsilon(m)} f_\t(x)dx.
\end{align}
For realizations in the left and right halves of $I_m$ we respectively obtain the probabilities
\begin{equation}\label{eq:ptheta_left_right_defined}
\ptIl\equiv\int\limits_{\median-\varepsilon(m)}^{\median} f_\t(x)dx
\quad \text{and}\quad \ptIr\equiv\int\limits_{\median}^{\median+\varepsilon(m)} f_\t(x)dx,
\end{equation}
with $\ptIl+\ptIr=\ptI$.

\begin{lemma}\label{lem:ptheta_bounds}
For $m\geq m^1$ we have
\begin{eqnarray}
\frac{5}{3}f_\t(\median) \varepsilon(m)\le & \ptI & \le \frac{7}{3}f_\t(\median) \varepsilon(m), \label{eq:ptheta_bound_f}\\
\frac{5}{6}f_\t(\median) \varepsilon(m) \le & \ptIl,\, \ptIr & \le \frac{7}{6}f_\t(\median) \varepsilon(m), \label{eq:ptheta_left_right_bounds_A} \\
  \underline{u}\beta m_\t^{-\frac{3}{8}} \le %\underline{u}\beta^{\frac{3}{8}}m_\t^{-\frac{3}{8}}\le
%  \underline{u}m^{-\frac{3}{8}}=\underline{u}\varepsilon(m)\le
  & \ptI & \le
% 4\overline{u}\varepsilon(m)=4\overline{u}m^{-\frac{3}{8}}
%  \le
4\overline{u}m_\t^{-\frac{3}{8}}, \label{eq:ptheta_bound} \text{\ \ and}\\
\frac{1}{2} \underline{u}\beta m_\t^{-\frac{3}{8}}\le  & \ptIl,\, \ptIr & \le 2\overline{u}m_\t^{-\frac{3}{8}}. \label{eq:ptheta_left_right_bounds_B}
\end{eqnarray}
\end{lemma}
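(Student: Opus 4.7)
The plan is to read all four bounds directly off the two pointwise density estimates that Step~1 has already established on the essential interval $I_m=[\median-\varepsilon(m),\median+\varepsilon(m)]$: the sharp bound $\tfrac{5}{6}f_\t(\median)\le f_\t(x)\le \tfrac{7}{6}f_\t(\median)$ from (\ref{ie:bounds_f}) and the type-uniform bound $\tfrac{1}{2}\underline{u}\le f_\t(x)\le 2\overline{u}$ from (\ref{ie:bounds_u}), both of which hold throughout $I_m$ as soon as $m\ge m^1$. Since $\ptIl$ and $\ptIr$ are integrals of $f_\t$ over subintervals of length $\varepsilon(m)$, and $\ptI=\ptIl+\ptIr$ is the integral over the whole interval of length $2\varepsilon(m)$, each of the eight claimed inequalities reduces to multiplying the appropriate pointwise bound on $f_\t$ by the relevant interval length.

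First I would derive (\ref{eq:ptheta_bound_f}) and (\ref{eq:ptheta_left_right_bounds_A}) by applying (\ref{ie:bounds_f}) directly: for the halves this sandwiches $\ptIl,\ptIr$ between $\tfrac{5}{6}f_\t(\median)\varepsilon(m)$ and $\tfrac{7}{6}f_\t(\median)\varepsilon(m)$, and summing the two halves yields the matching bound for $\ptI$.

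Next I would translate the factor $\varepsilon(m)=m^{-3/8}$ into the $m_\t^{-3/8}$ form required by (\ref{eq:ptheta_bound}) and (\ref{eq:ptheta_left_right_bounds_B}). From $m_\t=\beta_\t(m)\,m\le m$ one has $\varepsilon(m)\le m_\t^{-3/8}$, so the upper bound $f_\t(x)\le 2\overline{u}$ immediately yields $\ptI\le 4\overline{u}\,m_\t^{-3/8}$ and $\ptIl,\ptIr\le 2\overline{u}\,m_\t^{-3/8}$. For the lower direction $m_\t\ge \beta m$ gives $\varepsilon(m)=m^{-3/8}\ge \beta^{3/8}\,m_\t^{-3/8}\ge \beta\, m_\t^{-3/8}$, where the last step uses that $\beta\le 1/r\le 1$ (since $r$ disjoint type-shares each at least $\beta$ sum to at most $1$); combined with $f_\t(x)\ge \tfrac{1}{2}\underline{u}$ this produces the stated lower bounds $\underline{u}\beta\, m_\t^{-3/8}$ for $\ptI$ and $\tfrac{1}{2}\underline{u}\beta\, m_\t^{-3/8}$ for each half.

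I anticipate no real obstacle: all the substantive work — the careful choice of $\varepsilon_1,\ldots,\varepsilon_4$ and $m^1$ ensuring that the two density bounds and the replica-share bound $\beta_\t(m)\ge \beta$ hold simultaneously on $I_m$ for $m\ge m^1$ — was already carried out in Step~1, so Lemma~\ref{lem:ptheta_bounds} amounts to a bookkeeping exercise that repackages those preparations in the form needed by the subsequent steps.
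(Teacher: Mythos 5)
Your proposal is correct and follows essentially the same route as the paper, whose proof simply cites the density bounds (\ref{ie:bounds_f})--(\ref{ie:bounds_beta}) together with $m_\t=\beta_\t(m)\,m$ and $\beta<1$; you merely spell out the integration over $I_m$ and its halves and the conversion $\varepsilon(m)=m^{-3/8}\ge\beta^{3/8}m_\t^{-3/8}\ge\beta\,m_\t^{-3/8}$ explicitly. All steps check out, so nothing is missing.
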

\begin{proof}
The inequalities can be concluded from (\ref{ie:bounds_f})--(\ref{ie:bounds_beta}), $m_\t=\beta_\t m$, and $\beta<1$.
\end{proof}

Now for any realization $\boldsymbol{\l}$ of the ideal points in assembly $\mathcal{R}^{m}$, let
\begin{align}
  \ntI \equiv\#\{j\colon \tau(j)=\t \mbox{ and } \l_j\in [\median-\eps(m), \median+\eps(m)]\}
\end{align}
denote the number of type-$\t$ representatives with a policy position in the essential interval, i.e., no more than $\eps(m)$ away from the expected sample median $\median$. Analogously, let
\begin{align}
  \ntlinks  \equiv\#\{j\colon \tau(j)=\t \mbox{ and } \l_j\in (-\infty, \median-\eps(m))\}
\end{align}
and
\begin{align}
  \ntrechts  \equiv\#\{j\colon \tau(j)=\t \mbox{ and } \l_j\in (\median+\eps(m), \infty) \}
\end{align}
denote the random number of type-$\t$ representatives to the left and to the right of $I_m$.

One can conceive of $\boldsymbol{\l}$-realizations as the results of a two-part random experiment: in the first part, it is determined for
each $\l_j$ whether it is located to the right of $I_m$, to its left, or inside $I_m$, e.g., by drawing a vector $\boldsymbol{l}=(l_1, \ldots, l_m)$ of
independent random variables where $l_i=1$ $(-1)$ indicates a realization of $\l_i$ to the right (left) of $I_m$ and $l_i=0$ indicates $\l_i\in I_m$ (with probabilities  $\frac{1}{2}-\ptIl$, $\frac{1}{2}-\ptIr$, and $\ptI$,
respectively).
This already fixes $\ntlinks$, $\ntI$, and $\ntrechts$ for each $\t\in \{1, \ldots, r\}$ and is summarized by the vector
\begin{equation}
\nvector=(\nlinks{1},\nmitte{1},\nrechts{1},\dots,\nlinks{r},\nmitte{r},\nrechts{r}).
\end{equation}
In the second part, the exact ideal point locations are drawn. It will turn out that those outside $I_m$ can be ignored with vanishing error; and
the $k_\t$ type-$\t$ ideal points inside have conditional densities $\hat {f_\t}$ with
\begin{align}\label{eq_conditional_densitiy}
\hat {f_\t}(x)\equiv \frac{f_{\t}(x)}{\ptI} \quad \mbox{for } x\in I_m.
\end{align}

\subsubsection*{Step 2: Type $\boldsymbol{\t}$'s aggregate pivot probability $\boldsymbol{\pi^\t(\mathcal{R}^{m})}$ converges to the conditional probability $\boldsymbol{\pi^\t(\mathcal{R}^{m}|\mathcal{K})}$ of type $\boldsymbol{\t}$ being pivotal in $\boldsymbol{I_m}$}
We next appeal to \emph{Hoeffding's inequality}\footnote{See \citeN[Theorem~2]{Hoeffding:1963}.} in order to obtain bounds on the probability that the shares of representatives $\frac{\ntlinks}{m_\t}$, $\frac{\ntI}{m_\t}$, and $\frac{\ntrechts}{m_\t}$ with ideal points to the left, inside, or right of $I_m$ deviate by more than a specified distance from their expectations.
These bounds will imply that one can condition on the pivotal ideal point lying inside $I_m$ in later steps of the proof with an exponentially decreasing error.

Hoeffding's inequality concerns the average $\overline{X}\equiv \frac{1}{n}\cdot\sum\limits_{i=1}^n X_i$ of $n$ independent bounded random variables $X_i\in[a_i,b_i]$ and guarantees
\begin{align}\label{eq:Hoeffding}
  \Pr\left\{\left|\overline{X}-\E[\overline{X}]\right|>t\right\}\le 2\exp\left(\frac{-2t^2n^2}{\sum\limits_{i=1}^n(b_i-a_i)^2}\right).
\end{align}
Our specific construction will involve only random variables $X_i\in[0,1]$, so that
\begin{align}\label{eq:Hoeffding01}
  \Pr\left\{\left|\overline{X}-\E[\overline{X}]\right|>t\right\}\le 2\exp\left(-2t^2n\right).
\end{align}
We will put $n=m_\t$ for a fixed $\t\in \{1, \ldots, r\}$, so that $n\to\infty$ as $m\to \infty$, and choose $t=n^{-\frac{2}{5}}$, which implies $t(n)\ll \eps(m)$. For this choice
\begin{align}\label{eq:Hoeffding_for_us}
  \Pr\left\{\left|\overline{X}-\E[\overline{X}]\right|>n^{-\frac{2}{5}}\right\}\le 2\exp\left(-2n^{\frac{1}{5}}\right),
\end{align}
i.e., the probability of ``extreme realizations'' exponentially goes to zero as $m\to \infty$ (and hence $n=m_\t \to \infty$).

\begin{lemma}
\label{lemma:concentration_per_type}
For each $\t\in\{1,\dots, r\}$ we have:
\begin{equation*}
\renewcommand{\arraystretch}{2.5}
\begin{array}{llcl}
\textnormal{(I)} &
\Pr\left\{\frac{\ntlinks}{m_\t}\in \left[
\frac{1}{2}-\ptIl
- m_\t^{-\frac{2}{5}},
\frac{1}{2}-\ptIl
+ m_\t^{-\frac{2}{5}}\right]\right\}
&\geq& 1-2\exp\left(-2m_\t^{\frac{1}{5}}\right) \\
\textnormal{(II)} &
\Pr\left\{\frac{\ntI}{m_\t}\in \left[\ptI- m_\t^{-\frac{2}{5}}, \ptI+ m_\t^{-\frac{2}{5}} \right] \right\}
&\geq& 1-2\exp\left(-2m_\t^{\frac{1}{5}}\right) \\
\textnormal{(III)} &
\Pr\left\{\frac{\ntrechts}{m_\t}\in \left[
\frac{1}{2}-\ptIr
- m_\t^{-\frac{2}{5}},
\frac{1}{2}-\ptIr
+ m_\t^{-\frac{2}{5}}\right]\right\}
&\geq& 1-2\exp\left(-2m_\t^{\frac{1}{5}}\right).
\end{array}
\renewcommand{\arraystretch}{1}
\end{equation*}
\end{lemma}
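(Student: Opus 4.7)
All three bounds are direct instances of the Hoeffding estimate in the form (\ref{eq:Hoeffding_for_us}), applied separately for each $\t\in\{1,\ldots,r\}$ to a sum of $m_\t$ independent indicator variables with values in $[0,1]$. The scheme is identical across parts~(I)--(III); only the event being counted and the resulting expectation change.

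Fix $\t$ and enumerate the type-$\t$ representatives as $j_1,\ldots,j_{m_\t}$. For part~(II), set $X_s\equiv\mathbf{1}\{\l_{j_s}\in I_m\}$. These indicators are independent (since ideal points across representatives are independent by the setup of the chain) and $\E[X_s]=\ptI$ by definition of $\ptI$. Since $\overline{X}=\ntI/m_\t$, applying (\ref{eq:Hoeffding_for_us}) with $n=m_\t$ and $t=m_\t^{-2/5}$ yields the claim verbatim.

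For part~(I), take $X_s^L\equiv\mathbf{1}\{\l_{j_s}<\median-\eps(m)\}$. Because $F_\t$ is absolutely continuous with theoretical median $\median$, $\Pr(\l_{j_s}<\median)=\tfrac12$, so
\[
\E[X_s^L]=\Pr(\l_{j_s}<\median)-\Pr(\median-\eps(m)\le\l_{j_s}<\median)=\tfrac12-\ptIl,
\]
using the definition of $\ptIl$ from (\ref{eq:ptheta_left_right_defined}). Hoeffding, with the same $n$ and $t$, now gives~(I). Part~(III) is entirely analogous with $X_s^R\equiv\mathbf{1}\{\l_{j_s}>\median+\eps(m)\}$ and $\E[X_s^R]=\tfrac12-\ptIr$.

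The argument is essentially bookkeeping around a textbook concentration bound, so no serious obstacle is anticipated. The two substantive checks are (i) independence of the indicators, which follows directly from independence of the $\l_j$ across $j$, and (ii) the identity $\Pr(\l_{j_s}<\median)=\tfrac12$ used in parts~(I) and~(III), which relies on $\median$ being the theoretical median of the absolutely continuous $F_\t$ (so that $\median$ carries no mass). Everything else is just plugging the correct expectation into (\ref{eq:Hoeffding_for_us}).
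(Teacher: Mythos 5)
Your proposal is correct and follows essentially the same route as the paper: for each type $\t$, apply the Hoeffding bound (\ref{eq:Hoeffding_for_us}) with $n=m_\t$ and $t=m_\t^{-2/5}$ to the independent indicator variables of the three intervals $(-\infty,\median-\eps(m))$, $I_m$, and $(\median+\eps(m),\infty)$, computing the expectations $\tfrac12-\ptIl$, $\ptI$, and $\tfrac12-\ptIr$ exactly as the paper does via $\int_{-\infty}^{\median}f_\t(x)\,dx=\tfrac12$. No gap to report.
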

\begin{proof}
Let $\t\in\{1,\dots,r\}$ be arbitrary but fixed. For statement~(I) we consider the $n=m_\t$ indices $j_1\dots,j_{m_\t}\in\{1,\dots, m\}$ of type $\t$
and denote by $X_i$ the random variable which is $1$ if the realization $\lambda_{j_i}$ lies inside the interval $(-\infty,\median-\varepsilon(m))$ and zero otherwise. In the
notation of Hoeffding's inequality we have $\overline{X}=\frac{\ntlinks}{m_\t}$.
Since the probability that $\lambda_{j_i}$ lies in the left half of $I_m$ is given by $\ptIl$ and
$\int_{-\infty}^{\median} {f}_\t(x)dx=\int_{\median}^{\infty}{f}_\t(x)dx=\frac{1}{2}$, the probability that $\lambda_{j_i}$ lies in the
interval $(-\infty,\median-\varepsilon(m))$ is given by
$\frac{1}{2}-\ptIl$. Thus we have $\E[\overline{X}]=\frac{1}{2}-\ptIl$ and (\ref{eq:Hoeffding_for_us}) implies (I).
%\begin{align}
%\Pr\left\{\frac{\ntlinks}{m_\t}\in \left[\frac{1-\ptI}{2}- {m_\t}^{-\frac{2}{5}}, \frac{1-\ptI}{2}+ {m_\t}^{-\frac{2}{5}}\right]\right\}
%\geq 1-2\exp\left(-2{m_\t}^{\frac{1}{5}}\right).
%\end{align}
The statements (II) and (III) follow along the same lines (namely, by letting $X_i$ be the characteristic function of intervals
$[\median-\varepsilon(m),\median+\varepsilon(m)]$ and
$(\median+\varepsilon(m),\infty)$, respectively). Note that ${m_\t}^{-2/5}\ll \eps(m)=m^{-3/8}$
for large $m$.
\end{proof}

We can use the bounds on $\ptI$ in (\ref{eq:ptheta_bound}) and that $\beta m\le m_\t \le m$ for $m\ge m^1\ge m^0$ in order to conclude from (II) that for any given $\t\in \{1,\ldots,r\}$
\begin{align}\label{eq:bounds_on_ktheta}
\underline{u}\beta^2\varepsilon(m)\cdot m-m^{\frac{3}{5}}\le \ntI\le 4\overline{u}\varepsilon(m)\cdot m +m^{\frac{3}{5}}
\end{align}
with a probability of at least $1-2\cdot \exp\left(-2{m_\t}^{\frac{1}{5}}\right)$.
A further implication of observations (I)--(III) is:
\begin{lemma}
  \label{lemma:unweighted_median_per_type}
For $m\ge m^1$ the inequalities
\begin{eqnarray}
\ntlinks &<& \frac{1}{2}m_\t  \\
\ntrechts &<& \frac{1}{2}m_\t \\
\ntlinks+\frac{2}{3}\ntI &>& \frac{1}{2}m_\t \\
\ntrechts+\frac{2}{3}\ntI &>& \frac{1}{2}m_\t
\end{eqnarray}
are simultaneously satisfied for \emph{all} $\t\in\{1,\dots,r\}$ with a probability of %at least $1-6r\cdot \exp\left(-2{m_\t}^{\frac{1}{5}}\right)$.
at least $1-6r\cdot \exp\left(-2{(\beta m)}^{\frac{1}{5}}\right)$.
\end{lemma}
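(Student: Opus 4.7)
My plan is to deduce all four deterministic inequalities from the three concentration statements (I)--(III) of Lemma~\ref{lemma:concentration_per_type}, applied to each type $\t$ separately. The claimed probability estimate $1 - 6r\exp(-2(\beta m)^{1/5})$ then follows at once by a double union bound: three events per type, $r$ types, and the conservative replacement of $\exp(-2m_\t^{1/5})$ by $\exp(-2(\beta m)^{1/5})$ using $m_\t \ge \beta m$ for $m \ge m^1$.

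On the good event, the upper part of (I) gives $\ntlinks \le (\tfrac{1}{2} - \ptIl + m_\t^{-2/5})m_\t$, so the strict inequality $\ntlinks < \tfrac{1}{2} m_\t$ reduces to $\ptIl > m_\t^{-2/5}$. By the lower bound in~(\ref{eq:ptheta_left_right_bounds_B}) this is implied by $\tfrac{1}{2}\uline{u}\beta m_\t^{1/40} > 1$, a direct consequence of~(\ref{ie:technical_bound_constants}). The inequality $\ntrechts < \tfrac{1}{2} m_\t$ is symmetric, using (III) in place of (I).

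For $\ntlinks + \tfrac{2}{3}\ntI > \tfrac{1}{2} m_\t$ I would combine the lower parts of (I) and (II) with $\ptI = \ptIl + \ptIr$ to obtain
\begin{equation*}
\ntlinks + \tfrac{2}{3}\ntI \ \ge\ \tfrac{1}{2} m_\t \,+\, m_\t\bigl(\tfrac{2}{3}\ptIr - \tfrac{1}{3}\ptIl\bigr) \,-\, \tfrac{5}{3} m_\t^{3/5},
\end{equation*}
so the target reduces to $\tfrac{2}{3}\ptIr - \tfrac{1}{3}\ptIl > \tfrac{5}{3} m_\t^{-2/5}$. This is the pivotal point of the proof: the crude uniform bounds in~(\ref{eq:ptheta_left_right_bounds_B}) do \emph{not} suffice, because they permit $\ptIl$ to exceed $\ptIr$ by the ratio $4\overline{u}/(\uline{u}\beta)$ and the left-hand side can then be negative. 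The rescue is the tight two-sided bracketing~(\ref{eq:ptheta_left_right_bounds_A}), which pins both $\ptIl$ and $\ptIr$ into $[\tfrac{5}{6},\tfrac{7}{6}]\, f_\t(\median)\varepsilon(m)$ and hence gives
\begin{equation*}
\tfrac{2}{3}\ptIr - \tfrac{1}{3}\ptIl \ \ge\ \bigl(\tfrac{10}{18} - \tfrac{7}{18}\bigr) f_\t(\median)\varepsilon(m) \ =\ \tfrac{1}{6} f_\t(\median)\varepsilon(m) \ \ge\ \tfrac{1}{6}\uline{u}\beta^{3/8} m_\t^{-3/8},
\end{equation*}
using $f_\t(\median) \ge \uline{u}$ and $\varepsilon(m) = m^{-3/8} \ge \beta^{3/8} m_\t^{-3/8}$ (which follows from $m \le m_\t/\beta$). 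The required estimate then becomes $\uline{u}\beta^{3/8} m_\t^{1/40} > 10$, which again follows from~(\ref{ie:technical_bound_constants}) since $\beta < 1$ implies $\beta^{3/8} \ge \beta$. The fourth inequality is handled symmetrically.

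The main obstacle is the careful balancing of three competing scales --- the interval width $\varepsilon(m) = m^{-3/8}$, the Hoeffding deviation $m_\t^{-2/5}$, and the continuity bracket $[\tfrac{5}{6},\tfrac{7}{6}]$ coming from~(\ref{ie:bounds_f}) --- so that every inequality retains positive slack. The exponent $\tfrac{1}{40} = \tfrac{2}{5} - \tfrac{3}{8}$ and the constant $24$ in~(\ref{ie:technical_bound_constants}) were calibrated in Step~1 precisely so that the numerical comparisons $>1$ and $>10$ arising here hold with room to spare for all $m \ge m^1$; once this calibration is traced through, everything else is mechanical.
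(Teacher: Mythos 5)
Your proposal is correct and takes essentially the same route as the paper: both condition on the Hoeffding events (I)--(III) of Lemma~\ref{lemma:concentration_per_type}, use the tight bracketing (\ref{eq:ptheta_left_right_bounds_A}) of $\ptIl$ and $\ptIr$ (rather than only (\ref{eq:ptheta_left_right_bounds_B})) to secure a positive margin $\tfrac{2}{3}\ptIr-\tfrac{1}{3}\ptIl$ for the third and fourth inequalities, and close with the calibration (\ref{ie:technical_bound_constants}). The only differences are cosmetic: you take a union bound for the joint probability where the paper multiplies the individual bounds, and you lower-bound $\tfrac{1}{6}f_\t(\median)\varepsilon(m)$ directly by $\tfrac{1}{6}\underline{u}\beta^{3/8}m_\t^{-3/8}$ whereas the paper routes through $\ptI\ge\underline{u}\beta m_\t^{-3/8}$ via (\ref{eq:ptheta_bound_f}) and (\ref{eq:ptheta_bound}), which is why its required threshold is $24$ instead of your $10$.
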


\vspace{0.2cm}

\begin{proof}
The events considered in statements (I), (II), and (III) of Lemma~\ref{lemma:concentration_per_type} are realized for \emph{all} $\t\in\{1,\dots,r\}$ with a joint probability of at least
\begin{align}
%\left(1-2\exp\left(-2{m_\t}^{\frac{1}{5}}\right)\right)^{3r}\ge 1-6r\exp\left(-2{m_\t}^{\frac{1}{5}}\right),
\left(1-2\exp\left(-2{(\beta m)}^{\frac{1}{5}}\right)\right)^{3r}\ge 1-6r\exp\left(-2{(\beta m)}^{\frac{1}{5}}\right),
\end{align}
since $m_\t\ge \beta m$ for $m\ge m^0$ and $(1-x)^k\ge (1-kx)$ is valid for all $x\in[0,1]$ and $k\in\mathbb{N}$. % (see fn.~\ref{fn:binomial_inequality}).
If $m\geq m^1$, we then have
\begin{align}
     \ntlinks \le \left(\frac{1}{2}-\ptIl\right)\cdot m_\t+{m_\t}^{\frac{3}{5}} \le \frac{m_\t}{2}-\frac{\underline{u}
    \beta{m_\t}^{\frac{5}{8}}}{2} +{m_\t}^{\frac{3}{5}}=
    \frac{m_\t}{2}-{m_\t}^{\frac{3}{5}}\underset{>0} {\underbrace{\left(\frac{\underline{u}\beta m_\t^{\frac{1}{40}}}{2}-1\right)}}
    <\frac{1}{2}m_\t
\end{align}
for any $\t\in\{1,\dots,r\}$. The first inequality follows directly from (I), the second inequality uses
(\ref{eq:ptheta_left_right_bounds_B}), and the final inequality follows from (\ref{ie:technical_bound_constants}). Analogous inequalities pertain to $\ntrechts$.

Moreover, we can conclude
\begin{eqnarray}
\ntlinks+\frac{2}{3}\ntI &\ge &
\left(\frac{1}{2}-\ptIl\right)\cdot m_\t- {m_\t}^{\frac{3}{5}}+\frac{2\ptI}{3}m_\t-\frac{2}{3}{m_\t}^{\frac{3}{5}} \\
&=&
\frac{m_\t}{2} - \frac{5}{3}{m_\t}^{\frac{3}{5}} +
\left(\frac{2\ptI}{3}-\ptIl\right)m_\t
\\
%&=&
%\frac{m_\t}{2} + \frac{5}{3}{m_\t}^{\frac{3}{5}}\left(\frac{2\ptI}{5}{m_\t}^{\frac{2}{5}}-\frac{3\ptIl}{5}{m_\t}^{\frac{2}{5}} -1 \right)\\
&=&
\frac{m_\t}{2} + \frac{5}{3}{m_\t}^{\frac{3}{5}}\left(\frac{2\ptIr}{5}{m_\t}^{\frac{2}{5}}-\frac{\ptIl}{5}{m_\t}^{\frac{2}{5}} -1 \right)\\
&\ge&
\frac{m_\t}{2} + \frac{5}{3}{m_\t}^{\frac{3}{5}}\left(\frac{1}{10}\cdot\frac{3}{7}\ptI \cdot{m_\t}^{\frac{2}{5}}-1 \right) \\
&\ge &\frac{m_\t}{2} +    \frac{5}{3}{m_\t}^{\frac{3}{5}}\underset{>0} {\underbrace{\left(\frac{\underline{u}\beta{m_\t}^{\frac{1}{40}}}
{24}-1\right)}} >\frac{1}{2}m_\t.
\end{eqnarray}
The first inequality uses (I) and (II); the second one employs (\ref{eq:ptheta_bound_f}) and (\ref{eq:ptheta_left_right_bounds_A}); the third applies (\ref{eq:ptheta_bound});
and the final one invokes (\ref{ie:technical_bound_constants}). Analogous inequalities pertain to $\ntrechts+\frac{2}{3}\ntI$.
\end{proof}

Lemma~\ref{lemma:unweighted_median_per_type} implies that the respective unweighted sample median among representatives of type~$\t$ is located within $I_m$ for all $\t\in\{1, \ldots, r\}$ with a probability that quickly approaches 1. The same must \emph{a~fortiori} be true for the pivotal assembly member, i.e., the weighted median among all representatives.
%With an exponentially vanishing error, it will hence suffice to consider only the representatives with ideal points in the essential interval $I_m$ in order to determine the probability $\pi^\t(\mathcal{R}^{m})$ of some type-$\t$ representative being pivotal.

We collect in the set $\mathcal{K}$ all $\nvector=(\nlinks{1},\nmitte{1},\nrechts{1},\dots,\nlinks{r},\nmitte{r},\nrechts{r})$ such that the events considered by Lemma~\ref{lemma:concentration_per_type}, (I)--(III), are realized for \emph{all} $\t\in \{1, \ldots, r\}$. The inequalities in Lemma~\ref{lemma:unweighted_median_per_type} then hold for any $\nvector \in \mathcal{K}$.
We can decompose the probability $\pi^\t(\mathcal{R}^{m})$ of some type-$\t$ representative being pivotal into conditional probabilities $\pi^\t(\mathcal{R}^{m}|\mathcal{K})$ and $\pi^\t(\mathcal{R}^{m}|\neg\mathcal{K})$ which respectively concern only $\boldsymbol{\l}$-realizations where $\nvector\in \mathcal{K}$ and $\nvector\not\in \mathcal{K}$. Then  Lemma~\ref{lemma:unweighted_median_per_type} implies
\begin{eqnarray}
\pi^\t(\mathcal{R}^{m})&=&\Pr\{\mathcal{K}\}\cdot \pi^\t(\mathcal{R}^{m}|\mathcal{K}) + \Pr\{\neg\mathcal{K}\}\cdot \pi^\t(\mathcal{R}^{m}|\neg \mathcal{K}) \notag\\
&=&\pi^\t(\mathcal{R}^{m}|\mathcal{K}) + O(exp(-2m^{\frac{1}{5}})). \label{eq:TypSSIgesplitted}
\end{eqnarray}

\subsubsection*{Step 3: $\boldsymbol{\pi^\t(\mathcal{R}^{m}|\mathcal{K})}$ converges to the expectation of type $\t$'s Shapley value inside $\boldsymbol{I_m}$}
Now condition on some $\nvector\in \mathcal{K}$ such that exactly $\sum_{\theta} \ntI=k$ ideal points fall inside the essential interval, where $k$ is asymptotically proportional to $\eps(m)\cdot m=m^{\frac{5}{8}}$ by (\ref{eq:bounds_on_ktheta}). %Lemma~\ref{lemma:infinite_middle_instances_of_each_type}.
Label them $1,\dots, k$ for ease of notation and let $\ordering\in\mathcal{S}_k$ denote an arbitrary element of the space $\mathcal{S}_k$ of permutations which bijectively map $(1,\dots,k)$ to some $(j_1,\dots,j_k)$.
The conditional probability for the event that the $k$ ideal points located in $I_m$ are ordered exactly as they are in $\ordering$ by the second step of the experiment is
\begin{equation}
  \pperm\equiv \int_{-\varepsilon(m)}^{\varepsilon(m)}\int_{x_{j_{1}}}^{\varepsilon(m)}\dots  \int_{x_{j_{k-1}}}^{\varepsilon(m)}
  \hat{f}_{j_1}(x_{j_1})\dots \hat{f}_{j_k}(x_{j_k})\,dx_{j_k}\dots dx_{j_{2}}dx_{j_1}.
\end{equation}

\begin{lemma} For all $m\geq m^1$, any $\nvector\in \mathcal{K}$ with $\sum_{\theta} \ntI=k$ and permutation $\ordering\in\mathcal{S}_k$ we have
  \label{lemma:permutation_equiprobability}
\begin{equation}
  \pperm=\frac{1}{k!} + \frac{1}{k!}\cdot O( m^{-\frac{1}{8}}) .
\end{equation}
\end{lemma}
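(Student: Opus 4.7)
The plan is to show that, conditional on $\nvector$, the $k$ ideal points in $I_m$ are ``almost'' i.i.d.\ uniform on $I_m$, so the classical fact that uniform order statistics give every permutation probability exactly $1/k!$ applies up to a small multiplicative error.

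First I would pin down the size of the conditional densities $\hat f_\t(x)=f_\t(x)/\ptI$ on $I_m=[\median-\eps(m),\median+\eps(m)]$. Since $|x-\median|\le\eps(m)=m^{-3/8}$, the quadratic bound $|f_\t(x)-f_\t(\median)|\le cx^2$ gives $f_\t(x)=f_\t(\median)+O(\eps(m)^2)=f_\t(\median)+O(m^{-3/4})$ uniformly in $x\in I_m$ and in $\t$. Integrating over $I_m$ then yields
\begin{equation*}
\ptI = 2f_\t(\median)\eps(m) + \int_{-\eps(m)}^{\eps(m)} [f_\t(x)-f_\t(\median)]\,dx = 2f_\t(\median)\eps(m)\bigl[1+O(m^{-3/4})\bigr],
\end{equation*}
where I use $f_\t(\median)\ge \uline u>0$ to pass from the absolute error $O(\eps(m)^3)$ in the correction to a relative error of $O(m^{-3/4})$. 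Combining the two bounds gives
\begin{equation*}
\hat f_\t(x)=\frac{1}{2\eps(m)}\bigl[1+O(m^{-3/4})\bigr] \quad \text{uniformly on } I_m \text{ and in } \t,
\end{equation*}
i.e.\ the conditional densities are uniform-on-$I_m$ up to a relative error of order $m^{-3/4}$.

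Next I would use this to evaluate $\pperm$. If each $\hat f_{j_i}$ were replaced by the exact uniform density $\frac{1}{2\eps(m)}$, the iterated integral in the definition of $\pperm$ would be the probability that $k$ i.i.d.\ uniform random variables on $I_m$ land in the specific order $\ordering$, which equals exactly $1/k!$ (all $k!$ orderings are equiprobable). Replacing each factor by the true density changes the integrand by the multiplicative factor $\prod_{i=1}^k\bigl[1+O(m^{-3/4})\bigr] = 1+O\bigl(k\cdot m^{-3/4}\bigr)$, which can be pulled in front of the integral since the bound is uniform on $I_m$.

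It remains to estimate $k=\sum_\t \ntI$ for $\nvector\in\mathcal{K}$. The bound (\ref{eq:bounds_on_ktheta}), together with $k=\sum_{\t=1}^r \ntI$ and $r$ constant, gives $k=O(\eps(m)\cdot m)=O(m^{5/8})$. Therefore $k\cdot m^{-3/4}=O(m^{5/8-3/4})=O(m^{-1/8})$, and $\pperm=\tfrac{1}{k!}\bigl[1+O(m^{-1/8})\bigr]$, which is the stated bound. The one subtle point I expect to have to be careful with is the tightness of the error accounting: the quadratic bound $|f_\t(x)-f_\t(\median)|\le cx^2$ is essential here (any weaker bound like $O(|x|)$ would give $k\cdot m^{-3/8}=O(m^{1/4})$, destroying the estimate), and the numbers $\eps(m)=m^{-3/8}$ and $t=m^{-2/5}$ were chosen in Step~1 precisely so that $k\cdot\eps(m)^2\to 0$.
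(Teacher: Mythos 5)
Your proposal is correct and follows essentially the same route as the paper's proof: bound the variation of $f_\t$ on $I_m$ by $O(\eps(m)^2)=O(m^{-3/4})$ via the quadratic condition, deduce that the conditional densities $\hat f_\t$ are within a relative error $O(m^{-3/4})$ of the uniform density $\tfrac{1}{2\eps(m)}$, compare the ordered integral to the exactly-equiprobable constant-density case, and absorb the per-factor errors into $1+O(k\,m^{-3/4})$ with $k=O(m^{5/8})$ from the $\mathcal{K}$-bounds, giving the stated relative error $O(m^{-1/8})$. The only cosmetic difference is that the paper sandwiches $\hat f_\t(x)$ between $(1\pm\delta)\hat f_\t(\median)$ and then bounds $\hat f_\t(\median)$ against $\tfrac{1}{2\eps(m)}$, while you compare directly to the uniform density; the error bookkeeping is identical.
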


\vspace{0.2cm}

\begin{proof}

The premise $|f_\t(x)-f_\t(\median)|\leq cx^2$ for $x\in I_m$ permits us to choose $\delta \in O(\eps(m)^2)$
with $\delta\le \frac{1}{2}$ such that
\begin{equation}\label{eq:f_t_bounding}
  (1-\delta)\cdot f_{\t}(\median) \le f_{\t}(x) \le (1+\delta)\cdot f_{\t}(\median)
\end{equation}
and, equivalently,
\begin{equation}\label{eq:f_t_hat_bounding}
  (1-\delta)\cdot \hat f_{\t}(\median) \le \hat f_{\t}(x) \le (1+\delta)\cdot \hat f_{\t}(\median)
\end{equation}
for all types $1\le \t\le r$ and all $x\in I_m$.
Integrating (\ref{eq:f_t_bounding}) on $I_m$ yields
\begin{equation}\label{eq:p_t_bounding}
  2\eps(m)(1-\delta)\cdot f_{\t}(\median) \le \ptI  \le 2\eps(m)(1+\delta)\cdot f_{\t}(\median).
\end{equation}
With these bounds we can conclude from $\hat {f_\t}(\median)=\frac{f_\t(\median)}{\ptI}$ that
\begin{equation}
  \label{ie:hat_constant}
  \frac{1-\delta}{2\eps(m)}\le\frac{1}{2\eps(m)(1+\delta)} \le \hat {f_\t}(\median) \le  \frac{1}{2\eps(m)(1-\delta)}
  \le \frac{1+2\delta}{2\eps(m)}
\end{equation}
because $1/(1-\d)\le 1+2\d$.

Using $(1-\delta)^k\ge 1-k\delta$ and
$(1+\delta)^k\le 1+2k\delta$ for $k\delta\le 1$,\footnote{The first statement is easily seen by induction on $k$. The second follows from
$$
(1+\delta)^k=\sum_{j=0}^k {k\choose j}\delta^j \le
1+\sum_{j=1}^k \frac{1}{j!} \underbrace{(k\delta)^j}_{\le k\delta}\le 1+k\delta\underbrace{\sum_{j=1}^k\textstyle \frac{1}{j!}}_{\le e-1}\le 1+2k\delta.
$$
Since $k$ is asymptotically proportional to $m^{\frac{5}{8}}$ and $\eps(m)^2=m^{-\frac{6}{8}}$ we can choose $\delta\in O(m^{-\frac{6}{8}})$ with $(k\d)^j\le k\d$ for $j\ge 1$ whenever $m$ is large enough.\label{fn:binomial_inequality}} and noting that the hypercube $[0,1]^k$ can be partitioned into $k!$ polytopes $\{x\in [0,1]^k\colon x_{j_1}\le x_{j_2} \le \ldots \le x_{j_k}\}$ with equal volume,
inequality (\ref{eq:f_t_hat_bounding}) yields
\begin{eqnarray}
  \pperm&\ge& (1-\delta)^k\int_{-\varepsilon(m)}^{\varepsilon(m)}\int_{x_{j_{1}}}^{\varepsilon(m)}\dots  \int_{x_{j_{k-1}}}^{\varepsilon(m)}
  \hat{f}_{j_1}(\median)\dots \hat{f}_{j_k}(\median)\,dx_{j_k}\dots dx_{j_{2}}dx_{j_1}\\
  &=& \frac{(1-\delta)^k}{k!}\cdot \hat{f}_{j_1}(\median)\dots \hat{f}_{j_k}(\median) \int_{-\varepsilon(m)}^{\varepsilon(m)}
  \int_{-\varepsilon(m)}^{\varepsilon(m)}\dots \int_{-\varepsilon(m)}^{\varepsilon(m)} 1
  \,dx_{j_k}\dots dx_{j_{2}}dx_{j_1}\\
 &=& \frac{(1-\delta)^k}{k!}\cdot \hat{f}_{j_1}(\median)\dots \hat{f}_{j_k}(\median) \cdot (2\eps(m))^k \\
 &\overset{(\ref{ie:hat_constant})}{\ge}& \frac{(1-\delta)^{2k}}{k!}\ge \frac{1-2k\delta}{k!}
\end{eqnarray}
and, analogously,
\begin{eqnarray}
  \pperm&\le& (1+\delta)^k\int_{-\varepsilon(m)}^{\varepsilon(m)}\int_{x_{j_{1}}}^{\varepsilon(m)}\dots  \int_{x_{j_{k-1}}}^{\varepsilon(m)}
  \hat{f}_{j_1}(\median)\dots \hat{f}_{j_k}(\median)\,dx_{j_k}\dots dx_{j_{2}}dx_{j_1}\\
 &=& \frac{(1+\delta)^k}{k!}\cdot \hat{f}_{j_1}(\median)\dots \hat{f}_{j_k}(\median) \cdot (2\eps(m))^k \\
 &\overset{(\ref{ie:hat_constant})}{\le}& \frac{(1+\delta)^{k}(1+2\delta)^k}{k!}
 \le \frac{(1+2\delta)^{2k}}{k!}\le \frac{1+8k\delta}{k!}.
\end{eqnarray}
This implies
\begin{equation}
\left|\pperm-\frac{1}{k!}\right|\le \frac{8k\delta}{k!}.
\end{equation}
Because $k\in O(m^\frac{5}{8})$ and $\delta \in O(m^{-\frac{6}{8}})$, the relative error $|\pperm-(k!)^{-1}|\big/ (k!)^{-1}$ tends to zero
at least as fast as $O(m^{-\frac{1}{8}})$.
\end{proof}

So even though the probabilities of the orderings $\ordering\in\mathcal{S}_k$ of the $k$ agents inside $I_m$ differ depending on which specific $\ordering$ is considered and what are the involved representative types (i.e., which $\nvector$ is considered), these differences vanish and all orderings become equiprobable as $m$ gets large.

Type $\t$'s conditional pivot probability can be written as
\begin{equation}\label{eq:cond_pivot_prob_per_type}
\pi^\t(\mathcal{R}^{m}|\mathcal{K})=
  \sum_{\nvector\in\mathcal{K}}\pvect \cdot \Big\{ \sum_{\ordering\in \mathcal{S}_{k}\,:\,\psi(\nvector,\ordering)=\t} \pperm \Big\},
\end{equation}
where $P(\nvector)$ denotes the probability of $\nvector$ conditional on event $\{\nvector\in \mathcal{K}\}$ and function $\psi\colon \mathcal{K}\times \mathcal{S}_k\to  \{1,\ldots,r\}$ identifies the type $\t'$ of the pivotal member in $\mathcal{R}^{m}$ when $\nvector$ describes how the representative types are divided between $I_m$ and its left or right, and $\ordering$ captures the ordering inside $I_m$.  Lemma~\ref{lemma:permutation_equiprobability} approximates the probability of ordering $\ordering$ conditional on $\nvector$ as $1/k!$, and one thus obtains
\begin{equation}\label{eq:cond_pivot_prob_as_expected_Shapley}
\pi^\t(\mathcal{R}^{m}|\mathcal{K})=
  \sum_{\nvector\in\mathcal{K}}\pvect \cdot
{\phi_\theta(\nvector)} + O(m^{-\frac{1}{8}})
\end{equation}
with
\begin{equation}
\phi_\theta(\nvector)=\sum_{\ordering\in \mathcal{S}_{k}\,:\,\psi(\nvector,\ordering)=\t} \frac{1}{k!}.
\end{equation}
Because a constant factor $\frac{1}{k!}$ pertains to each ordering $\ordering\in \mathcal{S}_k$, $\phi_\theta(\nvector)$ equals the probability that, as the weights $w_1, w_2, \ldots, w_k$ of the $k$ representatives inside $I_m$ are accumulated in
\emph{uniform} random order, the threshold $q(\nvector)\equiv q^{m}-\sum_{\t\in \{1, \ldots, r\}}\ntlinks w_\t$ is first reached by the weight of a type-$\t$ representative. The term $\phi_\theta(\nvector)$ is, therefore, simply the aggregated \emph{Shapley value} of the type-$\t$ representatives in the weighted voting game defined by quota $q(\nvector)$ and weight vector $(w_1, w_2, \ldots, w_k)$. Equation~(\ref{eq:cond_pivot_prob_as_expected_Shapley}) states that $\pi^\t(\mathcal{R}^{m}|\mathcal{K})$ converges to the expectation of this Shapley value $\phi_\theta(\nvector)$.

\subsubsection*{Step 4: Type $\boldsymbol{\t}$'s Shapley value $\boldsymbol{\phi_\theta(\nvector)}$ converges to $\boldsymbol{\t}$'s relative weight in $\boldsymbol{I_m}$}

Condition $\nvector\in \mathcal{K}$ implies $\frac{1}{3}\cdot \sum_{\t \in \{1, \ldots, r\}} {\nmitte{\t}} w_{\t} \leq q(\nvector) \leq \frac{2}{3}\cdot
\sum_{\t\in \{1, \ldots, r\}} {\nmitte{\t}} w_{\t}$
(see Lemma~\ref{lemma:unweighted_median_per_type}). And our premises guarantee that the relative weight of each individual representative in $I_m$ shrinks to zero. The ``Main Theorem$^*$'' in \citeN{Neyman:1982}, therefore, has the following corollary:

\begin{lemma}[Neyman 1982]
  \label{lemma:own_neyman}
Given that $\nvector\in \mathcal{K}$,
\begin{equation}\label{eq:Neyman}
   \phi_\theta(\nvector) %\equiv \sum_{\ordering\in \mathcal{S}_{k}\,:\,\psi(\nvector,\ordering)=\t} \frac{1}{k!}
    =\frac{\ntI w_\t}{\sum_{{\t'}=1}^r \nmitte{\t'} w_{\t'}}\cdot(1+\mu(m))
    \quad \mbox{with} \quad \lim_{m\to \infty}|\mu(m)|=0.
\end{equation}
%and $\lim_{m\to \infty}|\mu(m)|=0$.
\end{lemma}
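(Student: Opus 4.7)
The plan is to apply Neyman's 1982 uniform convergence result for the Shapley value directly to the conditional weighted voting game induced on the $k=\sum_{\t'}\nmitte{\t'}$ representatives whose ideal points fall inside $I_m$, and then aggregate the per-player bounds across all $\ntI$ type-$\t$ players. Neyman's Main Theorem$^*$ states, in the form relevant here, that if $v=[q;w_1,\dots,w_k]$ is a weighted voting game with normalised total weight $1$, if the quota satisfies $2\max_i w_i < q < 1-2\max_i w_i$, and if $\max_i w_i\to 0$, then $|\phi_i(v)-w_i|$ tends to $0$ uniformly in $i$ at a rate controlled by $\max_i w_i$ and the distance of $q$ from $\{0,1\}$. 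So I just need to verify these two hypotheses for the subgame on $I_m$ and then read off the aggregate statement.

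For the quota-boundedness hypothesis, I would compute $q(\nvector)$ in terms of the type data. Using $q^m=\tfrac12\sum_\t m_\t w_\t$ and $q(\nvector)=q^m-\sum_\t\nlinks{\t}w_\t$, one obtains $q(\nvector)=\sum_\t(\tfrac12 m_\t-\nlinks{\t})w_\t$, while the total weight inside $I_m$ is $W(\nvector)\equiv\sum_\t\nmitte{\t}w_\t$, with $W(\nvector)-q(\nvector)=\sum_\t(\tfrac12 m_\t-\nrechts{\t})w_\t$ by symmetry. The four inequalities of Lemma~\ref{lemma:unweighted_median_per_type}, which are satisfied on $\mathcal{K}$, translate termwise into $0<\tfrac12 m_\t-\nlinks{\t}<\tfrac23\nmitte{\t}$ and $0<\tfrac12 m_\t-\nrechts{\t}<\tfrac23\nmitte{\t}$, which I would then sum (weighted by $w_\t$) to obtain
\begin{equation}
\tfrac{1}{3}W(\nvector)\le q(\nvector)\le \tfrac{2}{3}W(\nvector),
\end{equation}
i.e., the relative quota $q(\nvector)/W(\nvector)$ stays in $[\tfrac13,\tfrac23]$ on $\mathcal{K}$, uniformly in $m$.

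For the vanishing maximum-weight hypothesis, I would bound the largest individual weight $\max_\t w_\t$, which is a fixed constant depending only on the $r$ types, against the total weight $W(\nvector)$ inside the essential interval. The lower bound on $\ntI$ in (\ref{eq:bounds_on_ktheta}) yields $W(\nvector)\ge(\min_\t w_\t)\cdot\underline{u}\beta^2\eps(m)m-O(m^{3/5})$, which grows like $m^{5/8}$. Consequently the normalised maximum weight $\max_\t w_\t/W(\nvector)$ decays at least as fast as $m^{-5/8}$, and in particular goes to zero uniformly over $\nvector\in\mathcal{K}$, so that in the normalised game the condition $2\max_i w_i<q<1-2\max_i w_i$ is met for all sufficiently large $m$.

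With both hypotheses verified, Neyman's theorem delivers an error function $\eta(m)\to 0$, depending only on the two bounds above and hence independent of which particular $\nvector\in\mathcal{K}$ is considered, such that $|\phi_i(v_{\nvector})-w_{\tau(i)}/W(\nvector)|\le\eta(m)\cdot w_{\tau(i)}/W(\nvector)$ for every player $i$ in the subgame. Summing over the $\ntI$ players of type $\t$ located inside $I_m$ yields
\begin{equation}
\phi_\theta(\nvector)=\frac{\ntI w_\t}{W(\nvector)}\bigl(1+\mu(m)\bigr),
\end{equation}
with $|\mu(m)|\le\eta(m)\to 0$. The main obstacle I anticipate is matching the precise normalisation in Neyman's statement (which is phrased for games with total weight $1$ and quota bounded away from $0$ and $1$) to the present setting and ensuring that the resulting error $\mu(m)$ is genuinely uniform over $\mathcal{K}$; this is handled by rescaling each subgame by $1/W(\nvector)$ and observing that both the quota ratio and the normalised maximum weight bounds derived above hold uniformly for all admissible $\nvector$.
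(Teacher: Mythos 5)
Your first two steps reproduce the paper's argument faithfully: summing the four inequalities of Lemma~\ref{lemma:unweighted_median_per_type} with weights $w_\t$ does give $\tfrac13 W(\nvector)\le q(\nvector)\le\tfrac23 W(\nvector)$ on $\mathcal{K}$, and the lower bound on $\ntI$ from (\ref{eq:bounds_on_ktheta}) makes the normalised maximum weight vanish uniformly over $\mathcal{K}$ (one small repair: your bound $W(\nvector)\ge(\min_\t w_\t)\cdot\underline{u}\beta^2\eps(m)m-O(m^{3/5})$ is vacuous if some type has $w_\t=0$, which the theorem allows; use a type with strictly positive weight, discarding null players, instead).

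The genuine gap is in the last step. Neyman's Main Theorem$^*$ (and also his unpublished extension quoted as Theorem~\ref{thm:Neyman_theorem_extension}) provides an \emph{additive} bound on aggregate errors: $\sum_i|\phi_i(v)-\hat w_i|<\varepsilon$, equivalently $|\phi_T(v)-\sum_{i\in T}\hat w_i|$ small uniformly over coalitions $T$. It does \emph{not} deliver the per-player multiplicative bound $|\phi_i(v_{\nvector})-\hat w_i|\le\eta(m)\,\hat w_i$ that you attribute to it; statements of that relative, player-by-player kind are Penrose-limit-theorem results which require extra structure and can fail in general (cf.\ the paper's citations of Lindner--Machover and Lindner--Owen, and the remark in Appendix~A that the vanishing \emph{relative} error is ``considered neither by Neyman (1982) nor Theorem~\ref{thm:Neyman_theorem_extension}''). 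Since the lemma's conclusion is exactly a relative statement, this is the crux, not a technicality. The correct route is to apply Neyman's additive bound to the coalition $T$ of type-$\t$ players inside $I_m$, obtaining an absolute error $\tilde\mu(m)\to0$ on $\phi_\theta(\nvector)-\ntI w_\t/\sum_{\t'}\nmitte{\t'}w_{\t'}$, and then convert it into the multiplicative form by showing that the aggregate relative weight $\ntI w_\t\big/\sum_{\t'}\nmitte{\t'}w_{\t'}$ is bounded away from zero on $\mathcal{K}$ (for $w_\t>0$); this follows from $\beta_\t(m)\ge\beta>0$ together with the bounds (\ref{eq:ptheta_bound}) and (\ref{eq:bounds_on_ktheta}), while the case $w_\t=0$ is trivial because then $\phi_\theta(\nvector)=0$ as well. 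With that conversion inserted, your argument matches the paper's proof.
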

\begin{proof}
Neyman's theorem considers an infinite sequence of weighted voting games $[q^n; \boldsymbol{w}^n]$ with $n$ voters whose individual relative weights $w^n_i$ approach 0, and in which the relative quota $q^n$ is bounded away from 0 and 100\% (or at least $\lim_{n\to \infty} q^n/(\max_i w^n_i)=\infty$). Neyman establishes that\footnote{We somewhat specialize his finding and adapt the notation.}
\begin{equation}
\lim_{n\to \infty}| \phi_{T_n}(q^n;\boldsymbol{w}^n) - \sum_{i\in T_n} w^n_i| =0
\end{equation}
holds for any sequence of voter subsets $T_n\subseteq \{1,\ldots,n\}$, where $\phi_{T_n}(q^n;\boldsymbol{w}^n)$ denotes their aggregate Shapley value. (We here consider  $q^n=q(\nvector)/w_{\Sigma}$, $\boldsymbol{w}^n=(w_1, w_2, \ldots, w_k)/w_{\Sigma}$ and $T_n=\{i\in N \colon \tau(i)=\t \}$ for $N=\{1, \ldots, k\}$ and $w_{\Sigma}=\sum_{i\in N} w_i$.\footnote{Our notation leaves some inessential technicalities implicit: $\mathcal{K}$ really refers to a family of such sets, parameterized by $m$; we implicitly consider a sequence of $\nvector$-vectors such that $n=k\to\infty$ as $m\to \infty$.})

It is trivial that (\ref{eq:Neyman}) holds if $w_\t=0=\phi_\t(\nvector)$. So we can assume $w_\t>0$, and because there is at least the proportion $\beta>0$ of representatives from each type in $I_m$ for large $m$, the aggregate relative weight of $\t$-type representatives in $I_m$ is bounded away from 0, i.e.,\footnote{The limit itself need not exist because our premises do not rule out that, e.g., $m_\t$ is periodic in $m$.}
\begin{equation}\label{eq:TypRelativgewichtPositiv}
{\lim\inf}_{m\to \infty} \frac{\ntI w_\t}{\sum_{\t'=1}^r \nmitte{\t'}w_{\t'}}>0.
\end{equation}
Therefore, not only the absolute error $\tilde \mu(m)$ made in approximating $\phi_\t(\nvector)=\phi_{T_n}(q^n;\boldsymbol{w}^n)$ by $\frac{\ntI w_\t}{\sum_{\t'=1}^r \nmitte{\t'}w_{\t'}}$ but also the relative error $\mu(m)\equiv \tilde \mu(m) / \frac{\ntI w_\t}{\sum_{\t'=1}^r \nmitte{\t'}w_{\t'}}$ must vanish as $m\to \infty$.
\end{proof}

\subsubsection*{Step 5: Attributing aggregate pivot probabilities to individual representatives}

It then remains to disaggregate the pivot probabilities $\pi^\t(\mathcal{R}^{m})$ and $\pi^{\t'}(\mathcal{R}^{m})$ of types $\t$ and $\t'$ to individual representatives $i$ and $j$.
%So, with a vanishing error, the aggregate Shapley value of the type-$\t$ representatives in the weighted voting subgames associated with $\nvector\in\mathcal{K}$ equals this type's
The aggregate relative weight of type-$\t$ representatives in the essential interval satisfies
\begin{equation}\label{eq:TypRelativgewichtsabsch}
    \frac{\ntI w_\t}{\sum_{{\t'}=1}^r \nmitte{\t'} w_{\t'}}=
    \frac{\beta_\t(m)m \ptI w_\t (1+ O(m^{-\frac{2}{5}}))}{\sum_{{\t'}=1}^r \beta_{\t'}(m)m \ptstrichI w_{\t'}(1-O(m^{-\frac{2}{5}}))}=
    \frac{\beta_\t(m) \ptI w_\t }{\sum_{{\t'}=1}^r \beta_{\t'}(m) \ptstrichI w_{\t'}}\left(1+ O(m^{-\frac{2}{5}})\right)
\end{equation}
for any $\nvector\in\mathcal{K}$ (see (II) in Lemma~\ref{lemma:concentration_per_type}).\footnote{To see the second equality note that for $y\in (0,\frac{1}{2})$ %, $a,b>0$,
we have $\frac{1}{1-y}=1+y+y^2+\ldots\leq 1+2y=1+O(y)$. Similarly, $\frac{1}{1-y}\geq 1+y=1+O(y)$ and so $\frac{1}{1-y}=1+O(y)$.}
Combining this with equations (\ref{eq:TypSSIgesplitted}), (\ref{eq:cond_pivot_prob_as_expected_Shapley}) and (\ref{eq:Neyman}) yields
\begin{equation}\label{eq:ResultatTypebene}
    \lim_{m\to \infty}\frac{\pi^\t(\mathcal{R}^{m})}{\pi^{\t'}(\mathcal{R}^{m})}=
    \lim_{m\to \infty}\frac{\beta_\t(m) \ptI w_\t}{\beta_{\t'}(m) \ptstrichI w_{\t'}}=\lim_{m\to \infty}\frac{\beta_\t(m) f_{\t}(\median) w_\t}{\beta_{\t'}(m) f_{\t'}(\median) w_{\t'}}
\end{equation}
for arbitrary $\t,\t'\in \{1, \ldots,r\}$. Here, the final equality uses
\begin{equation}
    \lim_{m\to\infty} \frac{\ptI}{\ptstrichI}=
    \lim_{m\to\infty} \frac{\int_{-\eps(m)}^{\eps(m)}f_\t(x)dx}{\int_{-\eps(m)}^{\eps(m)}f_{\t'}(x)dx} =\frac{f_\t(\median)}{f_{\t'}(\median)},
\end{equation}
which can be deduced from (\ref{eq:p_t_bounding}).

\enlargethispage*{\baselineskip}

Our main result then follows from noting that the $m_\t=\beta_\t(m)\cdot m$ representatives of type $\t$ in assembly $\mathcal{R}^{m}$ are symmetric to each other and, therefore, must have identical pivot probabilities in $\mathcal{R}^{m}$. Hence
\begin{equation}\label{eq:ResultatIndividualebene}
\lim_{m\to \infty}\frac{\pi_i(\mathcal{R}^{m})}{\pi_j(\mathcal{R}^{m})}=
\lim_{m\to \infty}\frac{\pi^{\tau(i)}(\mathcal{R}^{m})/\beta_{\tau(i)}(m)} {\pi^{\tau(j)}(\mathcal{R}^{m})/\beta_{\tau(j)}(m)}=\frac{f_i(\median) w_i}{ f_j(\median) w_j}.
\end{equation}
\hfill $\blacksquare$
\bigskip
\bigskip

\subsection{Remarks}

Let us end this appendix with remarks on possible further generalizations.
First, the quadratic bound on $f_\t$'s variation in a neighborhood of $\median$ could be relaxed by choosing different constants in equations (\ref{eq:epsmdef}) and (\ref{eq:Hoeffding01}): $t(m_\t)=m_\t^{-b_1}$ with $b_1<\frac{1}{2}$ is all that is needed in order to ensure a vanishing error probability in (\ref{eq:Hoeffding01}); and $\eps(m)=m^{-b_2}$ with $b_2<b_1$ in (\ref{eq:epsmdef}) is sufficient for $\eps(m)\gg t(m_\t)$.
Then a local bound $|f_\t(x)-f_\t(\median)|\leq cx^a$ for $a>\frac{1-b_2}{b_2}$ is sufficient to establish Lemma~\ref{lemma:permutation_equiprobability}.
Requirement $b_2<b_1<\frac{1}{2}$ leaves generous room for $a<2$, but implies $a>1$.

Second, it is actually sufficient to assume \emph{local continuity} of all $f_\t$ at $\median$, rather than any strengthening of this,\footnote{Local continuity of $f_\t$ is obviously necessary: a modification of $f_\t(\median)$~-- with $f_\t(x)$ unchanged for $x\neq \median$~-- would affect $w_i f_\t(\median)$ but not $\pi_i(\mathcal{R}^{m})$.
Also the requirement of \emph{positive density} at the common median cannot be relaxed.
This is seen, e.g., by considering densities $f_i, f_j$ where $f_i(x)=0$ on a neighborhood $N_\eps(\median)$ while $f_j(\median)=0$ with $f_j(x)>0$ for $x\in N_\eps(\median)\setminus \{\median\}$; then $\pi_i(\mathcal{R}^{m})/\pi_j(\mathcal{R}^{m})$ converges to 0 rather than $w_i/w_j$.} if one appeals to an unpublished result by Abraham Neyman.
When, as in our setting, all voting weights have the same order of magnitude, the uniform convergence theorem of \citeN{Neyman:1982} for the Shapley value can be generalized to hold for all \emph{random order values} that are `sufficiently close' to the Shapley value.
More specifically, consider the expected marginal contribution of a voter~$i\in \{1, \ldots, k\}$
\begin{equation}\label{eq:random_order_value}
\Phi_i(v)\equiv \sum_{\ordering\in \mathcal{S}_k} p(\ordering)\cdot [v(T_i(\ordering) \cup \{i\})-v(T_i(\ordering))]
\end{equation}
in a weighted voting game $v=[q;w_1, \ldots, w_k]$, where any given permutation $\ordering\in\mathcal{S}_k$ on $N=\{1, \ldots, k\}$ has probability $p(\ordering)$, and $T_i(\ordering)\subset N$ denotes the set of $i$'s predecessors in $\ordering$, i.e., $T_i(\ordering)= \{j\colon \ordering(j)<\ordering(i)\}$. The random order value $\Phi(v)$ equals the Shapley value $\phi(v)$ if $p(\ordering)=\frac{1}{k!}$. This equiprobability can, for instance, be obtained by letting $\ordering$ be defined by the order statistics of a vector of random variables $\boldsymbol{X}=(X_1, \ldots, X_k)$ with mutually independent and $[0,1]$-uniformly distributed $X_1, \ldots, X_k$. The latter assumption can be relaxed somewhat without destroying the asymptotic proportionality of $i$'s weight $w_i$ and  $\Phi_i(v)$ which \citeN{Neyman:1982} has established when $\Phi(v)=\phi(v)$:

\begin{theorem}[Neyman, personal communication]\label{thm:Neyman_theorem_extension}
Fix $L>1$. For every $\varepsilon>0$ there exist $\delta>0$ and $K>0$ such that if $v$ is the weighted voting game $v=[q; w_1,\ldots, w_k]$ with $w_1, \ldots, w_k>0$, $\sum_{i=1}^k w_i=1$, $K\cdot \max_i w_i<q<1-K\cdot \max_i w_i$, $\max_{i,j} w_i/w_j<L$, and $\{p(\ordering)\}_{\ordering\in \mathcal{S}_k}$ in (\ref{eq:random_order_value}) is defined by the order statistics of independent $[0,1]$-valued random variables $X_1, \ldots, X_k$ with densities $f_i$ such that
$1-\delta<f_i(x)<1+\delta$ for every $x\in [0,1]$ and $i\in \{1,\ldots,k\}$ then \begin{equation}\label{eq:Neyman_theorem_extension}
    \sum_{i=1}^k |w_i-\Phi_i(v)| < \varepsilon.
\end{equation}
\end{theorem}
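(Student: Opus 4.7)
The plan is to combine Neyman's 1982 uniform convergence theorem with a perturbation estimate that controls $\Phi(v)-\phi(v)$ uniformly in $k$. By \citeN{Neyman:1982} (``Main Theorem$^*$'') applied to weight vectors of bounded ratio, for any $\varepsilon>0$ and $L>1$ there is a $K=K(\varepsilon,L)>0$ such that the stated quota assumption $K\max_i w_i<q<1-K\max_i w_i$ already implies $\sum_{i=1}^k|w_i-\phi_i(v)|<\varepsilon/2$; this is the case $\delta=0$ of the theorem, where $p(\ordering)\equiv 1/k!$. By the triangle inequality, the task reduces to producing $\delta=\delta(\varepsilon,L)>0$, uniformly in $k$, with
\begin{equation*}
\sum_{i=1}^k|\Phi_i(v)-\phi_i(v)|<\varepsilon/2 \quad\text{whenever}\quad \sup_{i,x}|f_i(x)-1|<\delta.
\end{equation*}

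For this reduction I would introduce the canonical coupling $X_i=F_i^{-1}(U_i)$ where $U_1,\dots,U_k$ are i.i.d.\ uniform on $[0,1]$. The density bound $|f_i-1|<\delta$ gives $|F_i(t)-t|\le\delta$ and $|X_i-U_i|\le 2\delta$ almost surely; under the coupling, $\ordering_U$ is uniform on $\mathcal{S}_k$ (realizing $\phi(v)$) while $\ordering_X$ is distributed according to $p(\ordering)$ from (\ref{eq:random_order_value}) (realizing $\Phi(v)$). Writing $W^{-i,Y}(t)=\sum_{j\ne i}w_j\mathbf{1}_{Y_j\le t}$ and conditioning on player $i$'s position,
\begin{equation*}
\Phi_i(v)=\int_0^1 f_i(t)\,p_i(t)\,dt,\qquad \phi_i(v)=\int_0^1 q_i(t)\,dt,
\end{equation*}
where $p_i(t)=\Pr(W^{-i,X}(t)\in[q-w_i,q))$ and $q_i(t)=\Pr(W^{-i,U}(t)\in[q-w_i,q))$. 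Conditional on the position of $i$, $W^{-i,X}(t)$ and $W^{-i,U}(t)$ are sums of independent Bernoullis with parameters $F_j(t)$ and $t$ respectively, so their means differ by at most $\delta$, the variance is of order $L/k$, and each jump is bounded by $L/k$. The ``density-mismatch'' piece $\int(f_i-1)p_i\,dt$ of $\Phi_i-\phi_i$ is bounded by $\delta\int p_i\,dt\le 2\delta\Phi_i$ and sums to at most $2\delta$; the ``reordering'' piece $\int(p_i-q_i)\,dt$ is where the real work lies.

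The main obstacle is turning the $O(\delta)$ closeness of these two conditional Bernoulli sums into a $k$-uniform bound on $\sum_i\int|p_i-q_i|\,dt$. Naive bounds fail: the total variation of the product measures on $[0,1]^k$ gives $O(k\delta)$, since $(1\pm\delta)^k$ is far from $1$ for large $k$, and a pair-swap coupling estimate $\Pr(\ordering_X\ne\ordering_U)\le Ck^2\delta$ is likewise $k$-dependent. A uniform estimate must instead invoke (i) a local central limit theorem bounding the density of $W^{-i,U}(t)$ near $q$ by $C\sqrt{k/L}$ (valid because $\sum_{j\ne i}w_j^2$ is of order $L/k$ and the quota condition keeps the mode of $W^{-i,U}(t)$ in the interior of $[0,1]$), (ii) the resulting localization of the ``active window'' where $q_i$ is non-negligible to an interval of $t$-length $O(\sqrt{L/k})$ around the solution of $(1-w_i)t=q$, and (iii) a careful expansion of the difference of two Bernoulli-sum PMFs whose parameters $F_j(t)$ and $t$ differ pointwise in $j$ by at most $\delta$. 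Together these yield a multiplicative estimate $|\Phi_i-\phi_i|\le C(L)\,\delta\,\phi_i$ per player $i$ -- the analogue in the perturbed setting of the heuristic $\Phi_i\approx (f_i(q)/\mathrm{avg}_j f_j(q))\cdot\phi_i$ -- and hence $\sum_i|\Phi_i-\phi_i|\le C(L)\,\delta<\varepsilon/2$ uniformly in $k$. In effect the proof reproduces, in the slightly perturbed setting, the analytic local-CLT machinery underlying Neyman's 1982 argument, which is presumably why the authors attribute the extension to Neyman's private communication rather than supplying a full proof in the appendix.
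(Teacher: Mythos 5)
The first thing to note is that the paper contains \emph{no proof} of this statement: Theorem~\ref{thm:Neyman_theorem_extension} is explicitly attributed to Abraham Neyman as an \emph{unpublished} result (``personal communication''), and the surrounding appendix text only verifies that its hypotheses are met in the authors' setting (the subgame quota $\hat q\in[\frac{1}{3},\frac{2}{3}]$, vanishing maximal relative weight, the ratio bound $L$, and the bounds $\frac{1-\delta}{2\eps(m)}<\hat f_\t(x)<\frac{1+\delta}{2\eps(m)}$ on the conditional densities). So there is no in-paper argument to compare yours against; your proposal has to stand on its own merits.

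On its own merits it has a genuine gap at its central step. The outer structure is sound: invoking Neyman (1982) for the $\delta=0$ case, the representations $\Phi_i(v)=\int_0^1 f_i(t)\,p_i(t)\,dt$ and $\phi_i(v)=\int_0^1 q_i(t)\,dt$, and the target heuristic $\Phi_i\approx\bigl(f_i(t^*)/\bar f(t^*)\bigr)\phi_i$ are all correct. But your stated reduction~-- a $k$-uniform bound $\sum_i\int_0^1|p_i(t)-q_i(t)|\,dt\le C(L)\,\delta$, to be obtained by expanding the difference of the two Bernoulli-sum PMFs \emph{at the same} $t$~-- is false, and no local CLT can repair it. For fixed $\delta>0$ the means of $W^{-i,X}(t)$ and $W^{-i,U}(t)$ can differ by order $\delta$ (take $f_j(x)=1+\delta(1-2x)$ for all $j$, so $F_j(t)-t=\delta t(1-t)$), while both the pivotal window width $w_i$ and the standard deviation are $O(\sqrt{L/k})$. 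Once $k\gg L/\delta^2$, the $t$-intervals on which $p_i$ and $q_i$ are non-negligible are separated by roughly $\delta/4$, far more than their width, so $\int_0^1|p_i-q_i|\,dt\approx\int p_i\,dt+\int q_i\,dt\approx 2w_i$, and the sum over $i$ is of order~$1$, not $O(\delta)$; at fixed $t$ the two conditional laws are nearly mutually singular on the relevant window, so step (iii) of your program cannot produce a multiplicative $C(L)\delta$ estimate. What is small is only the \emph{signed} difference $\bigl|\int_0^1(f_i p_i-q_i)\,dt\bigr|$, and its smallness rests on translation cancellation in $t$ that your plan never exploits. The missing idea is a reparametrization of the time axis, e.g.\ $s=\bar F(t)$ with $\bar F(t)=\sum_j w_j F_j(t)$ and $\bar F'\in(1-\delta,1+\delta)$, which maps the perturbed weight-accumulation process onto a nearly uniform one and shows that the crossing probability of the window $[q-w_i,q)$ is distorted only by the factor $f_i/\bar f$ at the crossing point~-- precisely the renewal-theoretic machinery of Neyman's 1982 paper, now applied after the time change. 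Your own text concedes the crux (``careful expansion\dots is where the real work lies''), so as submitted this is a program whose pivotal step, as formulated, would fail; with the change of variables replacing the fixed-$t$ PMF comparison, the skeleton you set up should go through.
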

Of course, one can equivalently let $\{p(\ordering)\}_{\ordering\in \mathcal{S}_k}$
be defined by the order statistics of independent $I_m$-valued random variables with densities $\hat f_1, \ldots, \hat f_k$, instead of $[0,1]$-valued ones, if the theorem's condition $1-\delta<f_i(x)<1+\delta$ is replaced by the requirement that $\frac{1-\delta}{2\eps(m)} <\hat f_i(x)< \frac{1+\delta}{2\eps(m)}$ for all $x\in I_m$.

The values of $\delta$ and $L$ which one obtains for a given $\eps$
in Theorem~\ref{thm:Neyman_theorem_extension} apply to \emph{any} value of $k$.
We consider the weighted voting subgames played by the $k=\sum_{\t\in\{1,\dots,r\}} \nmitte{\t}$
representatives with realizations $\l_i\in I_m$ for given $\nvector\in \mathcal{K}$. The relative weight of any such representative~$i$,
$\hat w_i=w_i/\sum_{\t\in \{1,\dots,r\}} \nmitte{\t} w_\t$,
approaches zero as $m\to \infty$; and so does the maximum relative weight. Recalling that
the corresponding subgame's relative quota $\hat q=q(\boldsymbol{k})/\sum_{\t\in \{1,\dots,r\}} \nmitte{\t} w_\t$ is bounded by $\frac{1}{3}\le \hat q \le \frac{2}{3}$,
the condition $K\cdot \max_i \hat w_i < \hat q < 1-K\cdot \max_i \hat w_i$ is satisfied when $m$ is  sufficiently large. Any null players with $w_i=0$ can  w.l.o.g.\ be removed from consideration. Then all weights have the same order of magnitude, i.e., the choice of $L$ such that $\max_{i,j} \hat w_i/\hat w_j<L$ holds for all $\nvector\in \mathcal{K}$ is trivial.

Moreover, the conditional densities $\hat f_\t$ in our setup satisfy $\frac{1-\delta}{2\eps(m)} <\hat f_i(x)< \frac{1+\delta}{2\eps(m)}$ for every $\t\in \{1,\ldots, r\}$ and $x\in I_m$ when $m$ is large enough. Specifically, continuity of $f_\t$ in a neighborhood of $\median$ implies that for any given $\eps>0$ there exists $\Delta(\eps)>0$ with $\lim_{\eps\downarrow 0}\Delta(\eps)=0$ such that
\begin{align}
(1-\Delta(\varepsilon))\cdot{f}_{\t}(\median) \leq {f}_{\t}(x)\leq (1+\Delta(\varepsilon))\cdot{f}_{\t}(\median)
\end{align}
for all $x\in [\median-\eps,\median+\eps]$ and all $\t\in\{1,\dots,r\}$ (cf.\ inequality~(\ref{ie:bounds_f})).
Similarly to inequality~(\ref{eq:ptheta_bound_f}) we then conclude
\begin{align}
(1-\Delta(\varepsilon))f_\t(\median)\cdot 2\varepsilon \le \ptI \le (1+\Delta(\varepsilon))f_\t(\median)\cdot 2 \varepsilon.
\end{align}
Combining the last two inequalities with inequality~(\ref{eq_conditional_densitiy}) yields
\begin{align}
\frac{(1-\Delta(\varepsilon))}{(1+\Delta(\varepsilon))\cdot 2\eps} \le \hat f_{\t}(x)\le \frac{(1+\Delta(\varepsilon))}{(1-\Delta(\varepsilon))\cdot 2\eps}.
\end{align}
So considering $\eps=\eps(m)$ and any fixed $\delta$, the conditional densities $\hat f_\t$ satisfy $\frac{1-\delta}{2\eps(m)} <\hat f_i(x)< \frac{1+\delta}{2\eps(m)}$ for every $\t\in \{1,\ldots, r\}$ and $x\in I_m$ when $m$ is sufficiently large.

Hence, all premises in Neyman's unpublished Theorem~\ref{thm:Neyman_theorem_extension} are satisfied by the corresponding weighted voting subgames of agents with ideal points in $I_m$. Theorem~\ref{thm:Neyman_theorem_extension}, therefore, ensures the approximate weight proportionality of the aggregate random order value $\Phi$ of the type-$\t$ representatives.
Now if one recalls (\ref{eq:cond_pivot_prob_per_type})
and notices that the bracketed sum equals $\Phi(v)$ with $v=[\hat q; \hat w_{j_1}, \ldots, \hat w_{j_k}]$ when $j_1, \ldots, j_k$ denote the representatives with ideal points in $I_m$, we can replace Lemmata~\ref{lemma:permutation_equiprobability}--\ref{lemma:own_neyman} by the following:
\begin{lemma}\label{lemma:lemma4_5_combination}
\begin{equation}\label{eq:lemma4_5_combination}
\pi^\t(\mathcal{R}^{m}|\mathcal{K})=\frac{\ntI w_\t}{\sum_{{\t'}=1}^r \nmitte{\t'} w_{\t'}}\cdot(1+\mu(m))
    \quad \mbox{with} \quad \lim_{m\to \infty}|\mu(m)|=0.
\end{equation}
\end{lemma}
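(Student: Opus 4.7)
The plan is to obtain (\ref{eq:lemma4_5_combination}) directly from Neyman's extended uniform convergence result, Theorem~\ref{thm:Neyman_theorem_extension}, bypassing the two-stage route through the Shapley value taken by Lemmas~\ref{lemma:permutation_equiprobability} and~\ref{lemma:own_neyman}. The starting point is identity~(\ref{eq:cond_pivot_prob_per_type}),
\[
\pi^\t(\mathcal{R}^{m}|\mathcal{K})=\sum_{\nvector\in\mathcal{K}}\pvect\cdot\Big\{\sum_{\ordering\in \mathcal{S}_{k}\,:\,\psi(\nvector,\ordering)=\t}\pperm\Big\},
\]
where for each fixed $\nvector\in\mathcal{K}$ the $k=\sum_{\t'}\nmitte{\t'}$ representatives with realized ideal points in $I_m$ are relabelled $j_1,\ldots, j_k$. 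By construction $\pperm$ is exactly the probability that the order statistics of $k$ independent random variables with conditional densities $\hat f_{j_1},\ldots,\hat f_{j_k}$ on $I_m$ yield the ordering $\ordering$. The inner brace is therefore the aggregate random order value $\Phi^\t\!\left(v(\nvector)\right)$ of type-$\t$ representatives in the weighted subgame $v(\nvector)=[q(\nvector);\,w_{j_1},\ldots,w_{j_k}]$ with $q(\nvector)\equiv q^{m}-\sum_{\t'}\nlinks{\t'}w_{\t'}$.

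The next step is to verify, uniformly for $\nvector\in\mathcal{K}$ and $m$ sufficiently large, the four premises of Theorem~\ref{thm:Neyman_theorem_extension}. Local continuity of each $f_{\t'}$ at $\median$ together with $\eps(m)\downarrow 0$ yields the density-variation bound $1-\delta<2\eps(m)\hat f_{\t'}(x)<1+\delta$ on $I_m$ for every prescribed $\delta>0$, as detailed in the Remarks preceding the statement. The relative-quota bound $K\cdot\max_i\hat w_i<\hat q<1-K\cdot\max_i\hat w_i$ holds because $\hat q\in[\tfrac{1}{3},\tfrac{2}{3}]$ by Lemma~\ref{lemma:unweighted_median_per_type}, while $\max_i\hat w_i\to 0$ on $\mathcal{K}$ since (\ref{eq:bounds_on_ktheta}) forces $\ntI\to\infty$ for each positive-weight type. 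The weight-ratio bound $\max_{i,j}\hat w_i/\hat w_j<L$ is automatic from the finiteness of the set of positive type weights, with null types discarded without loss. Theorem~\ref{thm:Neyman_theorem_extension} then delivers, uniformly for $\nvector\in\mathcal{K}$,
\[
\Bigl|\Phi^\t\!\left(v(\nvector)\right)-\frac{\ntI w_\t}{\sum_{\t'=1}^{r}\nmitte{\t'}w_{\t'}}\Bigr|<\varepsilon.
\]

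Averaging this uniform bound against $\pvect$ over $\mathcal{K}$ produces an absolute-error estimate on $\pi^\t(\mathcal{R}^{m}|\mathcal{K})$. To convert this into the multiplicative form $(1+\mu(m))$ required by~(\ref{eq:lemma4_5_combination}), it suffices that the target quantity is bounded below in order of magnitude, which is exactly the content of observation~(\ref{eq:TypRelativgewichtPositiv}): the aggregate relative weight of type $\t$ in $I_m$ stays bounded away from zero on $\mathcal{K}$, so an $\varepsilon$-absolute error becomes an $O(\varepsilon)$-relative error.

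The main obstacle I anticipate is not the statement of Neyman's extension but its \emph{uniform} applicability across the random realizations $\nvector\in\mathcal{K}$: the theorem provides constants $\delta$ and $K$ independently of the sample size $k$, yet its four premises must still hold simultaneously for every realization in $\mathcal{K}$. This is precisely why the restriction to $\mathcal{K}$ was imposed in the first place, and the sharp Hoeffding-based two-sided bounds (\ref{eq:bounds_on_ktheta}) on $\ntI$ translate the smallness of $\eps(m)$ into simultaneously verifiable smallness of $\max_i\hat w_i$ and boundedness of $\hat q$ away from the extremes, making the uniform conclusion possible.
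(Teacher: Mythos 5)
Your proposal is correct and follows essentially the same route as the paper: the paper also proves this lemma by rewriting $\pi^\t(\mathcal{R}^{m}|\mathcal{K})$ via (\ref{eq:cond_pivot_prob_per_type}), identifying the inner sum as the aggregate random order value of the type-$\t$ representatives in the subgame $[q(\nvector); w_{j_1},\ldots,w_{j_k}]$, verifying the premises of Theorem~\ref{thm:Neyman_theorem_extension} uniformly over $\nvector\in\mathcal{K}$ (quota in $[\tfrac13,\tfrac23]$ by Lemma~\ref{lemma:unweighted_median_per_type}, vanishing maximal relative weight, bounded weight ratios after removing null players, and near-uniform conditional densities from local continuity), and then passing from the absolute to the relative error using (\ref{eq:TypRelativgewichtPositiv}). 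No gaps beyond those the paper itself leaves implicit.
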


The proof of Theorem~\ref{thm:main_theorem} can then be concluded by appealing to (\ref{eq:TypSSIgesplitted}), hence
\begin{equation}
\lim_{m\to \infty}\frac{\pi^\t(\mathcal{R}^{m}|\mathcal{K})}{\pi^{\t'}(\mathcal{R}^{m}|\mathcal{K})}=
\lim_{m\to \infty}\frac{\pi^\t(\mathcal{R}^{m})}{\pi^{\t'}(\mathcal{R}^{m})},
\end{equation}
and equations (\ref{eq:ResultatTypebene})--(\ref{eq:ResultatIndividualebene}).
Importantly, the presumption $|f_\t(x)-f_\t(\median)|\leq cx^2$ for $x\in [\median-\eps_1,\median+\eps_1]$, which Lemma~\ref{lemma:permutation_equiprobability} required, is \emph{not} needed by Lemma~\ref{lemma:lemma4_5_combination}. It can hence be replaced in Theorem~\ref{thm:main_theorem} by the simpler requirement that each $f_\t$ is continuous in a neighborhood of $\median$.

Finally, the assumption that only a finite number of different densities and weights are involved in the chain $\mathcal{R}^{1}\subset \mathcal{R}^{2} \subset \mathcal{R}^{3}\subset \ldots$ could be loosened. However, it is critical that each representative's relative weight vanishes as $m\to \infty$ in order to apply Neyman's results; the asymptotic relation (\ref{eq:thm}) fails to hold, for instance, for a chain with  $w_1=\sum_{j>1}w_j$.
And because our result depends on a vanishing \emph{relative} error, which is considered neither by \citeN{Neyman:1982} nor Theorem~\ref{thm:Neyman_theorem_extension},\footnote{See, however, \citeN{Lindner/Machover:2004}, where conditions very similar to ours are considered for the Shapley and Banzhaf values, and the related discussion by \citeN{Lindner/Owen:2007}.} it is similarly important that the aggregate relative weight of each type of representatives is bounded away from zero. For instance, with just one representative having weight $w_1=1$ and $\beta_2(m)=m-1$ ones with $w_2=2$ (see equation (\ref{eq:1222_example})), $\lim_{m\to \infty}\pi_1(\mathcal{R}^{m})=\lim_{m\to \infty}\pi_j(\mathcal{R}^{m})=0$ for any $j\neq 1$ but the limit of ${\pi_1(\mathcal{R}^{m})}/{\pi_j(\mathcal{R}^{m})}$ may fail to exist. %: $\pi_1(\mathcal{R}^{m})$ alternates, e.g., between 0 and $1/m$, and $\pi_2(\mathcal{R}^{m})$ between $1/(m-1)$ and $1/m$, for even and odd $m$ when  $f_1=f_2$.

%%%%%%%%%%%%%%%%%%%%%%%%%%%%%%%%%%%%%%%%%%%%%%%%%%%%%%%%%%%%%%%%%%%%%%%%%%
\section{Proof of Theorem~\ref{thm:main_result_for_shock_case}}\label{app:shockcase}

\begin{theoremshock*}
Consider an assembly $\mathcal{R}^{m,q}$ with an arbitrary number $m$ of constituencies and the relative decision quota $q\in [0.5; 1)$. For each $i\in \{1, \ldots, m\}$ let $\l_i=t\cdot \mu_i+\tilde \epsilon_i$, where
$\mu_1,\ldots, \mu_m$ and $\tilde \epsilon_1,\ldots,\tilde \epsilon_m$ are all mutually independent random variables,
$\tilde \epsilon_1,\ldots,\tilde \epsilon_m$ have finite means and variances, and
$\mu_1,\ldots, \mu_m$ have an identical bounded density.
Then
\begin{equation}\tag{\ref{eq:shockthm}}
\lim_{\sigext\to \infty}\frac{\pi_i(\mathcal{R}^{m,q,t})}{\pi_j(\mathcal{R}^{m,q,t})} = \frac{\phi_i(v)}{\phi_j(v)}.
\end{equation}
\end{theoremshock*}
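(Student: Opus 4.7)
The plan is to reduce the claim to an observation about the random ordering of the representatives. Let $\ordering \in \mathcal{S}_m$ denote the permutation sorting the ideal points, i.e., $\l_{\ordering(1)}<\ldots<\l_{\ordering(m)}$ (ignoring tie events, which occur with probability zero since $\mu_i$ has a density). The pivotal representative defined by (\ref{eq:P_defined}) together with (\ref{eq:qm_definition_generalized}) is $\ordering(P)$ with $P = \min\{j\colon \sum_{k=1}^j w_{\ordering(k)} > q^m\}$; hence the indicator $\chi_i(\ordering) \equiv \mathbf{1}[\ordering(P) = i]$ depends only on the ordering and not on the underlying values. Letting $\Pi^t$ denote the law of $\ordering$ under the full model and $\Pi^\infty$ the uniform law on $\mathcal{S}_m$, we have $\pi_i(\mathcal{R}^{m,q,t}) = \sum_\ordering \Pi^t(\ordering)\chi_i(\ordering)$, and by the standard random-order representation of the Shapley value for simple games, $\phi_i(v) = \sum_\ordering \Pi^\infty(\ordering)\chi_i(\ordering)$. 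Moreover, since $\mu_1,\ldots,\mu_m$ are i.i.d.\ with a density, the ordering $\ordering_\mu$ they induce is almost surely well-defined and exchangeable, hence uniform on $\mathcal{S}_m$. It therefore suffices to show that $\Pi^t$ converges to $\Pi^\infty$ in total variation as $t\to\infty$, for which it is enough to prove $\Pr(\ordering \neq \ordering_\mu) \to 0$.

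For this, fix $i\neq j$. By the decomposition $\l_k = t\mu_k+\tilde\epsilon_k$, the orderings $\ordering$ and $\ordering_\mu$ disagree on the pair $\{i,j\}$ only when $t|\mu_i - \mu_j| < |\tilde \epsilon_i - \tilde \epsilon_j|$. For any $\d > 0$,
\begin{equation*}
\Pr\!\bigl(t|\mu_i - \mu_j| < |\tilde \epsilon_i - \tilde \epsilon_j|\bigr) \;\le\; \Pr\!\bigl(|\mu_i - \mu_j| < \d\bigr) + \Pr\!\bigl(|\tilde \epsilon_i - \tilde \epsilon_j| \ge t\d\bigr).
\end{equation*}
The density of $\mu_i - \mu_j$, being the convolution of the bounded density $h$ with its reflection, is itself bounded by $\|h\|_\infty<\infty$; hence the first summand is at most $2\d\|h\|_\infty$. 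The second summand tends to zero as $t \to \infty$ by Chebyshev's inequality, since $\tilde \epsilon_i$ and $\tilde \epsilon_j$ have finite variances. Taking $\limsup_{t\to\infty}$ and then $\d \downarrow 0$ makes the left-hand side vanish, and a union bound over the $\binom{m}{2}$ pairs yields $\Pr(\ordering \neq \ordering_\mu) \to 0$.

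Combining the two steps gives $\lim_{t\to\infty} \pi_i(\mathcal{R}^{m,q,t}) = \phi_i(v)$ for every $i\in\{1,\ldots,m\}$, so (\ref{eq:shockthm}) follows by taking the ratio whenever $\phi_j(v) > 0$ (if $\phi_j(v) = 0$, then $j$ is a null player of $v$ and $\pi_j(\mathcal{R}^{m,q,t}) \to 0$ too, so the ratio statement is interpreted in the obvious way). The main obstacle I anticipate is the quantitative comparison of the two orderings: both the boundedness of $h$ (used to control near-ties among the shocks $\mu_i$) and the finite variance of the $\tilde \epsilon_i$ (used to bound their tails) enter essentially, and weakening either hypothesis would call for a more delicate argument. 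All other ingredients---the reduction to orderings, the uniformity of $\ordering_\mu$, and the random-order formula for the Shapley value---are routine for finite $m$.
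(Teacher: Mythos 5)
Your proposal is correct and follows essentially the same route as the paper's own proof: reduce pivotality to the induced ordering, invoke the random-order formula for the Shapley value, and show that the orderings of $(\l_1,\ldots,\l_m)$ and $(\mu_1,\ldots,\mu_m)$ coincide with probability tending to one (bounded density of the shocks controlling near-ties, Chebyshev controlling the noise), with the uniform law of the shock ordering closing the argument. The only difference is bookkeeping: the paper fixes explicit thresholds ($|\mu_i-\mu_j|\ge 4t^{-2/3}$, $|\tilde\epsilon_i|<2t^{1/3}$) and thereby obtains an explicit $O(t^{-2/3})$ rate, whereas your pairwise $\delta$-argument is slightly leaner but rate-free.
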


The result easily follows from the definition of the Shapley value and the fact that the orderings which are induced by the realizations of the vectors $\boldsymbol{\l}=(\l_1, \ldots, \l_m)$ and $\boldsymbol{\mu}=(\mu_1, \ldots, \mu_m)$
will coincide with a probability which tends to $1$ as $t$ approaches infinity. To see the latter, ignore any null events in which several ideal points or constituency shocks coincide and let $\hat\ordering(\mathbf{x})$ denote the permutation
of $\{1,\dots,m\}$ such that $x_i<x_j$ whenever $\hat\ordering(i)<\hat\ordering(j)$ for the real-valued vector $\mathbf{x}=(x_i)_{i\in\{1,\dots,m\} }$. We then have:

\begin{lemma}
For $i\in\{1, \ldots, m\}$ and $t > 0$ let $\l_i^t\equiv t\cdot \mu_i+\tilde \epsilon_i$, where
$\mu_1,\ldots, \mu_m$ and $\tilde \epsilon_1,\ldots,\tilde \epsilon_m$ are all mutually independent random variables,
$\tilde \epsilon_1,\ldots,\tilde \epsilon_m$ have finite means and variances, and
$\mu_1,\ldots, \mu_m$ have an identical bounded density.
Then
\begin{equation}
  \lim\limits_{t\to\infty} \Pr(\hat\ordering(\l^t)=\ordering)=\lim\limits_{t\to\infty} \Pr(\hat\ordering(\mu)=\ordering)=\frac{1}{m!}
\end{equation}
for each permutation $\ordering$ of $\{1,\dots,m\}$.
\end{lemma}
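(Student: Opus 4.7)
The plan is to derive the lemma from two observations. The second limit is trivially constant in $\sigext$: since $\mu_1, \ldots, \mu_m$ are mutually independent with a common density, one has $\Pr(\mu_i = \mu_j) = 0$ for every $i \neq j$, and the joint distribution of $(\mu_1, \ldots, \mu_m)$ is invariant under permutation of indices. Hence $\Pr(\hat\ordering(\mu) = \ordering) = 1/m!$ exactly, for every permutation $\ordering$.

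For the first equality, it suffices to show that $\Pr(\hat\ordering(\l^\sigext) = \hat\ordering(\mu)) \to 1$ as $\sigext \to \infty$: combined with the exchangeability of $\mu$ this forces $\Pr(\hat\ordering(\l^\sigext) = \ordering) \to 1/m!$ for every $\ordering$. Since $\l_i^\sigext / \sigext = \mu_i + \tilde\epsilon_i/\sigext$ has the same ordering as $\l_i^\sigext$, the two orderings can differ only if some pair $i \neq j$ satisfies $|\mu_i - \mu_j| \leq |\tilde\epsilon_i - \tilde\epsilon_j|/\sigext$. Fix $\eta > 0$. For any $\delta > 0$ each such pairwise event lies in $\{|\mu_i - \mu_j| < \delta\} \cup \{|\tilde\epsilon_i - \tilde\epsilon_j| \geq \sigext \delta\}$. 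Because $\mu_i$ and $\mu_j$ are independent and absolutely continuous, continuity of probability together with $\Pr(\mu_i = \mu_j) = 0$ lets me pick $\delta = \delta(\eta) > 0$ so small that $\Pr(|\mu_i - \mu_j| < \delta) < \eta/m^2$ for every pair. For that fixed $\delta$, Chebyshev's inequality together with $\mathbf{Var}(\tilde\epsilon_i - \tilde\epsilon_j) < \infty$ gives $\Pr(|\tilde\epsilon_i - \tilde\epsilon_j| \geq \sigext \delta) \to 0$ as $\sigext \to \infty$. A union bound over the finitely many index pairs then shows that the ``bad'' probability falls below $\eta$ for all sufficiently large $\sigext$, and $\eta$ was arbitrary.

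The one point to watch is the initial reduction of the disagreement event $\{\hat\ordering(\l^\sigext) \neq \hat\ordering(\mu)\}$ to the pairwise near-tie events: if the two orderings disagree, there must exist indices $i \neq j$ whose relative order is swapped, and any such swap forces the displayed inequality $|\mu_i - \mu_j| \leq |\tilde\epsilon_i - \tilde\epsilon_j|/\sigext$. Everything else is routine. Notably, the argument uses nothing about the $\tilde\epsilon_i$ beyond independence and finite variance, and nothing about $\mu$ beyond independence and absolute continuity, in keeping with the generality asserted in Theorem~\ref{thm:main_result_for_shock_case}.
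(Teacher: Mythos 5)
Your proof is correct, and its skeleton is the same as the paper's: show that $\hat\ordering(\l^t)$ and $\hat\ordering(\mu)$ coincide with probability tending to one via a union bound over pairs, combining an anti-concentration estimate for $\mu_i-\mu_j$ with a tail estimate coming from the finite variances, and then use exchangeability of the i.i.d.\ $\mu_i$ to pin down the value $1/m!$. The difference lies in the quantification. The paper works with $t$-dependent thresholds: Chebyshev gives $|\tilde\epsilon_i|<2t^{1/3}$ for all $i$ up to probability $O(t^{-2/3})$, and the bounded density $h\le k$ gives $\Pr\{|\mu_i-\mu_j|<4t^{-2/3}\}\le 8k\,t^{-2/3}$, so the disagreement probability is explicitly $O(t^{-2/3})$. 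You instead fix $\delta>0$, split the pairwise bad event $\{|\mu_i-\mu_j|\le|\tilde\epsilon_i-\tilde\epsilon_j|/t\}$ into $\{|\mu_i-\mu_j|<\delta\}\cup\{|\tilde\epsilon_i-\tilde\epsilon_j|\ge t\delta\}$, control the first term by continuity of measure (only $\Pr(\mu_i=\mu_j)=0$, i.e.\ absolute continuity, is used) and the second by sending $t\to\infty$ at fixed $\delta$. This is slightly more general~-- the bounded-density hypothesis is never invoked~-- but yields no convergence rate, whereas the paper's bookkeeping does. Two cosmetic points: Chebyshev bounds deviations from the mean, so for the second term write $\Pr(|\tilde\epsilon_i-\tilde\epsilon_j|\ge t\delta)\le\Pr\bigl(|\tilde\epsilon_i-\tilde\epsilon_j-\E[\tilde\epsilon_i-\tilde\epsilon_j]|\ge t\delta/2\bigr)$ for $t$ large, which is immediate since the means are finite; and the reduction of $\{\hat\ordering(\l^t)\neq\hat\ordering(\mu)\}$ to a reversed pair should, as the paper does, explicitly set aside the null events of ties among the $\l_i^t$ or the $\mu_i$.
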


\begin{proof}
Let us denote the finite variance of $\tilde \epsilon_i$ by $\sigma_i^2$ and let
$U\equiv \left(\max_i |\E[\tilde\epsilon_i]|\right)^3$.
We can choose a real number $k$ such that the bounded density function $h$ of $\mu_i$, with $i\in \{1, \ldots, m\}$,
satisfies $h(x)\le k$ for all $x\in\mathbb{R}$. For any given realization $\mu_j=x$, the probability of the independent
random variable $\mu_i$ assuming a value inside interval $(x-4t^{-\frac{2}{3}}, x+4t^{-\frac{2}{3}})$
is bounded above by $k\cdot 8 t^{-\frac{2}{3}}$. We can infer that the event
$\big\{|\mu_i-\mu_j|<4 t^{-\frac{2}{3}}\big\}$, which is identical to the event
$\big\{|t\mu_i-t\mu_j|<4 t^{\frac{1}{3}}\big\}$, has a probability of at most $k\cdot 8 t^{-\frac{2}{3}}$
for any $i\neq j\in \{1, \ldots, m\}$. And we can conclude from Chebyshev's inequality that
$\Pr(|\tilde\epsilon_i-\E[\tilde\epsilon_i]|< t^{\frac{1}{3}})$ is at least
$1-\sigma_i^2\cdot t^{-\frac{2}{3}}$.
For $t\ge U$, we have $|\E[\tilde\epsilon_i]|\le t^{\frac{1}{3}}$; and if $|\tilde\epsilon_i-\E[\tilde\epsilon_i]|< t^{\frac{1}{3}}$ holds then also
\begin{equation}\label{ie:absolute_of_epsilon_bound}
2t^{\frac{1}{3}}>|\E[\tilde\epsilon_i]| + |\tilde\epsilon_i-\E[\tilde\epsilon_i]| \ge |\tilde\epsilon_i|
\end{equation}
by the triangle inequality. Hence, the probability for (\ref{ie:absolute_of_epsilon_bound}) to hold when $t\ge U$ is $\Pr(|\tilde\epsilon_i|< 2t^{\frac{1}{3}})\ge 1-\sigma_i^2\cdot t^{-\frac{2}{3}}$ for each $i\in \{1, \ldots, m\}$.

Now consider the joint event that (i) $|t\mu_i-t\mu_j|\ge 4t^{\frac{1}{3}}$ for \emph{all} pairs
$i\neq j\in \{1, \ldots, m\}$ and (ii) that $|\tilde\epsilon_i|< 2t^{\frac{1}{3}}$ for \emph{all}
$i\in \{1, \ldots, m\}$. In this event, the ordering of $\l_1^t, \ldots, \l_m^t$ is determined entirely by the
realization of $t \mu_1, \ldots, t \mu_m$; in particular, $\hat\ordering(\boldsymbol{\l}^t)=\hat\ordering(\boldsymbol{\mu})$.
Using the mutual independence of the considered random variables this joint event must have a probability of at least
\begin{equation}
  \prod\limits_{s=1}^{m \choose 2} \left(1-k\cdot 8 t^{-\frac{2}{3}}\right) \cdot \prod\limits_{i=1}^m
  \left(1-\sigma_i^2\cdot t^{-\frac{2}{3}}\right)
  \ge 1- \left(8 k{m \choose 2}+\sum_{i=1}^m \sigma_i^2\right)\cdot t^{-\frac{2}{3}}
\end{equation}
for $t\ge U$.
The right hand side clearly tends to $1$ as $t$ approaches infinity. It hence remains to acknowledge that any
ordering $\hat\ordering(\boldsymbol{\mu})$ has an equal probability of $1/m!$ because $\mu_1, \ldots, \mu_m$ are i.i.d.
\end{proof}

\setlength{\labelsep}{-0.2cm}

%{
%\bibliographystyle{chicago}
%\bibliography{../Literatur}}

\newcommand{\noopsort}[1]{}

\end{document}